\definecolor{LEI-blue}{cmyk}{1,.75,0,.35} 
\definecolor{LEI-orange}{cmyk}{0,.62,.97,0} 
\definecolor{niceblue}{rgb}{.1, .25, .8}
\newcommand{\topic}[1]{
}
\newtheorem{theorem}{Theorem}
\newtheorem{definition}[theorem]{Definition}
\newtheorem{algorithm}[theorem]{Algorithm}
\DeclareMathOperator*{\argmax}{arg\,max}
\DeclareMathOperator*{\argmin}{arg\,min}
\DeclareMathOperator{\Var}{Var}
\DeclareMathOperator{\sgn}{sgn}
\DeclareMathOperator{\E}{\mathbb{E}}
\newcommand{\NN}{\mathbb{N}}
\newcommand{\equalcontrib}{
    \altaffiliation{These authors contributed equally to this work}
}
\newcommand{\lorentz}{
    \affiliation{Lorentz Institute for Theoretical physics, Leiden University, The Netherlands}
}
\newcommand{\google}{
    \affiliation{Google Quantum AI}
}
\newcommand{\covestro}{
    \affiliation{Covestro Deutschland AG, Leverkusen 51373, Germany}
}
\begin{document}

\title{Error mitigation and circuit division for early fault-tolerant quantum phase estimation}
\author{Alicja Dutkiewicz}
\equalcontrib
\email{dutkiewicz@lorentz.leidenuniv.nl}
\lorentz

\author{Stefano Polla}
\equalcontrib
\email{polla@lorentz.leidenuniv.nl}
\lorentz
\google

\author{Maximilian Scheurer}
\covestro

\author{Christian Gogolin}
\covestro
\lorentz

\author{William J. Huggins}
\google

\author{Thomas E. O'Brien}
\email{teobrien@google.com}
\google
\lorentz

\date{\today}

\begin{abstract}
As fully fault-tolerant quantum computers capable of solving useful problems remain a distant goal, we anticipate an era of \emph{early fault tolerance} where limited error correction is available.
We propose a framework for designing early fault-tolerant algorithms by trading between error correction overhead and residual logical noise, and apply it to quantum phase estimation (QPE).
We develop a quantum-Fourier-transform (QFT)-based QPE technique that is robust to global depolarising noise and outperforms the previous state of the art at low and moderate noise rates.
We further introduce the Explicitly Unbiased Maximum Likelihood Estimation (EUMLE), a data processing technique that mitigates \emph{arbitrary} errors in QFT-based QPE schemes. EUMLE provides consistent, asymptotically normal error-mitigated estimates, addressing the open problem of extending error mitigation beyond expectation value estimation.
Applying this scheme to the ground state problem of the two-dimensional Hubbard model and various molecular Hamiltonians, we find we can roughly halve the number of physical qubits with a $\sim 10\times$ wall-clock time overhead, but further reduction causes a steep runtime increase.
This work provides an end-to-end analysis of early fault-tolerance cost reductions and space-time trade-offs,  and identifies which areas can be improved in the future.
\end{abstract}

\maketitle

\tableofcontents

\section{Introduction}

Despite significant experimental progress in recent years~\cite{acharyaSuppressing2023,silvaDemonstration2024,bluvsteinLogical2024,acharyaQuantum2024}, quantum computers still require many orders of magnitude more qubits to achieve advantage for practically relevant computational problems~\cite{reiherElucidating2017,babbushEncoding2018,leeEven2021}.
While quantum error correction (QEC) promises to enable fault-tolerant quantum computing with a cost that scales polylogarithmically in the size of the computation, the overhead required for the most practically realizable schemes is large.
For example, the combination of the \(2\)-dimensional surface code with magic state distillation~\cite{fowlerSurface2012, fowlerLow2019, litinskiGame2019, gidneyEfficient2019} is capable of universal quantum computation, but may require hundreds or thousands of physical qubits per logical qubit to ensure the success of the computation.
This overhead has prompted a push for work on \emph{early fault-tolerance}~\cite{campbellEarly2022,katabarwaEarly2023,zhangComputing2022,kshirsagarProving2024,wangState2022,wangFaster2023,nelsonAssessment2024,wanRandomized2022,linHeisenbergLimited2022,dongGround2022,dingEven2023,ding2023robust,bultriniBattle2023,liangModeling2024,akahoshiPartially2024,toshioPractical2024,akahoshiCompilation2024}, that is, the search for computational frameworks and trade-offs that enable us to take advantage of the machinery of quantum error correction without incurring its full cost.

Initial explorations of early fault-tolerance have focused along a few lines.
A fist line of research has aimed to reduce costs for small and medium-sized problems, by making tailored algorithm design choices. 
For example, using Trotter-based simulation instead of methods with better asymptotic scaling has been shown to offer lower costs for smaller systems~\cite{childsFirst2018,campbellEarly2022,bluntCompilation2024,toshioPractical2024}.
Another approach has focused on adapting quantum algorithms to achieve a low depth per circuit.
This is motivated by the observation that a given error rate~\(p\), determines the maximum number of allowable operations $G\approx-\log(1-p)^{-1} \approx p^{-1}$ before the exponential decay of circuit fidelity sets in.
Several works have accomplished this by trading off reduced circuit depth for an increased number of circuit repetitions~\cite{wangAccelerated2019, linHeisenbergLimited2022, zhangComputing2022, wanRandomized2022, wangFaster2023, dongGround2022, wangFaster2023, dingEven2023, dingSimultaneous2023, ding2023robust, niLowdepth2023, nelsonAssessment2024, akhalwayaTopological2024}.
A third approach has been to use error mitigation techniques -- originally designed to handle physical noise in near-term quantum devices -- in order to reduce the noise-induced bias without reducing circuit depth, albeit at the expense of many circuit repetitions \cite{piveteauError2021,suzukiQuantum2022,toshioPractical2024}.

Quantum error mitigation (EM)~\cite{caiQuantum2023} has seen increased interest in recent years, due to the development of experimental platforms which are increasingly powerful, yet remain far from enabling fault-tolerant computing.
Mitigating errors caused by physical noise avoids the large constant factor requirements of fault tolerance, but results in an exponential overhead with increasing circuit depth and error rate~\cite{aharonovLimitations1996,aharonovPolynomialTime2023,takagiFundamental2022,quekExponentially2023,takagiUniversal2023, tsubouchi2023universal}.
A wide range of EM techniques have been conceived~\cite{temmeError2017,endoPractical2018, liEfficient2017,mcardleErrormitigatedDigitalQuantum2019,bonet-monroigLowcostErrorMitigation2018,hugginsVirtual2021,hugginsVirtual2021,obrienError2021,montanaroError2021,caiQuantum2023} and implemented in practice~\cite{kandalaError2019, sagastizabalExperimental2019, vandenbergProbabilistic2023, stanisicObserving2022, obrienPurificationbased2023, aruteHartreeFock2020}, demonstrating up to thousandfold-reductions in noise bias~\cite{obrienPurificationbased2023}.
While these methods cannot replace scalable error correction, they have been suggested as a last mile solution in fault-tolerant quantum algorithms, relaxing the demands on a QEC protocol by enabling a computation to tolerate \(\mathcal{O}(1)\) errors per circuit repetition~\cite{piveteauError2021,suzukiQuantum2022,katabarwaEarly2023,akahoshiPartially2024,caiQuantum2023}.
Some approaches specifically address scenarios where the error rates vary between qubits or operations~\cite{piveteauError2021,suzukiQuantum2022,bultriniBattle2023,toshioPractical2024}, with a particular emphasis on the non-Clifford operations that are expected to be the most costly to implement fault-tolerantly.
However, most EM protocols have focused on expectation value estimation, as opposed to more general quantum computation; extending beyond this was labeled an outstanding challenge in Ref.~\cite{caiQuantum2023}.

Quantum phase estimation~\cite{kitaevQuantumMeasurementsAbelian1995}, a workhorse of quantum computing, aims to compute an eigenvalue of a unitary matrix given its circuit implementation and an initial state with sufficiently high overlap with the target eigenstate.
Traditional phase estimation techniques~\cite{nielsen2001quantum} use a control register which, after accumulating phase information via controlled time-evolution, undergoes the quantum Fourier transform (QFT) to yield a state with high density on binary representations of the desired phase.
These \emph{QFT-based QPE} algorithms achieve the Heisenberg limit, the information-theoretic optimum for phase estimation, including constant factors~\cite{vandamOptimal2007}.
It is possible to remove the quantum Fourier transform, either by direct dequantization~\cite{kitaevQuantumMeasurementsAbelian1995,najafiOptimum2023}, or by analysing the outputs of independent circuits that use a single control qubit via signal processing techniques~\cite{kimmelRobust2015,obrien2019quantum,sommaQuantum2019,linHeisenbergLimited2022,dutkiewicz2022heisenberg,dongGround2022,dingSimultaneous2023,wanRandomized2022}.
(Indeed, under some constraints on $U$ phase estimation without any control qubits is possible~\cite{luAlgorithms2021,obrienError2021,guNoiseresilient2022,yang2024phase}.)
These \emph{single-control QPE} techniques are of particular interest in early-fault tolerance~\cite{linHeisenbergLimited2022,dongGround2022,dingSimultaneous2023,katabarwaEarly2023,liangModeling2024, zhangComputing2022, kshirsagarProving2024, toshioPractical2024, nelsonAssessment2024, kiss2024early}, as these protocols can tolerate a constant amount of error per circuit~\cite{kimmelRobust2015,obrien2019quantum}, avoiding the need for a high overall success probability across all circuit runs.
However, it remains to understand how this reduction in circuit depth and increase in error tolerance propagates through quantum error correction to a trade-off between physical qubit count and wall-clock time (of the full estimation protocol).
Additionally, besides Refs.~\cite{rendon2023low, castaldoDifferentiable2024}, there has been little exploration of whether QFT-based QPE techniques can be divided into smaller circuits akin to single-control methods.
To the best of our knowledge, no prior work applied circuit-division to make QFT-based QPE robust to noise.

In this work, we develop a framework that jointly optimizes quantum error correction, mitigation, and algorithm design for early fault-tolerant quantum computation. 
Specifically, we study the trade-off between running fewer, larger circuits at lower error rates (requiring more physical qubits per logical qubit) versus subdividing computations into many smaller circuits that can be executed at reduced code distances.
Our work bridges two research directions that have so far been explored independently: the optimization of algorithms for early fault-tolerance and the integration of quantum error mitigation within fault-tolerant protocols.
Within this framework, we propose a new noise-robust QPE algorithm, based on subdividing the QFT-based QPE circuit of Ref.~\cite{vandamOptimal2007} used in state-of-the-art fault-tolerant benchmarks \cite{babbushEncoding2018,leeEven2021}. 
In the presence of global depolarizing noise, we prove that the algorithm converges to any target precision and quantify its cost as a function of noise strength.
We compare our method to the noise-optimized robust phase estimation (RPE) scheme of Ref.~\cite{belliardoAchieving2020}, finding that our approach requires between 30\% and 130\% of the resources needed for RPE, depending on the target precision and noise strength.
Then, in order to mitigate biases from more general noise models,
we develop a regularised explicitly-unbiased maximum-likelihood estimator, representing the first extension of error mitigation techniques beyond expectation value estimation.
This estimator, when combined with our QPE algorithm, guarantees convergence under arbitrary noise channels. 
For natural noise models based on local Pauli errors, we upper-bound the overhead of our method relative to the global-depolarizing noise case by a factor of $7\times$.
Finally, we present concrete and comprehensive estimates of the the physical qubit count and wall-clock time for computing ground state energies of the two dimensional Hubbard model \cite{babbushEncoding2018}, and of various molecular Hamiltonians compressed with tensor hypercontraction \cite{leeEven2021}.
Our estimates assume surface-code computing \cite{fowlerLow2019} with CCZ factories \cite{gidneyEfficient2019}, 
and quantify how the previously discussed method trade increased computation time for reduced qubit counts.
We find that the available gains are limited by the the error mitigation overhead, along with the large costs associated to magic state distillation; a $2\times$ reduction in qubit count comes with an increase in the total wall time of $2-100\times$.

\section{Summary of results}

\begin{figure}[t]
    \centering
    \includegraphics[width=\textwidth]{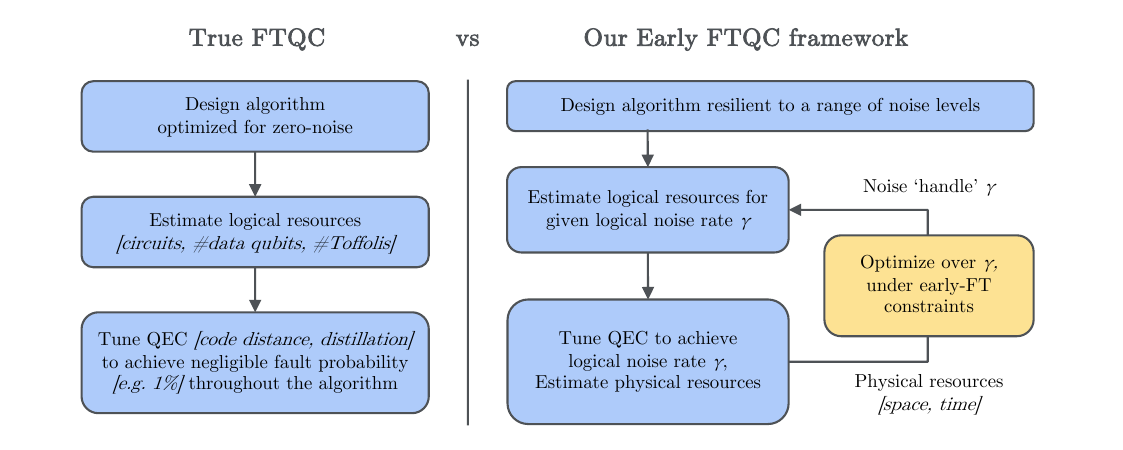}
    \caption{
    Illustration of our framework for designing algorithms for early fault-tolerance and how it differs from traditional fault-tolerant algorithm design and resource estimation. 
    The standard approach (left) involves optimizing an algorithm in the absence of noise and then choosing quantum error correction parameters such that the failure probability is small enough.  
    We propose an alternative approach better suited to early fault-tolerance. 
    By using noise resilient algorithms as a building block, we can allow for non-negligible levels of noise throughout the execution of a circuit, jointly optimizing the parameters of the algorithm, the quantum error correcting scheme, and the error mitigation technique used to address the residual error.}
    \label{fig:design_principles}
\end{figure}

Our overall goal is to extend recent resource estimates for practical quantum phase estimation \cite{babbushEncoding2018,leeEven2021} to the case of early-fault-tolerant quantum computers.
We define early-fault tolerance as \emph{the operational regime where we can use quantum error correction to correct all but \(\Omega(1)\) errors, and we must design algorithms to be robust to the remainder (via e.g. quantum error mitigation)}.
In this setup, we retain the ability to spend more physical resources to reduce the logical error rate $\gamma$ through quantum error correction.
In contrast to standard fault-tolerant quantum algorithm design, we consider $\gamma$ as an additional free parameter to optimize jointly with other parameters of the algorithm, subject to any constraint imposed by the device (e.g.~number of available physical qubits).
The comparison of the standard fault-tolerant algorithm design process with our early fault-tolerant framework is illustrated in Fig.~\ref{fig:design_principles}.
We choose this \emph{top-down} approach to early fault-tolerance (which is similar to the approach taken in Refs.~\cite{katabarwaEarly2023}) instead of a \emph{bottom-up} approach taken by some other works~\cite{akahoshiPartially2024,akahoshiCompilation2024} as this requires relatively few assumptions on the physical device used.

\textbf{Circuit division and error mitigation.} --- 
In order to adapt our quantum algorithm to a range of noise rates~$\gamma$,
we explore \emph{circuit division} protocols that replace a single coherent execution of a quantum circuit consisting of \(G\) gates by the execution of \(M\) shorter circuits, each consisting of less than \(G/R\) gates (with \(M > R\)).
The samples from these shorter circuits are then combined by a classical algorithm to recover the same result as the original quantum algorithm.
Circuit division allows us to trade between executing a few large circuits at lower error rates $\gamma$ (thus requiring more physical qubits per logical qubit), versus dividing these into many shorter circuits (which can tolerate higher $\gamma$ to achieve the same success probability).
Reducing the number of physical qubits required would enable a small early-fault-tolerant quantum computer to perform calculations that would otherwise be out of reach, at the cost of increasing the total run time of the calculation by an $\widetilde{\mathcal{O}}(M/R)$ factor.
Note that this flavor of circuit division is different than divide-and-conquer approaches which have been proposed for e.g.~optimization~\cite{dunjkoComputational2018,geHybrid2020, childsQuantum2022} and variational algorithms~\cite{fujiiDeep2022,perezsalinas2023}.
These are not known to be noise-robust, and the overhead in most cases is exponential in the number of cuts.
For a circuit division protocol to be practical, its overhead $M/R$ needs to be reasonable. 
Furthermore, the classical post-processing should be robust to faults in multiple sub-circuits, as a union bound implies the probability of failure of at least one sub-circuit is larger than the probability of the original circuit failing.

Error mitigation can further be employed to boost the robustness of circuit-division protocols to the remaining sub-circuit faults.
The overhead of EM typically depends on the sub-circuit fidelity $F$ \cite{caiQuantum2023}, which can be controlled either by limiting the sub-circuit depth or by running more error correction.
We provide a general, back-of-the-envelop study of the trade-off between error mitigation and error correction in App.~\ref{sec:early_ft_algs}; we find that combining the two can lead to a significant boost of the maximum logical circuit size that can be run on a given device, in a narrow regime of physical qubit count and error rate.

Single-control QPE variants are readily adaptable to circuit division.
In the presence of global depolarizing noise, they are known to be robust to faults in the sub-circuits with a quadratic overhead $M\sim R^2$~\cite{obrien2019quantum,ding2023robust,katabarwaEarly2023}. 
As these methods rely on measuring expectation values of Hadamard-test-like circuits, they are naturally compatible with existing EM protocols.
This has motivated a lot of interest in recent years~\cite{linHeisenbergLimited2022,guNoiseresilient2022,dingEven2023,wangQuantum2023,dongGround2022,wangFaster2023,dutkiewicz2022heisenberg}, however little work has gone into testing the robustness of more traditional versions of QPE (that utilize a multi-qubit control register and the quantum Fourier transform).

\textbf{QFT-based QPE algorithm robust to depolarizing noise.} --- 
Our first contribution is to provide a circuit-division variant of QFT-based QPE, \emph{Maximum-likelihood Sin-state QPE} (MSQPE), that is robust to global depolarizing noise.
We give the algorithm 
in App.~\ref{sec:sin_state_qpe_multi_circuit}, and prove the following result.
\begin{theorem}[Thm.~\ref{thm:mle_sinqpe}, informal]\label{thm:mle_sinqe_informal} Given an initial eigenstate preparation, the maximum-likelihood sin-state QPE algorithm in the presence of global depolarizing noise with rate $\gamma$ (per call of the unitary) converges to error $\epsilon$ in a total number of uses of the unitary given by
\begin{itemize}
    \item $\pi\epsilon^{-1}$, if $\epsilon\gg \gamma$,
    \item $C\gamma \epsilon^{-2}$ with $C\approx 20$, if $\epsilon \ll \gamma$,
\end{itemize}
and interpolates between these limits when $\epsilon\sim\gamma$.    
\end{theorem}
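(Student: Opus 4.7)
The plan is to cast MSQPE as a parameter-estimation problem over the phase $\phi$ and analyse the maximum-likelihood estimator via its Fisher information. A single sub-circuit of the protocol uses an ancilla register of $k$ qubits, performs $d_k \sim 2^k$ controlled applications of $U$, and in the noise-free case produces outcomes $x \in \{0, \ldots, 2^k - 1\}$ distributed according to a sin-state distribution $p_k^{\mathrm{sin}}(x \mid \phi)$ that is sharply peaked near $2^k \phi / (2\pi)$ and saturates the Heisenberg limit. Under global depolarising noise with per-$U$ strength $\gamma$, each sub-circuit is contracted to the maximally mixed state with probability $\Gamma_k \approx \gamma d_k$, so the observed distribution becomes the convex mixture $p_k(x \mid \phi) = (1 - \Gamma_k)\, p_k^{\mathrm{sin}}(x \mid \phi) + \Gamma_k\, 2^{-k}$. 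The estimator collects $M_k$ samples from each depth and outputs $\hat\phi = \argmax_{\phi} \sum_{k, j} \log p_k(x_{k,j} \mid \phi)$.

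\textbf{Fisher information and the two regimes.} Next I would derive, for one sample from depth $d_k$, the per-shot Fisher information $I_k(\phi) = \mathbb{E}[(\partial_\phi \log p_k)^2]$. The ideal sin-state contribution is $I_k^{\mathrm{sin}}(\phi) \propto d_k^2$, and the depolarising dilution rescales it by a factor of order $(1 - \Gamma_k)^2$ in the peak regime, giving $I_k(\phi) \sim (1 - \Gamma_k)^2 d_k^2$. Under standard regularity conditions the MLE is asymptotically normal with Holevo variance $\epsilon_H^2 \sim 1 / \sum_k M_k I_k(\phi)$. The total unitary cost is $N = \sum_k M_k d_k$, so the per-$U$-call efficiency at depth $d$ is $\sim (1 - \gamma d)^2 d$. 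Maximising over $d$ produces two regimes: when $\epsilon \gg \gamma$ the resolution constraint $d \sim 1/\epsilon$ is reached before noise matters, yielding $N \sim \pi / \epsilon$; when $\epsilon \ll \gamma$ the noise constraint $d \sim 1/\gamma$ is reached first, so the coherence-limited efficiency is $\sim 1/\gamma$ and $N \sim \gamma / \epsilon^2$. The smooth interpolation at $\epsilon \sim \gamma$ follows from the continuity of the optimised schedule in $\epsilon / \gamma$.

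\textbf{Main obstacle.} The hardest step will be promoting this local Cram\'er-Rao heuristic to a genuine convergence statement for the \emph{global} MLE. The mixed distribution $p_k(\cdot \mid \phi)$ becomes multi-modal once $\Gamma_k$ is close to one, and the $2\pi$-periodicity of $\phi$ means that spurious maxima of the empirical log-likelihood must be ruled out before asymptotic normality can be invoked. I would address this by imposing a geometric schedule in $k$: the coarsest levels (with $\Gamma_k \ll 1$) unambiguously localise $\phi$ to a window finer than the next level's aliasing scale, and subsequent levels refine within that window. Formal consistency then follows from uniform concentration of the empirical log-likelihood, using the Kullback-Leibler separation between $p_k(\cdot \mid \phi)$ and $p_k(\cdot \mid \phi')$ for $|\phi - \phi'|$ above the current level's resolution, together with a union bound across levels. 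Once consistency is in hand, asymptotic normality delivers the Holevo-variance bound and substituting the optimised depth schedule yields the two regimes in the stated form.
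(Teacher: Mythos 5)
Your overall strategy --- cap the circuit depth at the coherence scale $\sim 1/\gamma$, write the noisy outcome distribution as a convex mixture of the sin-state distribution and the uniform distribution, and read off the two regimes from a Fisher-information/Cram\'er--Rao analysis of the resulting MLE --- is the same skeleton as the paper's proof of Thm.~\ref{thm:mle_sinqpe}. But the algorithm you analyse is not the one the paper analyses, and the difference matters for the technical difficulties you then spend most of your effort on. The paper's Alg.~\ref{alg:mle-sinqpe} uses a \emph{single} circuit depth in each regime ($\mathcal{T}\approx\pi/\epsilon_t$ run once when $\epsilon_t\gg\gamma$; $\mathcal{T}_2=\lfloor\gamma^{-1}\rfloor$ repeated $M=1/(\bar{\mathcal{I}}_\gamma(\mathcal{T}_2)\epsilon_t^2)$ times when $\epsilon_t\ll\gamma$), so the MLE acts on i.i.d.\ samples from one identified parametric family and consistency plus asymptotic normality follow from the standard theory with no coarse-to-fine bootstrapping. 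Your geometric schedule $d_k\sim 2^k$ with a joint likelihood over all depths is an RPE-flavoured construction; the aliasing and spurious-maxima problems you identify as the ``main obstacle'' are largely artifacts of that choice (and of the depths with $\Gamma_k$ near one, which the paper simply never runs, since $F\geq e^{-1}$ at $\mathcal{T}_2$). A sin-state circuit of dimension $K$ already outputs an unaliased estimate of $\phi$ on the full circle, so the multi-level localisation machinery is not needed. Your route could be made to work, but it proves the theorem for a different (and harder-to-analyse) protocol.

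There is also a quantitative slip in your Fisher-information step. You claim the depolarising mixture rescales the per-shot information by $(1-\Gamma_k)^2$ ``in the peak regime.'' For the sharply peaked sin-state distribution the denominator of the Fisher integrand, $(1-\Gamma)p^{\mathrm{sin}}+\Gamma\,2^{-k}$, is itself dominated by $(1-\Gamma)p^{\mathrm{sin}}$ exactly where the numerator $[(1-\Gamma)\partial_\phi p^{\mathrm{sin}}]^2$ is non-negligible, so one power cancels and the correct rescaling is $\bar{\mathcal{I}}_\gamma\approx F\,\bar{\mathcal{I}}_0$ with $F=e^{-\gamma\mathcal{T}}$ --- this is precisely the computation in the paper's proof. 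The $F^2$ law you wrote is the one for the (flat) Hadamard-test distribution used in RPE. The consequence is that your optimal depth comes out as $1/(2\gamma)$ instead of the paper's $\gamma^{-1}$ and your constant $C$ differs by an $\mathcal{O}(1)$ factor. Since the theorem leaves $C$ unspecified, this does not break the stated result, but it should be fixed before the argument is used to compare against RPE or against the information-theoretic bounds, where these constants are the whole point.
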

This algorithm relies on two technical developments: the extension of QFT-based QPE methods to allow for an arbitrary number of calls to the unitary (as opposed to powers of 2 only), and a noise-robust classical post-processing of multiple QPE subroutines via maximum-likelihood estimation.

Thm.~\ref{thm:mle_sinqe_informal} demonstrates that QFT-based QPE methods can be made robust to noise, thus allowing for fair comparison with single-control QPE variants.
We optimize the hyperparameters of the (QFT-based) MSQPE algorithm (details in App.~\ref{sec:sin_state_qpe_multi_circuit}), and compare it against the (single-control) RPE algorithm~\cite{kimmelRobust2015} with optimized parameters from Ref.~\cite{belliardoAchieving2020}
In Fig.~\ref{fig:MSQPE_vs_RPE}, we plot the ratio between the costs of the two algorithms to achieve the same target error $\epsilon$, where the cost $\mathcal{T}_{\mathrm{tot}}$ is measured as the total number of unitaries across all circuits used.
We observe both methods perform similarly; MSQPE performs better when the noise rate \(\gamma\) is low compared to the target error \(\epsilon\), by a factor that increases for smaller \(\gamma\).
Vice versa, when the target error is small \(\epsilon \ll \gamma\), the cost of RPE is about 30\% smaller.
In App.~\ref{sec:qpe_gdn_numerical_comparison} we expand on this comparison, and note that both estimators can likely be improved; the MSQPE estimator by optimizing the sin-state in the presence of noise, and the RPE estimator by numerically optimizing the RPE hyperparameters (the optimization in Ref.~\cite{belliardoAchieving2020} aimed to achieve analytic bounds).
However, across the range of errors and noise rates considered, the performance of both estimators falls within a factor $10\times$ of information-theoretic bounds, which places a limit on any further gains.

\begin{figure}
    \centering
    \includegraphics{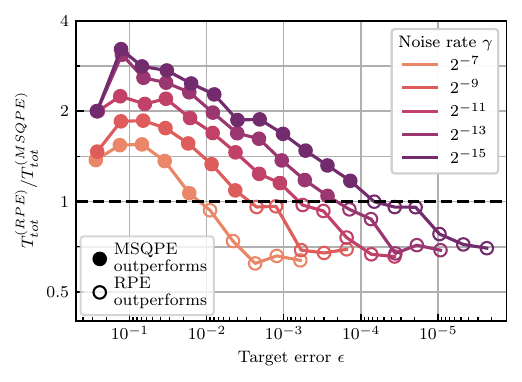}
    \caption{Ratio of the total executions times $\mathcal{T}_{\mathrm{tot}}$ of the optimized robust phase estimation (RPE) algorithm of Ref.~\cite{belliardoAchieving2020} and the maximum-likelihood sin-state quantum phase estimation (MSQPE) algorithm developed in this work across a range of target errors on the phase estimate $\epsilon$ and noise rates $\gamma$ translating to an error rate per unitary of $p_{\mathrm{err}}=(1-e^{-\gamma})$. The dashed line at $1$ denotes equivalent performance between the two methods; points above $1$ give a region where the MSQPE estimator performs better, points below $1$ show where RPE outperforms MSQPE.}
    \label{fig:MSQPE_vs_RPE}
\end{figure}

\textbf{Error mitigation overhead for single-control QPE.} ---
While the global depolarizing channel is a useful analytical tool, it fails to accurately describe the residual logical noise in fault-tolerant quantum computers.
Previous work has shown that other natural noise models (in particular, local depolarizing noise) can result in a biased signal in single-control QPE methods~\cite{obrien2019quantum}, and thus a biased phase estimate.
Consequently, more complex EM strategies will be required to address the general noise models expected to characterize residual logical error.
EM introduces a circuit-repetition overhead that scales inverse-polynomially with the circuit fidelity $F$.

Our second contribution is to study how overhead associated with error mitigation (defined in terms of circuit repetitions) translates to an overhead in the cost of error-mitigated QPE (in terms of number of calls to a noisy implementation of the unitary operator). 
We find that this overhead can be suppressed logarithmically thanks to circuit division.
\begin{theorem}[Thm.~\ref{thm:em_overhead_qpe}, informal]
\label{thm:em_overhead_qpe_informal}
    Given an error mitigation scheme for expectation values with a sample overhead scaling with a power of the circuit fidelity $F^{-\alpha}$, one can obtain an unbiased error mitigation scheme for single-control phase estimation with an overhead proportional to $\alpha \, \gamma/\epsilon$, where $\epsilon$ is the precision of the estimate and $\gamma$ is the noise rate per call of the unitary.
\end{theorem}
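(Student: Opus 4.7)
The plan is to lift the given expectation-value error mitigation scheme to single-ancilla QPE by applying it shot-by-shot to each Hadamard-test circuit, and then to optimise the chosen circuit depth so as to balance the Heisenberg-limit savings of deep circuits against the exponential mitigation cost they incur. First I would recall that single-ancilla QPE builds its phase estimate from measurements of $\mathrm{Re}\,\langle\psi|U^k|\psi\rangle$ and $\mathrm{Im}\,\langle\psi|U^k|\psi\rangle$ via Hadamard tests. These are ordinary expectation values on circuits of fidelity $F_k = e^{-\gamma k}$, so the assumed mitigation scheme yields an unbiased estimate of each at a sample overhead of $F_k^{-\alpha} = e^{\alpha\gamma k}$ per circuit.

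Next I would analyse the total cost of reaching precision $\epsilon$ using circuits of a single depth $K$. At depth $K$ each Hadamard-test outcome is a Bernoulli sample with bias $\mathrm{Re}\,e^{iK\phi}$, so $N$ shots resolve $K\phi$ to standard-quantum-limit precision $1/\sqrt{N}$ and hence $\phi$ to precision $1/(K\sqrt{N})$; setting this equal to $\epsilon$ gives $N \sim 1/(K\epsilon)^{2}$ and a noiseless total unitary count of $KN \sim 1/(K\epsilon^{2})$. Multiplying by the per-circuit mitigation overhead yields
\[ \mathcal{T}_{\mathrm{tot}}(K) \,\sim\, \frac{e^{\alpha\gamma K}}{K\,\epsilon^{2}}. \]

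Finally I would minimise $\mathcal{T}_{\mathrm{tot}}(K)$ over $K$. Differentiating gives the optimum $K^{\ast} = 1/(\alpha\gamma)$, at which $\mathcal{T}_{\mathrm{tot}}(K^{\ast}) = O(\alpha\gamma/\epsilon^{2})$. Comparing to the noiseless Heisenberg-limited cost $\pi/\epsilon$ quoted in Thm.~\ref{thm:mle_sinqe_informal} above, the multiplicative mitigation overhead is $O(\alpha\gamma/\epsilon)$, as claimed. The ``logarithmic suppression'' alluded to in the surrounding text then has a transparent meaning: running one long circuit of depth $1/\epsilon$ would incur an $e^{\alpha\gamma/\epsilon}$ overhead, and circuit division effectively takes the logarithm of this by pushing the depth down to $K^{\ast}$, recovering a cost that is only linear in $\gamma/\epsilon$.

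The main obstacle I anticipate is rigorously transferring the black-box expectation-value mitigation guarantee to the signal-processing step that combines the Hadamard-test data into a phase estimate. Specifically I would need to verify that (i) the mitigated Hadamard-test estimator has variance bounded by $F_K^{-\alpha}/N$ uniformly in $\phi$ and in any prior information used by the post-processing, and (ii) the phase estimator itself (e.g.\ maximum likelihood or the information-limit estimator of Ref.~\cite{dutkiewicz2022heisenberg}) remains consistent and attains the advertised precision when fed these mitigated samples. The filtering routine alluded to around the theorem, which tightens the prefactor when prior information about $\phi$ is available, would be handled here as well, but this is a refinement rather than the conceptual core of the argument.
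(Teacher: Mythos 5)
Your argument is essentially the paper's: both reduce the problem to error-mitigating Hadamard-test expectation values at fidelity $F\geq e^{-\gamma\mathcal{T}}$, arrive at a total cost $\propto e^{\alpha\gamma\mathcal{T}}/(\mathcal{T}\epsilon^{2})$, minimize over the depth to get $\mathcal{T}=1/(\alpha\gamma)$, and divide by the noiseless Heisenberg cost $\pi/\epsilon$ to obtain an overhead $\propto\alpha\gamma/\epsilon$. The one place your sketch is loose is the claim that $N$ shots at a \emph{single} depth $K$ resolve $\phi$ to precision $1/(K\sqrt{N})$: a depth-$K$ Hadamard test only determines $K\phi \bmod 2\pi$, i.e.\ $\phi$ up to multiples of $2\pi/K$, so on its own this step would fail. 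The paper avoids the issue by anchoring the noiseless sub-$\epsilon$ cost on the RPE guarantee of Thm.~\ref{thm:rpe_noiseless}: run the full RPE ladder of depths up to $\mathcal{T}_{\max}$ to the coarse precision $\widetilde\epsilon\sim 1/\mathcal{T}_{\max}$ it natively achieves, then repeat the whole protocol $\widetilde\epsilon^{2}/\epsilon^{2}$ times and average, giving cost $\kappa/(\mathcal{T}_{\max}\epsilon^{2})$ with the aliasing resolved by the lower-order circuits. Since those lower-order circuits contribute only a subleading amount to the total, your asymptotics survive once this disambiguation is added; your closing concerns about propagating the black-box variance guarantee through the post-processing are real but are also left at the informal level in the paper.
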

This result highlights a significant advantage, as it narrows the cost gap between EM methods with limited scope but low overhead (e.g., postselection) and more general yet costlier approaches (e.g., probabilistic error cancellation).
In Table.~\ref{tab:em_overhead}, we summarize the circuit-repetition overheads $C_\text{em}$ known for various methods, and we show how these translate to unitary-call overheads for phase estimation $C_\text{em}^\text{QPE}$.

\setlength{\tabcolsep}{12pt}
\begin{table}[h]
    \centering
    \begin{tabular}{c c c l}
        \hline
        \hline
        Postselection 
        & $C_\text{em} = \Lambda F^{-1}$ 
        & $C_\text{em}^\text{QPE}(\gamma, \epsilon) \leq \Lambda \kappa e/\pi \cdot {\gamma}/{\epsilon}$ 
        & 
        \parbox[m]{0.25\textwidth}{\raggedright \vspace{2pt} Symmetry verification,\newline verified phase estimation* \strut}
        \\
        \hline
        Rescaling
        & $C_\text{em} = \Lambda F^{-2}$ 
        & $C_\text{em}^\text{QPE}(\gamma, \epsilon) \leq \Lambda \kappa e/\pi \cdot 2 \gamma/\epsilon$ 
        & \parbox[p]{0.25\textwidth}{\raggedright \vspace{2pt} Echo verification*,\newline direct rescaling under GDN \strut}
        \\
        \hline
        Explicit unbiasing 
        & $C_\text{em} = \Lambda F^{-4}$ 
        & $C_\text{em}^\text{QPE}(\gamma, \epsilon) \leq \Lambda \kappa  e/\pi \cdot 4 \gamma/\epsilon$ 
        & \parbox[p]{0.25\textwidth}{\raggedright \vspace{2pt} Probabilistic error cancellation,\newline virtual distillation \strut}\\
        \hline
        \hline
    \end{tabular}
    \caption{
        Error mitigation overhead for expectation value estimation ($C_\text{em}$) and for quantum phase estimation ($C_\text{em}^\text{QPE}$) for different error mitigation techniques.
        $\Lambda$ and $\kappa$ are constants depending on the circuit, noise model and error mitigation approach, $\gamma$ is the noise rate and $\epsilon$ is the target precision of phase estimation.
        * The overhead of echo verification (a.k.a. verified phase estimation) depends on the assumptions on the circuit: if the cost of state preparation is negligible, its overhead is postelection-bounded; if the dominant cost comes from state preparation, it behaves like a rescaling technique due to the required doubling of the circuit depth.
        See App.~\ref{app:error_mitigation_overhead} for a detailed discussion.
    }
    \label{tab:em_overhead}
\end{table}

\textbf{Error mitigation beyond expectation values.} ---
Our third contribution is to develop a general EM protocol for
maximum-likelihood estimation.
This is motivated by our intention to apply MSQPE in the presence of arbitrary noise. 
However, our protocol is more general; it can be applied to any estimation problem that would require samples from a noiseless quantum circuit, while only samples from a noisy device with a known error channel are available.
Maximum-likelihood is a more general estimator than sample average, typically used for expectation value estimation.
To the best of our knowledge, prior EM schemes have exclusively targeted expectation value estimation, leaving the extension of EM to other estimation tasks as an open problem identified in the literature \cite{caiQuantum2023}.

Our protocol, detailed in App.~\ref{sec:EUMLE}, is based on rewriting the distribution $P(x)$ that describes the outcomes $x$ of a noiseless circuit as a non-convex combination of distributions $Q_j(x)$ that can be sampled on the noisy device: $P(x) = \sum_j \alpha_j Q_j(x)$.
This can be done by using the channel decomposition techniques from probabilistic error cancellation ~\cite{temmeError2017, endoPractical2018}.
Given a parameterized model $P(x|\phi)$ for the noiseless distribution, the expected value of the log-likelhood function $\bar{\mathcal{L}}(\phi)$ can then be expanded in terms of $Q_j$:
\begin{equation}
    \bar{\mathcal{L}}(\phi) = \sum_j \alpha_j \E[\log P(x|\phi) | x \sim Q_j(x)].
\end{equation}
This function can be estimated by taking samples from $Q_j(x)$ on the noisy quantum device, by importance sampling of $j$.
We obtain the Explicitly Unbiased Maximum Likelihood Estimator by maximizing this likelihood function, and we prove its convergence in the following theorem:

\begin{theorem}[Thm.~\ref{thm:eumle}, informal]
    Given a parameterized distribution model $P(x|\phi)$ that describes the outcomes $x$ of a noiseless quantum circuit, and assuming access to samples from a noisy quantum computer with a known error channel, we can construct an Explicitly Unbiased Maximum Likelihood Estimator (EUMLE).
    This estimator converges to the true value of $\phi$ with a variance that scales inversely with the total number of samples.
\end{theorem}

\textbf{QFT-based QPE with arbitrary noise.} ---
We then apply the EUMLE to MSQPE, obtaining a QFT-based QPE algorithm robust to arbitrary noise channels.
To ensure the estimator has a bounded sampling cost, we introduce a further regularization which addresses the issue of potential zeros of the likelihood function (see App.~\ref{sec:rEUMLE}).
Under local depolarizing noise at rate $\gamma$, the unitary-call cost of performing phase estimation to precision $\epsilon$ is found to be $\mathcal{T}_{\text{tot}}\leq  137 \gamma/\epsilon^2$.
This corresponds to a QPE error mitigation overhead of approximately $44\gamma/\epsilon$.
Comparing to the cost of MSQPE with global depolarizing noise in Theorem~\ref{thm:mle_sinqe_informal}, we observe that the additional overhead of mitigating a more general noise channel is approximately $7\times$ (see App.~\ref{sec:EUMLE_overhead}).

\textbf{Cost estimates.} ---
Finally, we turn to compiling MSQPE for a set of example problems to the surface code, for a range of residual logical noise rates $\gamma$.
Our goal is to quantify the trade-off between number of physical qubits and total computation time required to estimate ground state energies to a given precision.
The details of this resource estimates are given in App.~\ref{sec:compilation}, and we provide open-source code to reproduce and extend our results \cite{eftqpe_code}.

To perform this resource analysis, we need to select a specific QEC architecture, but we expect that our framework can be used to extend any such architecture into the early-fault tolerant regime.
We consider a quantum computer using a two-dimensional rotated surface code~\cite{fowlerSurface2012}, that performs fault-tolerant Clifford gates using lattice surgery~\cite{horsmanSurface2012} and uses CCZ magic state injection and distillation to implement Toffoli gates~\cite{gidneyEfficient2019}.
Such an architecture is popular because it can tolerate relatively high physical error rates, and because it can be implemented using a two-dimensional lattice of locally connected qubits~\cite{raussendorfFaulttolerant2007,fowlerHighthreshold2009}.

The target Hamiltonians we consider are (1) a set of $L\times L$-sites Fermi-Hubbard models and (2) electronic structure Hamiltonians for a selection of of small- to classically nontrivial molecules.
In both cases, we estimate the ground state energy to precision $\Delta E$ by applying MSQPE to a qubitized walk operator $\mathcal{W}$ \cite{lowHamiltonian2019}. 
For the Fermi-Hubbard Hamiltonian, we choose model parameters ($t=1, u=4$) and target precision ($\Delta E = 10^{-2}$) following Ref.~\cite{babbushEncoding2018}, reproducing a regime that is challenging for classical methods \cite{leblancSolutions2015}.
$\mathcal{W}$ is implemented following the methods of Ref.~\cite{babbushEncoding2018}.
Note that qubitization of the Hubbard model does not require data uploading by quantum read-only memories (QROMs), and the only approximation error comes from the rotations used to prepare a two-qubit state encoding the Hamiltonian coefficients.
The molecular models we consider span active space sizes between 6 and 26 spatial orbitals.
All molecular Hamiltonians were compressed with tensor hypercontraction (THC) \cite{leeEven2021} after applying symmetry shifts \cite{Rocca_2024} to further reduce the qubitization 1-norm $\lambda$.
We aim for the chemically desirable $\Delta E = {10^{-3}} E_\text{h}$ accuracy of the final results, choosing the THC hyperparameters accordingly. 
See App.~\ref{app:molecues_with:thc_details} for computational details on the construction of the molecular Hamiltonians.

We estimate the Toffoli costs of phase estimation on the qubitization walk operators $\mathcal{W}$ using Qualtran \cite{harriganExpressing2024}.
We consider using one or multiple CCZ factories \cite{gidneyEfficient2019} to produce magic states for implementing Toffoli gates,
and
We optimize the data-qubit code distances $d$ and the two-stage distillation code distances $d_0, d_1$ to minimize the computational volume of $c\mathcal{W}$ while keeping the error rate bounded by $\gamma$.
We then use the methods detailed earlier in the paper to optimize a robust estimator from oracle calls to the noisy $\mathcal{W}$, with a final target standard deviation of $\Delta\phi = \Delta E / \lambda$ (where $\lambda$ is the qubitization 1-norm).
(Note that this is different than requiring precision $\Delta\phi$ with a fixed success probability, which is a common choice in literature. We expand on this comparison in App.~\ref{sec:ftqc_comparison})

In Fig.~\ref{fig:ftqc-res-comparison}, we plot the resulting estimates of wall-clock time and number of physical qubits required to execute our algorithm. 
Across the range of experiments considered, we observe that circuit splitting allows for a factor $\sim 2$ decrease in the qubit count at the cost of a factor $\sim 10$ increase in wall-clock time.
All the curves demonstrate an ``elbow-like'' behaviour; reducing the number of physical qubits below a specific point incurs a significant penalty in total runtime.
The effect becomes more pronounced for larger systems.
This aligns with our back-of-the-envelope characterization of circuit division, where we observed there only is a thin region of error rate and number of available qubits where significant circuit depth enhancements are possible by combining error correction with error mitigation.
As anticipated, it is apparent that the simulation of chemistry is more resource-intensive than the simulation of the Hubbard model at the same state-space size.
Our analysis makes this quantitative: compared to the $L=5$ Hubbard model with $25$ spatial modes, the $26$ spatial orbital Co(salophen) molecule requires roughly four times more qubits and around one and a half orders of magnitude longer wall-clock time.
This is due to the higher complexity of molecular Hamiltonians and the costs associated to the QROMs requried to upload the Hamiltonian coefficients for realizing the qubitization oracle.

\begin{figure}
    \centering
    \includegraphics[width=\textwidth]{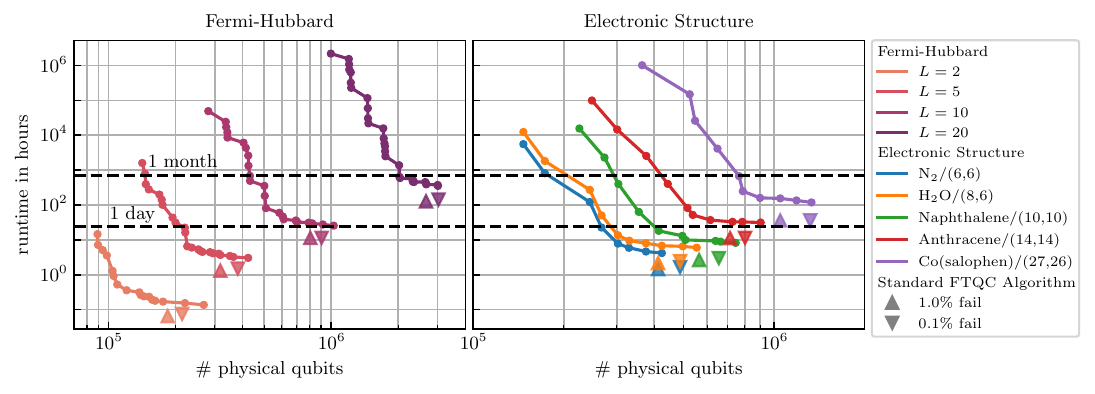}
    \caption{
        Physical costs for MSQPE applied to (left) the qubitized Fermi-Hubbard model Hamiltonian and (right) the active-space molecular Hamiltonians of some chemical systems.
        For the Fermi-Hubbard Hamiltonian, we choose the hopping parameter $t=1$ to set the energy units and interaction strength $u/t = 4$.
        For the electronic structure Hamiltonians, the active spaces sizes (number of electrons and spatial orbitals) are notated in the legend.
        The target precision on the energy is $\Delta E = 10^{-2}$ for the Hubbard model and $\Delta_E = 10^{-3} E_\text{h}$ for the chemical systems.
        The physical resources are estimated assuming computation in the surface code using CCZ resource states produced by a single factory; the size of surface code and CCZ factory parameters are chosen to minimize computation volume while keeping the error rate below a choosen $\gamma$, assuming physical error rate of $10^{-3}$, a surface code clock cycle of $\SI{1}{\micro\second}$, and $50\%$ routing overhead.
        The residual error is assumed to follow a global depolarizing noise model.
        Changing $\gamma$ allows to trade between the number of physical qubits (used as resource for implementing better error correction) and total runtime.
        For comparison, we report the physical costs of the standard fault-tolerant single-circuit implementation of sin-state QPE, accepting a failure probability of $1\%$ or $0.1\%$ (triangles).
    }
    \label{fig:ftqc-res-comparison}
\end{figure}

\section{Discussion and Conclusion}
Besides various technical contributions to phase estimation and error mitigation, the key focus of this work is to develop a framework that adapts fully fault-tolerant algorithms to the early-fault-tolerant regime and quantifies the savings achieved by trading error rates for physical qubit counts.
Such a framework requires a full-stack treatment of quantum algorithm compilation, including a robust treatment of the output of a noisy quantum computer.
Improvements in logical gates, QEC codes, or Hamiltonian simulation algorithms can be immediately tested within our code, which relies on Qualtran~\cite{harriganExpressing2024} for computing logical costs, in a plug-and-play fashion.
The importance of this adaptability is highlighted by recent advances in T-gate cultivation~\cite{gidneyMagic2024}, which will significantly reduce the resource estimates presented here and necessitate going beyond simple T-gate counting.
Additionally, our approach allows for exploring alternative QEC architectures, such as those with long-range connectivity and using non-surface codes~\cite{bluvsteinLogical2024,silvaDemonstration2024}.
As illustrated in Fig.~\ref{fig:ftqc-res-comparison}, with current methods, qubit counts can be reduced by a factor $2-3$, but with significant overheads of a factor $10$-$100$ in runtime.
This result, while somewhat sobering for early fault-tolerance research, stems from the fact that the gain in physical qubit count from increased logical circuit depth [Eq.~\eqref{eq:qubit_savings}] is only logarithmic.

We would like this paper to be understood as an invitation to try and beat the provided early-FT estimates, and we can already suggest various improvements that could be achieved within our framework.
In this work, we have focused on developing noise-robust post-processing for data sampled using the sine-state QPE circuit, which is optimal only in the noiseless case.
Optimizing this circuit while taking noise into account could further improve MSQPE and bring its performance closer to the information-theoretic limit \cite{kolodynski2013efficient}, potentially saving a factor $\approx 2$ (see App.~\ref{sec:qpe_gdn_numerical_comparison}).
Additionally, applying the unary-iteration-based technique from Ref.\cite{leeEven2021} to the iterated qubitized walk operator could yield another $2\times$ reduction in logical costs.
Furthermore, in App.~\ref{sec:filtering} we outline a filtering procedure for the EUMLE, which could reduce the error mitigation overhead towards the bound from postselection, potentially achieving a $4\times$ reduction in total cost (analogous to Thm.~\ref{thm:em_overhead_qpe_informal}).
Additional savings could also be achieved in the compilation of our algorithms, by allowing higher algorithmic error.
For instance, reducing the precision of rotation angles could lower physical resource requirements, though balancing this against energy estimation accuracy remains nontrivial.

We emphasize that the EUMLE introduced in this work has potential applications beyond QPE, in more general estimation problems.
On the other hand, the EUMLE needs accurate characterization of the noise model in order to construct a quasiprobability decomposition.
Investigating the robustness of EUMLE against inaccurate noise characterization is a clear direction for future work.
Similarly, further research is needed to characterize residual logical noise in specific fault-tolerant computation models.
This characterization could be carried out analytically, by analyzing error correcting codes, decoders and distillation protocols, or experimentally, by extending noise characterization techniques such as gate set tomography \cite{merkel2013self,blume2017demonstration,Nielsen2021gatesettomography} to fault-tolerant computation.

We have aimed to be as complete as possible in the resource estimate of MSQPE.
However, we have neglected a significant complicating factor for ground state energy estimation: the preparation of a good initial state.
Efficient protocols for preparing high-fidelity initial states are a topic of active research in the field of quantum simulation for chemistry~\cite{tubmanPostponing2018, leeEvaluating2023,fomichev2024initialstatepreparationquantum, berry2024rapidinitialstatepreparation},
and the cost of preparing a state $|\psi\rangle$ with overlap $a=|\langle\phi|\psi\rangle|$ with the ground state $|\phi\rangle$ can vary widely depending on the problem.
If only an approximate ground state is available (i.e. $a\neq 1$) the energy can still be estimated, provided that the first excited state of the Hamiltonian is well separated from the ground state, by an energy gap $\Delta$.
This can be achieved in two ways:
(1) energy-filtering and amplitude amplification, which allow to disill the exact ground state $|\phi\rangle$ \cite{geFaster2019,linNearoptimal2020}; or (2)  classical signal processing techniques that isolate the ground-state phase information~\cite{obrien2019quantum,linHeisenbergLimited2022,dutkiewicz2022heisenberg,dingEven2023,dingSimultaneous2023}.
While the first method integrates immediately with our algorithm, it requires long state-preparation circuits, with a cost scaling as $\mathcal{O}(\Delta^{-1}a^{-1})$.
This cost needs to be added for each circuit sample, which penalizes circuit division.
The second class of methods require only one preparation of $|\psi\rangle$ for each circuit sample, but introduce a sampling overhead scaling as $\mathcal{O}(a^{-2})$ and require QPE circuits of at least depth $\mathcal{O}(\Delta^{-1})$~\cite{moitraSuperresolution2015}. 
Drawing inspiration from these methods, we propose a filtered EUMLE algorithm in App.~\ref{sec:filtering}, which we anticipate will achieve similar performance; determining its precise overhead is left to future work.
All the options described above establish a lower bound on the minimum circuit depth required for QPE of $\Omega(\Delta^{-1})$, which translates in our framework to a maximum tolerable error rate $\gamma$ (and thus a minimum required number of qubits).
If the error rate exceeds this threshold, the cost of estimating the ground state energy becomes exponential in $\epsilon$~\cite{moitraSuperresolution2015}.
Therefore, we suggest that early fault-tolerant algorithm research should prioritize optimizing initial state preparation and identifying applications where state preparation costs is reasonably low.

\section*{Code availability}
Python code for performing the logical and physical resource estimates of MSQPE as well as all used molecular Hamiltonians and their factorizations is publicly available at {\url{https://github.com/StefanoPolla/EarlyFT_QPE}}.

\section*{Contributions}
A.~D., S.~P. and T.~E.~O. conceived the main ideas, developed the algorithms and conceived and proved the theorems.
S.~P. and T.~E.~O. conceived the error mitigation scheme.
W.~J.~H. provided the performance estimate for early fault tolerance in App.~\ref{sec:early_ft_algs}.
S.~P., A.~D. and M.~S. developed the cost estimates code.
M.~S. provided the molecular models for the chemistry cost estimates.
All authors contributed to interpreting the results, writing, and revising the manuscript.

\begin{acknowledgments}
We acknowledge useful discussions with Oumarou Oumarou, Kianna Wan, Ryan Babbush and Rafał Demkowicz-Dobrzański, we thank Rolando Somma and Nobuyuki Yoshioka for useful feedback on the manuscript, and we thank Carlo Beenakker for support.
We thank the Lorentz Center for hosting the authors during the workshop ``Bridging the gap between classical \& quantum simulation'', enabling the progress of this work.
S.P.~acknowledges support from Shell Global Solutions BV during his time in Leiden. 
A.D.~is supported by a Google PhD Fellowship.
\end{acknowledgments}

\clearpage
\appendix

\renewcommand{\thetheorem}{\thesection.\arabic{theorem}}

\let\oldsection\section

\renewcommand{\section}[1]{%
  \oldsection{#1}
  \setcounter{theorem}{0}%
  \renewcommand{\thetheorem}{\thesection.\arabic{theorem}}%
}

\section{Designing algorithms for early fault tolerance}\label{sec:early_ft_algs}

Most analysis of fault-tolerant quantum algorithms makes the assumption that quantum error correction (QEC) enables error-free computation.
Under reasonable physical assumptions, the space and time overheads of QEC are at most polylogarithmic in the size of the computation.
In the limit where devices are large and error rates are low, this asymptotic behavior dominates and it is affordable to suppress errors to arbitrarily small levels.
However, before we reach this regime, we expect an era of ``early fault-tolerance,'' where we have access to quantum computers that are capable of benefiting from quantum error correction, but are not so large and performant that we can abstract away noise entirely.

Some works on early fault-tolerance consider other limitations besides size and error rate that might impose limits on quantum error-correction~\cite{katabarwaEarly2023}.
For example, quantum error correction is designed to suppress uncorrelated errors, but early experimental evidence with superconducting qubits has demonstrated that large correlated errors can arise from cosmic ray impact events~\cite{mcewenResolving2022, acharyaSuppressing2023}.
Experimental progress has shown that this effect can be suppressed~\cite{panEngineering2022, mcewenResisting2024}, although unknown mechanisms still appear to set limitations on the performance of quantum error correction in state-of-the-art experiments~\cite{acharyaQuantum2024}.
In this work, we generally make the assumption that scalable quantum error correction is possible and that any confounding factors will be addressed by scientific and engineering progress.
However, it is possible that the early fault-tolerant era will be prolonged by unforeseen difficulties, in which case the techniques we explore may prove especially useful.

There is a growing body of work focused on understanding and maximizing the power of early fault-tolerant quantum computers.
One frequently-taken approach is to assume that some combination of techniques will allow us to execute much larger circuits than we can on today's noisy devices, but that practical concerns will still impose limits on the number of qubits and the circuit size.
Works in this direction focus on applications, usually picking a simple computational task and exploring different design choices in order to minimize the resources required to perform the task in the absence of error.
The focus on early fault-tolerance in these works appears mainly in the choice of the computational task, which might be a toy example designed to demonstrate algorithmic primitives~\cite{bluntCompilation2024}, or a scientific application just beyond the reach of classical computation~\cite{childsFirst2018,campbellEarly2022}.
When the constant factors are taken into account, and when applicable (finite) target precisions are fixed, it is common to find that the simpler methods with worse asymptotic scaling (such as Trotterized time evolution) are superior to their asymptotically optimal counterparts (such as quantum signal processing)~\cite{childsFirst2018,campbellEarly2022}.

Other works have focused on developing new algorithmic tools for the setting where we have large but finite upper bounds on the maximum circuit size~\cite{wangQuantum2023, wangFaster2023, linHeisenbergLimited2022, zhangComputing2022, wanRandomized2022, dongGround2022, dingEven2023, dingSimultaneous2023, niLowdepth2023, nelsonAssessment2024, akhalwayaTopological2024}.
These algorithms focus on reducing the number of gates, the circuit depth, or the number of ancillary qubits by classically combining the results from multiple separate circuit executions.
Underlying many of these approaches is a trade-off between using deeper circuits to estimate some quantity at the Heisenberg limit and using more repetitions of a shallow circuit to estimate the same quantity at the shot-noise limit.
For example, Refs.~\citenum{wangQuantum2023, wangFaster2023, dingEven2023} all allow for ground state energy using a total runtime that scales with the precision \(\epsilon\) as \(\epsilon^{-\alpha}\) for an \(\alpha \in \left[ 1,2 \right]\) that decreases as the maximum circuit depth is increased.
Other works make a related trade-off, choosing between running many copies of a shorter circuit and postselecting on observing some rare event, or obtaining a quadratic speedup with amplitude amplification and the deeper circuits it requires.~\cite{dongGround2022,akhalwayaTopological2024}.

\subsection{Combining quantum error mitigation and quantum error correction}
\label{sec:combining_EM_and_qec}

Early fault-tolerant quantum computers won't be able to arbitrarily suppress error rates with a negligible resource overhead, which makes it desirable to consider computations that can tolerate a non-zero error rate.
This can be achieved either by developing algorithms that have some natural robustness to error~\cite{hugginsUnbiasing2022,kshirsagarProving2024,liangModeling2024}, or by augmenting error correction with quantum error mitigation (EM)~\cite{suzukiQuantum2022, piveteauError2021, lostaglioError2021, akahoshiPartially2024}.
Years of work in the NISQ era has produced a large body of work on algorithms robust to error and we do not attempt to review that literature here.
For similar reasons, we limit our discussion of error mitigation to those works that explicitly explored the combination of quantum error correction and error mitigation.
Broadly, error mitigation allows us to estimate noise-free expectation values at the expense of increasing the overall computation time exponentially in the number of expected errors~\cite{caiQuantum2023}.
For example, given a known noise model, probabilistic error cancellation lets us express the noise-free expectation value of an observable as a quasi-probability distribution over quantities sampled from modified versions of the original circuit~\cite{temmeError2017}.
In this case, the exponential overhead manifests as an increased variance, or, equivalently, as a larger number of samples required to estimate an expectation value to a fixed precision.

The combination of quantum error mitigation and quantum error correction can take a number of different forms.
Motivated by the fact that fault-tolerant non-Clifford gates are particularly costly to implement, some works have focused on combining error-corrected Clifford gates with noisy non-Clifford gates~\cite{piveteauError2021,suzukiQuantum2022, toshioPractical2024}.
By using error mitigation to handle the noise in the non-Clifford gates, such schemes avoid the difficulties and overheads of, e.g., magic state distillation.
Other works have explored the use of error mitigation for both Clifford and non-Clifford gates, potentially allowing the use of lower code distances and fewer qubits.
In some cases, researchers have explicitly considered the optimization of quantum error correction parameters for use in conjunction with error mitigation~\cite{katabarwaEarly2023,suzukiQuantum2022,akahoshiPartially2024}.

To develop some intuition, we illustrate a toy example of how QEC and EM can be combined in Fig.~\ref{fig:NISQ_crossover}.
In the top panel of this figure, we estimate the depth of the largest circuit that one could perform on \(100\) logical qubits for a variety of physical error rates and numbers of (physical) qubits.
In this model, we demand an ``effective sampling rate'' of one sample per minute. 
In other words, the combination of error correction and error mitigation should emulate the effect of an idealized error-free quantum computer that executes the circuit and performs a measurement of a desired observable once per minute.
We approximate the effective time per sample using the expression \(N_{\mathrm{parallel}}^{-1} D t_{\mathrm{gate}} \gamma^{2}\), where \(N_{\mathrm{parallel}}\) is the number of copies of the computation that we can perform in parallel given our physical resources, \(D\) is the circuit depth, \(t_{\mathrm{gate}}\) is the time per gate, and \(\gamma^2\) is the overhead required to mitigate the logical errors using probabilistic error cancellation.
We optimize over the choice of whether or not to use error correction and the code distance \(d\), finding the choice that allows for the largest \(D\) such that the effective time per sample is less than one minute.

\begin{figure}
    \centering
    \includegraphics[width = 0.9\textwidth]{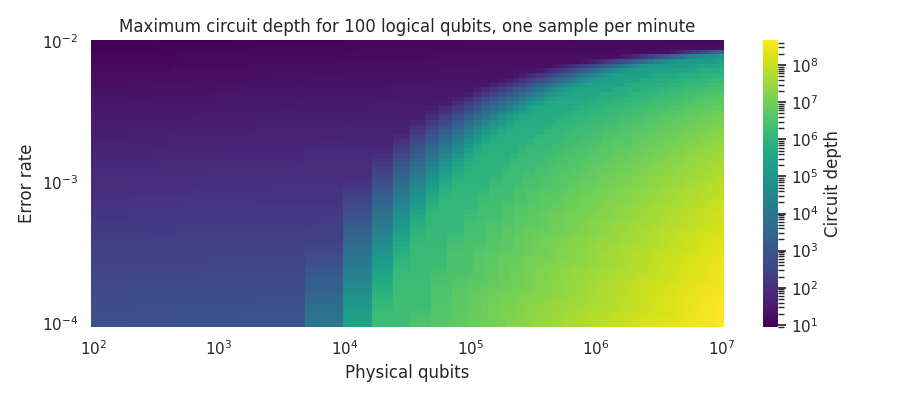}
    \includegraphics[width = 0.9\textwidth]{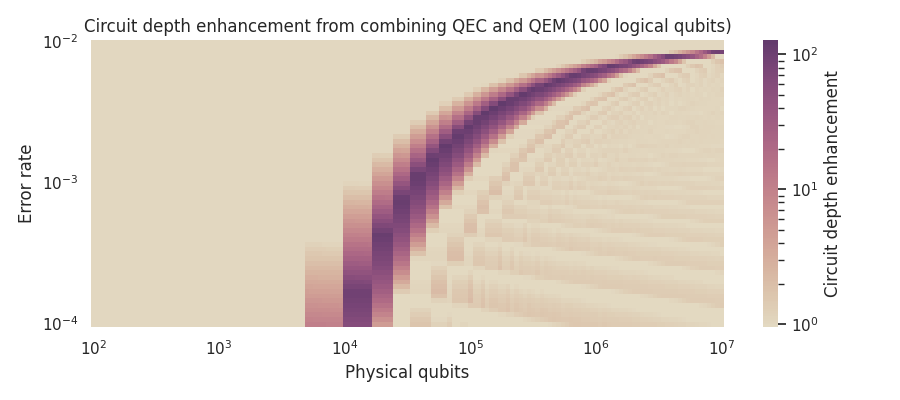}
    \caption{(top) Maximum circuit depth for a circuit with 100 logical qubits and an effective sampling rate of one sample per minute over a variety of physical error rates and qubit numbers. 
    We assume a circuit composed entirely of CNOT gates, a raw two-qubit gate time of 20 nanoseconds, and a surface code cycle time of 970 nanoseconds. 
    Errors are mitigated using probabilistic error cancellation (PEC), or by a combination of surface code quantum error correction and PEC.
    We maximize the achievable depth by optimizing over the choice of whether or not to use error correction, and the code distance.
    (bottom) Increase in circuit depth allowed by this combination of QEC and EM, when compared with a protocol that either uses EM or QEC (with QEC parameters chosen such that the error probability throughout the circuit is less than \(1\%\)). 
    The existence of the dark band reveals a regime where this increase is substantial.
    }
    \label{fig:NISQ_crossover}
\end{figure}

For simplicity, in this example we assume a circuit composed entirely of CNOT gates executed on a superconducting quantum processor similar to the one described in Ref.~\cite{acharyaSuppressing2023}.
We model the error suppression factor of the surface code using the expression
\begin{equation}
    \Lambda = \frac{0.01}{p_{phys}},
    \label{eq:Lambda_def}
\end{equation}
where \(0.01\) is an estimate of the surface code threshold and \(p_{phys}\) is a single parameter that describes the error rate of physical gate and measurement operations~\cite{fowlerLow2019}.
This means that we can estimate the error per surface code cycle as \(0.1 \Lambda^{-(d + 1)/ 2}\), where the constant \(0.1\) is an estimate of the true prefactor.
We can perform CNOT gates using lattice surgery using two qubits operating for \(3\,d\) surface code cycles, so we can therefore estimate the logical error probability for a gate as
\begin{equation}
    p_{logical} =
    \begin{cases}
      p_{phys} & \text{ if \(d=1\)} \\
      1 - \left(1 - 0.1 \frac{0.01}{p_{phys}}^{-(d+1)/2}\right)^{6d} & \text{ if \(d > 1\)},
    \end{cases}
\end{equation}
using the convention that \(d=1\) indicates that we are not using quantum error correction at all.
Using a physical gate time of \(20\) nanoseconds and a surface code cycle time of \(970\) nanoseconds~\cite{acharyaSuppressing2023}, we therefore have
\begin{equation}
    t_{\mathrm{gate}} = 
    \begin{cases}
        20 \cdot 10^{-9} & \text{ if \(d=1\)} \\
        3 \cdot d \cdot 970 \cdot 10^{-9} & \text{ if \(d > 1\)}.
    \end{cases}
\end{equation}
A logical qubit in the surface code requires \(2d^2 - 1\) physical qubits, so we can execute 
\begin{equation}
    N_{\mathrm{parallel}} = \left\lfloor{\frac{N_{phys}}{100 \left( 2 d^2 - 1 \right)}}\right\rfloor
\end{equation}
copies of a \(100\)-logical qubit computation in parallel.
Treating \(p_{logical}\) as the probability of applying a two-qubit depolarizing channel, we can follow Ref.~\cite{temmeError2017}, finding that the overhead incurred by applying probabilistic error cancellation is given by 
\begin{equation}
    \gamma^2 = \frac{1 + 7/8 p_{logical}}{1 - p_{logical}}^{2G},
\end{equation}
where \(G = 49.5 D\) is the number of noisy gates in the circuit.

In this simplified model, we can attempt to quantify the benefit of combining quantum error mitigation and quantum error correction.
To do so, we perform a second calculation to determine the maximum depth achievable when either (i) using physical gates and error mitigation alone or (ii) using quantum error correction alone.
For (i), we calculate the maximum depth as above, restricting \(d=1\).
For (ii), we calculate \(p_{logical}\) as above and choose the largest \(D\) such that the overall probability of an error in the circuit is less than \(1\%\), i.e.,
\begin{equation}
    D_{\text{QEC max}} = \max \left\{ D : 1 - \left( 1 - p_{logical} \right)^{49.5 D} \leq .01 \right\}.
\end{equation}
In the bottom panel of Fig.~\ref{fig:NISQ_crossover}, we plot the ratio of the circuit depth achievable with the combination of EM and QEC compared to the circuit depth achievable using either technique alone.
We see that there is a regime on the boundary between NISQ and full fault-tolerance where the combination of QEC and EM allows for a substantial enhancement in the circuit depth.

\subsection{Circuit division in early fault-tolerance}

In the previous section we explored the ability of quantum error mitigation to boost the maximum depth allowed in a (logical) quantum circuit when being executed on a fixed number of physical qubits and a fixed error rate.
However, this still bounds the circuit depths available to quantum algorithmists on a fixed quantum device to some $D_{\max}$.
In principle one can consider running multiple quantum circuits $j$ of depth $D_j<D_{\max}$, given access to a classical post-processing method that is robust against the failure of some of these circuits.
Perhaps surprisingly, such circuits exist for many problems.
The main contribution of our paper is to explore this ``circuit division'' in the context of on such problem class --- quantum phase estimation (QPE) --- to combine this with optimized error mitigation techniques, and compile the resulting algorithms to a realistic physical implementation.

Before introducing the relevant background on QPE and explaining our results in detail, let us roughly estimate the potential benefit of circuit division in an early-fault-tolerant setting.
\begin{definition}
    A circuit division of a quantum algorithm $\mathcal{A}$ that requires a circuit of $G$ gates corresponds to a set of $M$ ``sub-circuits'' of length $\leq G/R$, and a classical post-processing routine which can be applied to the sub-circuits to solve the problem targeted by $\mathcal{A}$.
\end{definition}
We target phase estimation routines, for which (under some problem constraints~\cite{moitraSuperresolution2015}), this division is allowed at a cost $M\sim R^2$.
For a circuit with \(G\) gates, we can approximate the overall probability for the circuit to execute without error as \(1 - p_{circuit} = 1 - (1 - p_{logical})^G \approx e^{-G p_{logical}}\), where \(p_{logical}\) is the error rate per gate.
Each logical qubit in the surface code requires \(2d^2 + 1\) physical qubits (not accounting for overheads from routing or magic state distillation), where the code distance \(d\) is an odd number that we can choose freely.
For a given physical error rate \(p_{phys}\), we can approximate the error suppression factor of the surface code \(\Lambda\) as in Eq.~\ref{eq:Lambda_def}, which implies an error per logical qubit per cycle proportional to \(\Lambda^{-((d+1)/2)}\).
Without specifying a particular choice of gate and compilation into surface code operations, we can still make the approximation that \(p_{logical} \propto d \Lambda^{-\left(d + 1\right)/2}\) (since most gates require a number of cycles proportional to \(d\) and the error rates involved are small).
The constant of proportionality is determined by the details of the error model and the particular choice of gates, but we can set it to \(1\) to obtain a qualitative understanding, yielding
\begin{equation}
    1 - p_{circuit} \approx \exp \left(-G d \Lambda^{-\left(d + 1\right)/2}\right).
    \label{eq:p_success_sketch}
\end{equation}
This exponential decay puts a cutoff on the circuit sizes that we can accept before the failure probability grows too large.

A protocol that replaces a single execution of a circuit with \(G\) gates by \(R^2\) executions of circuits with \(G/R\) gates increases the overall time required to obtain a solution by a factor of \(R\).
If we demand a constant error rate per circuit and neglect logarithmic terms, then Eq.~\ref{eq:p_success_sketch} suggests that we should take \(G \propto \Lambda^{(d+1)/2}\).
To obtain a concrete estimate, let us consider \(\Lambda = 10\) and \(G = 10^6\), which implies that we need a code distance of \(d=11\).
If we let \(d'\) be the code distance required for a circuit with \(\frac{G}{R}\) gates, then we have \(d - d' \approx 2 \log_{\Lambda} R\).
The number of physical qubits required per logical qubit is approximately proportional to \(d^2\), so taking \(N'_{phys} \propto (d')^2\) and \(N_{phys} \propto d^2\), we can approximate the savings in terms of the number of physical qubits as
\begin{equation}\label{eq:qubit_savings}
    \frac{N'_{phys}}{N_{phys}} \approx 1 - \frac{4 \log_{\Lambda} R}{d} + \frac{4 \log_{\Lambda}^2 R}{d^2}.
\end{equation}
If these approximations hold, then we could achieve a factor of two reduction in the number of physical qubits with \(R \approx 40\), but it would appear that a factor of ten reduction would require \(R\) to be unfeasibly large.
While the model of errors we considered above is oversimplified, we shall show that a more detailed analysis yields a qualitatively similar conclusion.

\section{Background on quantum phase estimation}\label{sec:qpe_background}

The term `quantum phase estimation' (QPE) refers to a family of computational problems related to estimating eigenphases $\phi_j$ of a unitary operator $U$, $U\ket{\phi_j}=e^{i\phi_j}\ket{\phi_j}$, to some target precision $\epsilon$.
In this work, we target the estimation of a single, specific eigenvalue $\phi$; e.g. if $U=e^{iHt}$ for a Hamiltonian $H$, one might target the lowest eigenvalue of $H$.
(Note that much recent work on QPE considers the estimation of multiple eigenvalues, see e.g. Refs.~\cite{obrien2019quantum,sommaQuantum2019,linNearoptimal2020,dutkiewicz2022heisenberg,linHeisenbergLimited2022}.)
Following common practices in metrology~\cite{babbushEncoding2018}, we define `precision' to mean a bound on the Holevo error of the error in our estimator $\widetilde{\phi}$:
\begin{equation}
    \epsilon_H = \sqrt{\E[4\sin^2(\frac{\widetilde{\phi}-\phi}{2})]}
\end{equation}
(as opposed to the computer science literature~\cite{nielsen2001quantum} where one typically requires $P(|\phi-\widetilde{\phi}|\leq\epsilon)\geq (1-p)$ for fixed $p$).
This metrological definition is slightly stricter~\cite{kimmelRobust2015} as it requires bounding the tail of the distribution of the estimator $\widetilde{\phi}_j$.

The problem of determining ground states is known to be hard even with a quantum computer (i.e. QMA-hard) for even $2$-local Hamiltonians \cite{kempeComplexity2006}), but it reduces to a problem solvable by a quantum computer (i.e. one in BQP) when one assumes access to an initial state $|\psi\rangle$ which has large overlap $\langle\psi|\phi\rangle$ with the target eigenstate~\cite{wocjanSeveral2006,gharibianDequantizing2023}.
In this work, we further require that our target eigenvalue $\phi$ is separated from other eigenvalues $\phi_j$ (at least those for which $|\langle\phi_j|\psi\rangle| > 0$) by a gap $\min_j|\phi-\phi_j|=:\Delta \gg\epsilon$.
This is necessary by information-theoretic constraints~\cite{moitraSuperresolution2015} which set $\Delta^{-1}$ as a lower bound on the required evolution by controlled-$U$ in order to resolve an eigenvalue (without an overhead that grows exponentially in the number of nearby eigenvalues).
This in turn places an upper bound on the amount by which we can divide a circuit, which will later coincide with a minimum fidelity requirement on a noisy implementation of $U$.
Under this assumption, one can project the starting state $|\psi\rangle$ onto the eigenstate $|\phi\rangle$ at a cost proportional to $\Delta^{-1}|\langle\psi|\phi\rangle|^{-1}$~\cite{geFaster2019,linNearoptimal2020}.
In this work we absorb this cost onto the cost of initial state preparation.

\begin{definition}[QPE for single eigenstates]
\label{def:problem_QPE}
    Let $U$ be a unitary operator with eigendecomposition $U|\phi_j\rangle=e^{i\phi_j}|\phi_j\rangle$.
    Assume access to a preparation unitary $V_{\ket{\phi}}$ for an eigenstate $|\phi\rangle$ of $U$, and access to a (possibly noisy) implementation $\widetilde{\mathcal{U}}$ of controlled-$U$ ($U_c$), and fix some $\epsilon > 0$.
    The QPE problem is to produce an estimator $\widetilde{\phi}$ of $\phi$ with Holevo error $\epsilon_H\leq \epsilon$.
    We say that the (oracular) cost $\mathcal{T}_\text{tot}$ of solving the QPE problem is the total number of applications of $\widetilde{\mathcal{U}}$ required to implement the estimator.
    $\mathcal{T}$ is the maximum number of applications of $\widetilde{\mathcal{U}}$ in a single quantum circuit.
\end{definition}

In this section, we review the well-established algorithms to solve the noiseless case of this problem (in preparation for adding noise in the remainder of the text):
\begin{definition}[noiseless QPE for single eigenstates]
    \label{def:noiseless_QPE}
    The problem in Definition~\ref{def:problem_QPE} in the absence of noise in the quantum circuit, i.e.~$\widetilde{\mathcal{U}} = U_c^\dagger (.) U_c$.
\end{definition}

Existing quantum algorithms for phase estimation can be divided into two classes: \emph{Quantum Fourier Transform (QFT)-based} methods (also called \emph{parallel} or \emph{entanglement-based}) and \emph{single-control based} methods (also called \emph{sequential}, \emph{iterative} or \emph{single-control}).
In the first class, the phase estimate is obtained from a single run of a circuit with multiple control qubits.
Improving the precision requires increasing the dimension of the control register and the circuit depth.
In the second class, the phase is extracted by classically processing expectation values of multiple simpler circuits.
Better precision can be achieved either by increasing the depth of the circuits, or the number of samples used to estimate the expectation values.
We describe both classes in more details in the following 2 sections.

\subsection{QFT-based QPE}

\topic{QFT-based QPE} 
In this approach we obtain the first $n$ bits of the binary expansion of the phase from a single measurement of an additional $n$-qubit register.
The circuit consists of three stages: first, a probe state is prepared; then, the register is used to control the $U$; finally QFT.
Using $n$ control qubits yields precision of $\epsilon = \mathcal{O}(2^{-n})$ \cite{nielsen2001quantum}.

Possibly the most well-known variant of phase estimation is that presented in Nielsen and Chuang~\cite{nielsen2001quantum}, based on previous work from~\cite{cleve1998quantum}; this is sometimes known in the field as `textbook' QPE.
This algorithm uses two registers, a $\log_2(K)$-qubit control register and a system register.
The first step of the algorithm is preparing a ``probe state'' on the control register (in the textbook algorithm this is done by the Hadamard transform, preparing $\frac{1}{\sqrt{k}}\sum_k^K \ket{k}$).
The second step is applying a repeated time-evolution oracle $U^k$ on the system register, with $k$ being controlled by the control register state. 
Finally, an inverse quantum Fourier transform is applied before measuring the control register in a computational basis.
The output of this measurement is a binary expansion of the phase estimate.

\topic{Optimal probe state}
It has been shown that, in the absence of noise, the estimator constructed following this algorithm is optimal in terms of accuracy when there is no prior information on the phase to estimate \cite{vandamOptimal2007}.
The probe state that optimizes the Holevo variance of the estimator, however, is not the uniform superposition $\frac{1}{\sqrt{K}}\sum_{k=0}^{K-1} \ket{k}$ but rather the state
\begin{equation}
    \ket{s_K} = \sqrt{\frac{2}{K+1}}\sum_{j=0}^{K-1} \sin\left(\frac{j+1}{K +1}\pi\right) \ket{j},
\end{equation}
which can be prepared with cost $\mathcal{O}(\log{K})$ \cite{babbushEncoding2018} (see Fig.~\ref{fig:sinqpecircuit}).
The Holevo error of this estimator is $\tan(\frac{\pi}{K+1})\approx \pi/K$ \cite{berry2000optimal}.
This algorithm has been the most utilized in fault-tolerant application compilation research due to this optimality\cite{babbushEncoding2018,vonburgQuantum2021,leeEven2021}.

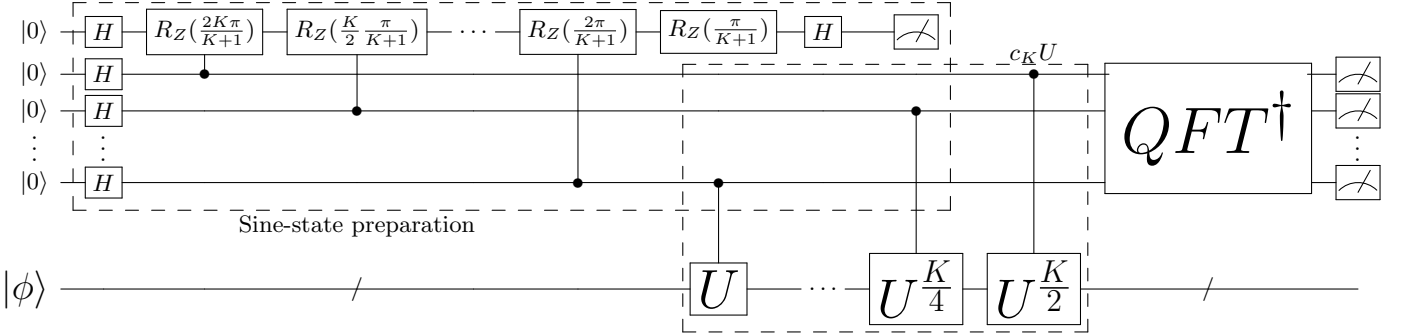
\begin{figure}[ht]
\caption{
    Sin-state QPE circuit with control dimension $K$. First, the state $\ket{s_K}$ is prepared on the control register of $\lceil\log_2K \rceil$ qubits
    (conditional on $1$ measured the ancillary qubit)
    , and the system register is prepared in state $\ket{\phi}$. Then, controlled unitary $c_K U = \sum_{k=0}^{K-1} \ket{k}\bra{k} \otimes U^k$ is applied. Finally, $QFT^\dagger$ is applied on the control register, and the register is measured in the computational basis to obtain a bit string $x$. The oracular cost is $K-1$.
}

    \begin{equation*}
    \Qcircuit @C=1em @R=.1em {
     \lstick{\ket{0}} &\gate{H}&\gate{R_Z(\frac{2K\pi}{K+1})}&\gate{R_Z(\frac{K}{2}\frac{\pi}{K+1})}&\push{\rule{0.15em}{0em}\dots\rule{0.15em}{0em}}\qw &\gate{R_Z(\frac{2\pi}{K+1})}&\gate{R_Z(\frac{\pi}{K+1})}& \gate{H}&\meter&\dstick{c_K U} \\
    \lstick{\ket{0}} & \gate{H}&\ctrl{-1}&\qw&\qw &\qw&\qw&\qw&\qw& \ctrl{5} & \multigate{3}{\text{\Huge $QFT^\dagger$}}& \meter\\ 
    \lstick{\ket{0}} &\gate{H}&\qw&\ctrl{-2}&\qw &\qw&\qw&\qw& \ctrl{4} &\qw& \ghost{\text{ \Huge $QFT^\dagger$}}& \meter\\
    \lstick{\rotatebox{90}{$\cdots$}\ }&\push{\rotatebox{90}{$\cdots$}}&&&&&&&&&&\rotatebox{90}{$\cdots$}\\
    \lstick{\ket{0}} & \gate{H}&\qw&\qw&\qw &\ctrl{-4}&\ctrl{2} &\qw&\qw&\qw& \ghost{\text{ \Huge $QFT^\dagger$}}& \meter\\ 
    \push{\rule{0em}{2em}}&&&\mbox{\text{Sin-state preparation}}\\
    \lstick{\text{\Large$\ket{\phi}$}}&\qw&\qw&{/}\qw&\qw&\qw& \gate{\text{\huge $U$}} & \push{\rule{0.15em}{0em}\dots\rule{0.15em}{0em}}\qw & \gate{\text{\huge $U^{\frac{K}{4}}$}} & \gate{\text{\huge $U^{\frac{K}{2}}$}} & \qw{/}&\qw
    \gategroup{1}{2}{5}{9}{1em}{--}
    \gategroup{2}{7}{7}{10}{0.6em}{--}
    }
    \end{equation*}

    This circuit produces a random variable $x \in {0, 1, ..., K-1}$ with probability distribution
    \begin{equation} \label{eq:sinqpe_noisless_prob_distribution}
    P^{(SinQPE)}(x | \phi) = \frac{\sin^2\frac{\pi}{K+1}}{K(K+1)} \frac{1+\cos[(K+1)(\phi - 2\pi \frac{x}{K})]}{(\cos (\phi - 2\pi \frac{x}{K}) - \cos\frac{\pi}{K+1})^2}.
    \end{equation}
\label{fig:sinqpecircuit}
\end{figure}

\begin{algorithm}
\label{alg:sine-state-single-circuit}
    [Single circuit sin-state phase estimation algorithm]
    Input: target precision $\epsilon_\mathrm{t}$, oracle access to initial state $\ket{\phi}$ and $cU$.
    \begin{enumerate}
        \item Let $n = \lceil\log_2 (\pi\arctan(\epsilon)^{-1}-2) \rceil$, $K = 2^n$.
        \item Run SinQPE circuit (Fig.~\ref{fig:sinqpecircuit}) with control dimension $K$ to obtain $x$. 
        \item Output $\widetilde\phi = 2\pi \frac{x}{K}$.
    \end{enumerate}
    [Note that it is also possible to choose $K = \left\lceil\arctan(\frac{\pi}{\epsilon_\mathrm{t}}-2)\right\rceil\neq 2^n$. This is discussed later in Fig.~\ref{fig:sinqpecircuit_arb_dim} where we explicitly construct the sin-state QPE circuit for general control dimension $K\in\NN$ with a log-log overhead.]
\end{algorithm}
We distinguish here between the ``target precision'' $\epsilon_\mathrm{t}$ (which is the number given to the algorithm that defines the input parameter $\Delta$), and the actual precision $\epsilon$ of the algorithm.
As we will see in Sec.~\ref{sec:qpe-global-depol}, these parameters are no longer the same in the presence of noise.

\begin{theorem} 
\label{thm:sinqpe-noiseless}
\cite{berry2000optimal}
    Alg.~\ref{alg:sine-state-single-circuit} with $\epsilon_\mathrm{t}=\epsilon$ solves Problem~\ref{def:noiseless_QPE} with cost $\mathcal{T}_{\mathrm{tot}} = \mathcal{T} = K-1 \approx \pi/\epsilon$, the lowest possible $\mathcal{T}_{\mathrm{tot}}$.
\end{theorem}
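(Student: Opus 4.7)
The plan is to establish three facts: (i) the Holevo error of the estimator output by Algorithm~\ref{alg:sine-state-single-circuit} is at most $\epsilon$; (ii) the single-circuit cost equals $K-1$, so $\mathcal{T}_{\mathrm{tot}} = \mathcal{T} = K-1$; and (iii) no phase-estimation protocol achieves strictly better leading-order cost. Facts (i) and (ii) reduce to direct computations with the explicit distribution in Eq.~\eqref{eq:sinqpe_noisless_prob_distribution} and the circuit in Fig.~\ref{fig:sinqpecircuit}, while (iii) requires a matching metrological lower bound and is the main difficulty.

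For (i), I would substitute $\widetilde{\phi} = 2\pi x/K$ into the Holevo error and evaluate $\E[4\sin^2((\widetilde{\phi}-\phi)/2)]$ against $P^{(\text{SinQPE})}(x\mid\phi)$. The key observation is that the distribution is a translation-invariant Fej\'er-like kernel in $\phi - 2\pi x/K$, so the sum collapses to a standard finite trigonometric identity that evaluates, along the lines of \cite{berry2000optimal}, to $\tan^2(\pi/(K+1))$. Hence $\epsilon_H = \tan(\pi/(K+1))$. The choice of $K$ in Step~1 of the algorithm---whether one takes the minimal integer with $K+1 \geq \pi/\arctan(\epsilon_t)$ directly, or rounds up to the next power of $2$ for ease of compiling $QFT^\dagger$---ensures $\tan(\pi/(K+1)) \leq \epsilon_t = \epsilon$.

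For (ii), the oracular cost is read off Fig.~\ref{fig:sinqpecircuit}: the controlled operator $c_K U = \sum_{k=0}^{K-1} \ket{k}\bra{k}\otimes U^k$ is compiled by applying $U^{2^j}$ controlled on the $j$-th ancilla qubit, summing to $\sum_{j=0}^{n-1} 2^j = K-1$ queries to controlled-$U$. Since the algorithm runs this circuit exactly once and performs no further invocations of $\widetilde{\mathcal{U}}$, $\mathcal{T}_{\mathrm{tot}} = \mathcal{T} = K-1$. Combining with (i) and expanding $\arctan(\epsilon) = \epsilon + O(\epsilon^3)$ for small $\epsilon$ yields $\mathcal{T}_{\mathrm{tot}} = \pi/\epsilon - 1 + O(\epsilon)$, matching the stated $\approx \pi/\epsilon$.

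The main obstacle is (iii). Claiming that $\pi/\epsilon$ is the \emph{lowest possible} oracular cost requires a matching lower bound across all (possibly adaptive and randomized) phase-estimation strategies built from $T$ queries to controlled-$U$. This is the Heisenberg-limit bound of quantum metrology, which gives $\epsilon_H \geq \tan(\pi/(T+2))$, first established for noiseless phase estimation in \cite{berry2000optimal} and refined in \cite{van2007optimal}. Inverting yields $T \geq \pi/\epsilon - 2 + O(\epsilon)$, so that the sine-state algorithm is optimal up to the subleading additive constants absorbed into the ``$\approx$''. The proof of this lower bound is the only part of the argument that is not a direct computation; it proceeds by reducing general strategies to a single covariant POVM acting on a $(T+1)$-dimensional control subspace and optimizing the probe state analytically over that subspace, where the sine-state saturates the bound.
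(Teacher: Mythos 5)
Your proposal is correct and follows essentially the same route as the paper, which gives no explicit proof of this theorem but defers to \cite{berry2000optimal} for the Holevo error $\tan(\pi/(K+1))\approx\pi/K$ of the sine-state estimator, reads the cost $K-1$ off the circuit, and invokes the Heisenberg limit $\epsilon\geq\pi/\mathcal{T}_{\mathrm{tot}}$ of App.~\ref{sec:fisher_information} (saturated by $P^{(\mathrm{SinQPE})}$) for optimality. The only cosmetic mismatch is that under the paper's definition $\epsilon_H=\sqrt{\E[4\sin^2((\widetilde\phi-\phi)/2)]}$ the exact value is $2\sin(\pi/(2(K+1)))$ rather than the Holevo-variance value $\tan(\pi/(K+1))$ you quote; both equal $\pi/K+O(K^{-3})$ and so fall within the theorem's ``$\approx$''.
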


\subsection{Single-control QPE}

The $|\log(\epsilon)|$-qubit control register is not essential for a quantum speedup in quantum phase estimation.
This has been known since it's conception; Kitaev's original presentation of quantum phase estimation~\cite{kitaevQuantumMeasurementsAbelian1995} used only a single control qubit, a semi-classical quantum Fourier transform using only one control qubit was presented in Ref.~\cite{griffithsSemiclassical1996} around the same time, and though Shor's original algorithm~\cite{shorAlgorithmsQuantumComputation1994} used a control register, this was simplified shortly after~\cite{moscaHidden1999, zalkaFast1998, parkerEfficient2000, beauregardCircuit2003}.
These original algorithms relied on performing the QFT sequentially; reading bits of $\phi$ one-by-one via the single control qubit whilst partially projecting the system register into the corresponding eigenstate $|\phi\rangle$.
This is unnecessary when $|\phi\rangle$ is already prepared in an eigenstate, in which case the measurement does not affect the system register.
Discarding and re-preparing the system register separates the algorithm into classical post-processing of a series of single-qubit Hadamard tests, which can be advantageous in the presence of noise.
However, as an optimal probe state is no longer prepared, these single-control QPE algorithms do not obtain the Heisenberg limit~\cite{higginsEntanglementfree2007}.
(This is not an issue for NP algorithms such as factoring as one can classically confirm the result and repeat till success.)
This problem was solved up to a $\log^*$ factor by the maximum likelihood algorithm of Ref.~\cite{svore2013faster}, and later up to a constant factor by the robust phase estimation algorithm (RPE) of Ref.~\cite{kimmelRobust2015}, and the Bayesian phase estimation algorithm of Ref.~\cite{wiebeEfficient2016}.
The RPE algorithm, which repeats lower-order Hadamard tests multiple times to increase the confidence in more significant bits of the phase $\phi$ was further optimized Ref.~\cite{russo2021evaluating} and Ref.~\cite{belliardoAchieving2020}.
The latter paper provided an analytic optimization of the hyperparameters of the RPE algorithm that we will use in this work, proving a separation of around a factor $25$ from the strict Heisenberg limit $\mathcal{T}_{\text{tot}}=\pi/\epsilon$.
Indeed, this limit cannot be achieved with only a single control qubit; it was shown in Ref.~\cite{najafiOptimum2023} that this requires two control qubits (and can be achieved with such).

The replacement of the control register by a single qubit, and the circuit depth reduction from going to a single qubit has generated much interest in single-control QPE methods for NISQ or early-fault-tolerant research.
This was made stronger by the demonstration that one can remove control qubits in some cases~\cite{luAlgorithms2021}, which leads to a natural error mitigation strategy via verification of the initial state~\cite{obrienError2021}.
However, the restriction to start with an eigenstate is impractical for quantum simulation applications, unless one allows access to state preparation methods such as those in Refs.~\cite{geFaster2019,linNearoptimal2020} (which require additional control registers).
In Ref.~\cite{obrien2019quantum}, one of the authors demonstrated that a) single-control methods do not require an initial state, and b) that one has a freedom of choice to trade between shorter QPE circuits and more repetitions (with a quadratic overhead as one is reduced to the sampling-noise limit).
This freedom of choice is ultimately bounded by the gap between eigenenergies, which was shown earlier in Ref.~\cite{moitraSuperresolution2015}; the problem of phase estimation of a continuous (or near-continuous) spectrum was formalised as the quantum eigenvalue estimation problem in Ref.~\cite{sommaQuantum2019}.
Further work based on Bayesian~\cite{bergEfficient2021} and integral transform~\cite{roggeroSpectral2020} among other methods appeared, however the Heisenberg limit was not achieved with single-control methods without eigenstate access till Ref.~\cite{linHeisenbergLimited2022} and separately in Ref.~\cite{dutkiewicz2022heisenberg}.
This was optimized further in Refs.~\cite{dingEven2023, dingSimultaneous2023, wangQuantum2023}.

For future use, we now state the RPE algorithm as taken from Ref.~\cite{belliardoAchieving2020}.

\begin{figure}[ht]
    \label{fig:hadamard_test}
    \begin{equation*}
    \mbox{
        \Qcircuit @C=1em @R=0.1em { 
            \lstick{\ket{0}} & \gate{H}&\ctrl{2} &\gate{H}& \meter\\ 
            \push{\rule{0em}{2em}}&&&&&&&&&&&\\
             \lstick{\text{\Large$\ket{\phi}$}}&{/}\qw& \gate{\text{\huge $U^k$}} &\qw{/} & \qw\\
        }
    }
    \mbox{
        \Qcircuit @C=1em @R=0.1em { 
            \lstick{\ket{0}} & \gate{H}&\ctrl{2} &\gate{S^\dagger}&\gate{H}& \meter\\ 
            \push{\rule{0em}{2em}}&&&&&&&&&&&\\
             \lstick{\text{\Large$\ket{\phi}$}}&{/}\qw& \gate{\text{\huge $U^k$}} &\qw{/} & \qw&\qw\\
        }
    }
    \end{equation*}
    \caption{Hadamard test circuits with exponent $k$.
    Fist, a $\ket{+}$ state is prepared on the control qubit, and state $\ket{\phi}$ is prepared on the system register. Then, ${cU}$ is applied $k$ times. Finally, control qubit is measured in the X basis to obtain a single bit $x$. The procedure is repeated with for measurement in the Y basis, to obtain another bit $y$.
    The output is defined as $Z = (-1)^x + i(-1)^y$.
    The oracular cost is $2k$.
    This circuit produces a random variable $Z \in \{\pm 1 \pm i\}$ with probability distribution}
    \begin{equation}
    \label{eq:rpe_prob}
        P^{(HT)}((-1)^x  + i (-1)^y | \phi) = \frac{1+(-1)^x \cos(k\phi)}{2}\frac{1+(-1)^y \sin(k\phi)}{2}.
    \end{equation}
\end{figure}
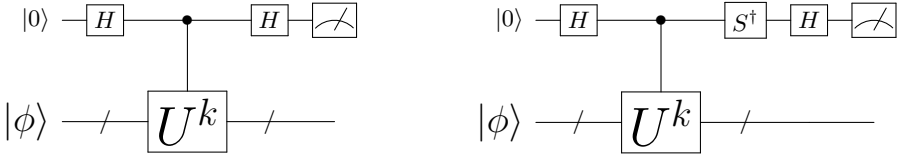

\begin{algorithm}
\label{alg:parametrised_RPE}
    [Robust Phase Estimation (RPE)]
    Input: number of orders $J$, vector of numbers of samples $\vec{M} \in \NN^{J}$.
    \item For $j =0, 1, 2, 3, ..., J-1$:
        \begin{enumerate}
            \item Fix $k = 2^j$.
            \item Run $M_j$ repetitions of the circuits of Fig.~\ref{fig:hadamard_test} with exponent $k$ to obtain a set of bits $\{Z_{i}\}$.
            \item
            Calculate the average $\bar{Z}$ of measurement results $\{Z_{i}\}$.
            \item Compute
            $\widetilde{\theta}^{(j)}=\mathrm{Arg}[\bar{Z}]\in[0,2\pi)$.
            \label{step:singlephase_deftheta}
            \item If $j=0$, set $\widetilde{\phi}^{(0)}=\widetilde{\theta}^{(0)}$.
            \item \label{step:update-QPE} Else, set $\widetilde{\phi}^{(d)}$ to be the unique value in the interval  $[\widetilde{\phi}^{(j-1)}-\frac{\pi}{k},\widetilde{\phi}^{(j-1)}+\frac{\pi}{k})$ (with periodic boundaries) such that 
            \begin{equation}
            k\widetilde{\phi}^{(j)} = \widetilde{\theta}^{(j)}\mod 2\pi.
            \end{equation}
        \end{enumerate}
        \item Return $\widetilde{\phi}=\widetilde{\phi}^{(J-1)}$.
\end{algorithm}

In the above algorithm we do not fix the values of the maximum order $J$, nor the number of repeat measurements at each order $M_j$.
This is because these numbers will change in the presence of global depolarizing noise in the next section.
However, in the noiseless setting, near-optimal choices of $M_j$ and $J$ are known thanks to Ref.~\cite{kimmelRobust2015}.
\begin{algorithm}
\label{alg:RPE_noiseless}
    [RPE noiseless hyperparameters]
    Input: target precision $\Delta$, metaparameters $\alpha, \beta$.
    Run Alg.~\ref{alg:parametrised_RPE} with $J = \lceil\log_2 (1/\Delta)\rceil$, and $M_j=\alpha(J-j-1) + \beta$.
\end{algorithm}
It was proven in Ref.~\cite{kimmelRobust2015} that for a range of $\alpha$, $\beta$, RPE achieves the Heisenberg limit up to a constant multiplicative factor; $\mathcal{T}_{\mathrm{tot}}\leq c\epsilon^{-1}$.
Near-optimal values for these constant factors were found and bounds tightened in Ref.~\cite{belliardoAchieving2020} yielding the following result:
\begin{theorem}\cite{belliardoAchieving2020}
\label{thm:rpe_noiseless}
    Fix target precision $\epsilon_\mathrm{t} > 0$. With input parameters $\Delta=0.409\epsilon_\mathrm{t}$, $\alpha= 4.0835$, $\beta=11$, RPE solves Def.~\ref{def:noiseless_QPE} for $\epsilon_\mathrm{t}=\epsilon$ with oracular cost $\mathcal{T}_{\mathrm{tot}}\leq 24.26\pi\epsilon^{-1}$ and maximal depth $\mathcal{T}_{\text{max}} \leq 2\Delta^{-1} \leq 5\epsilon ^{-1}$.
\end{theorem}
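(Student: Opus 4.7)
The plan is to replicate the strategy of Ref.~\cite{belliardo2020achieving}: at each order $j$, I will first bound the probability that the empirical Hadamard-test argument deviates from $k\phi$; then I will union-bound the joint probability that the iterative lifting in step~\ref{step:update-QPE} of Alg.~\ref{alg:parametrised_RPE} succeeds at all $J$ orders; finally I will convert this into a Holevo-error bound and evaluate the total oracular cost as a pair of geometric sums.

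For the per-order bound, Eq.~\eqref{eq:rpe_prob} gives $\E[Z]=e^{ik\phi}$ with $k=2^j$, since $\E[(-1)^x]=\cos(k\phi)$ and $\E[(-1)^y]=\sin(k\phi)$ are marginal means of bounded independent $\pm 1$ variables. A Hoeffding bound applied separately to the real and imaginary parts of $\bar{Z}$ yields $\Pr[|\bar Z - e^{ik\phi}|>\delta]\leq 4\exp(-M_j\delta^2/4)$, and the Lipschitz behaviour of $\arg$ away from the origin then gives $\Pr[|\widetilde\theta^{(j)}-k\phi\bmod 2\pi|>\eta]\leq 4\exp(-c M_j \eta^2)$ for an explicit constant $c$.

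Next I will identify the ``good decoding'' event at order $j$ as $|\widetilde\theta^{(j)}-k\phi\bmod 2\pi|<\pi/2$. A short induction shows that, conditioned on good decoding at all orders $\leq j$, the lifted estimate satisfies $|\widetilde\phi^{(j)}-\phi|<\pi/(2k)$, which is precisely the window required for the lifting at order $j{+}1$ to land on the correct $2\pi/k$-branch. A union bound over $j=0,\ldots,J-1$ controls the total failure probability by $\sum_j 4\exp(-c M_j(\pi/2)^2)$; the schedule $M_j=\alpha(J{-}j{-}1)+\beta$ is tuned to make this sum geometrically small while allocating more samples to low-$k$ orders, where a single wrong bit propagates furthest through the remaining lifts. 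I then bound the Holevo error by splitting $\E[4\sin^2((\widetilde\phi-\phi)/2)]$ into a contribution on the good event, controlled by the final-order concentration $|\widetilde\phi-\phi|\lesssim 2^{-(J-1)}/\sqrt{M_{J-1}}$, and a contribution on its complement, where the integrand is bounded by $4$ and multiplied by the (tiny) union-bound failure probability. Both can be forced below $\epsilon_t^2$ by the stated $(\alpha,\beta,\Delta/\epsilon_t)$ triple. The total oracular cost is $\mathcal{T}_{\mathrm{tot}}=\sum_{j=0}^{J-1} 2\cdot 2^j M_j = 2\alpha(2^J-J-1)+2\beta(2^J-1)\leq 2(\alpha+\beta)\cdot 2^J \leq 2(\alpha+\beta)/\Delta$, which with the stated numbers is bounded by $24.26\pi/\epsilon$; the maximum single-circuit cost is simply the order-$(J{-}1)$ depth $2\cdot 2^{J-1}=2^J\leq 2/\Delta$.

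The main obstacle is pinning down the \emph{sharp} numerical constants $24.26\pi$ and $0.409$, not just the correct $\Theta(1/\epsilon)$ scaling. A crude union bound of the type sketched above reproduces the scaling but loses several factors; obtaining the advertised prefactor requires, as in Ref.~\cite{belliardo2020achieving}, a more careful treatment of the exact distribution of $\arg(\bar Z)$ under the noiseless Hadamard-test model and a joint numerical optimization of $\alpha$, $\beta$, and the ratio $\Delta/\epsilon_t$ against a tight Holevo objective.
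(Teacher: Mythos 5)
The paper does not actually prove Theorem~\ref{thm:rpe_noiseless}; it imports the statement from Ref.~\cite{belliardo2020achieving}, so your sketch can only be judged against the standard RPE analysis it is reconstructing. Your cost accounting is correct and reproduces the stated numbers: $\mathcal{T}_{\mathrm{tot}}=\sum_{j}2\cdot 2^{j}M_j=2\alpha(2^J-J-1)+2\beta(2^J-1)\lesssim 2(\alpha+\beta)/\Delta$ and $\mathcal{T}_{\max}=2^J\leq 2\Delta^{-1}$. However, two steps in the middle of the argument would fail as written.

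First, the good-decoding threshold $|\widetilde{\theta}^{(j)}-k\phi \bmod 2\pi|<\pi/2$ does not close the induction. Writing $\delta_j$ for the error of $\widetilde{\theta}^{(j)}$ as an estimate of $k_j\phi$, the correct branch at order $j$ lies inside the half-width-$\pi/k_j$ window around $\widetilde{\phi}^{(j-1)}$ only if $2\delta_{j-1}+\delta_j<\pi$; taking $\delta_j<\pi/2$ uniformly gives $3\pi/2$, so the correct candidate can fall outside the window even when every order ``succeeds.'' The standard fix in Ref.~\cite{kimmel2015robust} (and the origin of the $1/\sqrt{8}$ noise tolerance quoted elsewhere in this paper) is to demand $\delta_j<\pi/3$ at every order.

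Second, and more consequentially, your Holevo-error bound on the failure event --- integrand bounded by $4$ times the total union-bound failure probability --- cannot yield the theorem with a constant $\beta$. With $M_j=\alpha(J-j-1)+\beta$ the union-bounded total failure probability is $\sum_j e^{-c\beta}(e^{-c\alpha})^{J-j-1}$, a constant independent of $J$, so this route leaves an $\epsilon$-independent additive floor in $\epsilon_H^2$ rather than something $\leq\epsilon_t^2$. The missing idea is that a branch-selection failure at order $j$ displaces the final estimate by only $O(2^{-j})$, because all subsequent lifts remain confined to nested windows; the failure contribution to the mean squared error is therefore $\sum_j P_{\mathrm{fail},j}\,O(4^{-j})$, and it is exactly this geometric weighting that makes the linear-in-$(J-j)$ schedule with constant $\beta$ sufficient and delivers the constant-factor Heisenberg scaling $\mathcal{T}_{\mathrm{tot}}=O(\epsilon^{-1})$. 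Without it you would need $\beta=\Omega(\log(1/\epsilon))$, recovering only the $\epsilon^{-1}\log\epsilon^{-1}$ scaling of the earlier maximum-likelihood approach of Ref.~\cite{svore2013faster}. You are right that the sharp constants $0.409$ and $24.26\pi$ come from the joint numerical optimization carried out in Ref.~\cite{belliardo2020achieving}.
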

Note again the distinction between the input parameter $\epsilon_\mathrm{t}$ and noise bound $\epsilon$ that was made in Alg.~\ref{alg:sine-state-single-circuit}.
The above result is an analytic bound that is not perfectly tight; numerically we find that with the chosen parameters RPE achieves error $\epsilon\approx 5\pi/\mathcal{T}_{\mathrm{tot}}$.

\subsection{Information theoretic bounds}\label{sec:fisher_information}

To benchmark the algorithms for Problem~\ref{def:problem_QPE}, it is useful to know what lower bounds exist.
Every phase estimation algorithm has some distribution $P^{(\mathrm{alg})}(\widetilde\phi|\phi)$ of outputs $\widetilde\phi$ depending on the true phase $\phi$.
We are interested in the average Holevo error
\begin{equation}
    \epsilon_H^{(\mathrm{alg})}(\mathcal{T}_{\mathrm{tot}}) = \sqrt{\int_0^{2\phi}\frac{d\phi}{2\pi}\int_0^{2\pi}d\widetilde\phi P^{(alg)}(\widetilde\phi|\phi) 4\sin^2\bigg(\frac{\widetilde\phi-\phi}{2}\bigg)}.
\end{equation}
We want to optimize this and find the minimum error that can be achieved at a fixed cost $\mathcal{T}_{\mathrm{tot}}$, but this is not practical as we do not have a simple form of $P^{(\mathrm{alg})}$ for an arbitrary algorithm.
Instead, one can fix a choice of quantum circuits (e.g. repetitions of SinQPE with different control dimensions $K$, or repetitions of single Hadamard tests on $U^k$ for different $k$), and bound the performance of any classical algorithm on the classical output distributions $\vec{x}$ of these circuits.
The Cramer-Rao theorem~\cite{cramerMathematical1946,raoInformation1992}, bounds the Holevo from below by the inverse of the Fisher information
\begin{equation}
    \epsilon_H^2 \geq \frac{1}{\mathcal{I}(\phi)}, \qquad \mathcal{I}(\phi)= \int dx P(\vec{x}|\phi)\Bigg[\frac{d\log P(\vec{x}|\phi)}{d\phi}\Bigg]^2.
\end{equation}
For the Hadamard test, it turns out that $\mathcal{I}(\phi)$ is independent of $\phi$~\cite{dutkiewicz2022heisenberg}, however this is not true for SinQPE.
To compensate for this, it is common to take the Bayesian Cramer-Rao bound~\cite{cramerMathematical1946,casellaStatistical2002}, where the Fisher Information is replaced by the average over all possible phases
\begin{equation}
\label{eq:fisher_info}
    \epsilon_H^2 \geq \frac{1}{\bar{\mathcal{I}}}, \qquad \bar{\mathcal{I}} = \int_0^{2\phi}\frac{d\phi}{2\pi} \mathcal{I}(\phi)
\end{equation}
We will invoke this definition and the Cramer-Rao bound for fixed probability distributions later in this work, in order to bound the error in a QPE algorithm from above.
As the Fisher information is additive, we can use invoke these results on a single circuit at a time and add their results.
We will find that for all our considered algorithms, the lowest cost is achieved by running an increasing number $M$ of circuits with depth $\mathcal{T}_*$ that maximises the specific Fisher Information $\bar{\mathcal{I}}(\mathcal{T})/\mathcal{T}$ \cite{dutkiewicz2022heisenberg, kolodynski2013efficient},
\begin{equation}
\label{eq:specific_fi_lower_bound}
    \epsilon_H^2 \leq \frac{1}{\bar{\mathcal{I}}(T) M} = \frac{T}{\bar{\mathcal{I}}(\mathcal{T}) \mathcal{T}_{\mathrm{tot}}} \leq \frac{\mathcal{T}_*}{\bar{\mathcal{I}}(\mathcal{T}_*) \mathcal{T}_{\mathrm{tot}}}.
\end{equation}

\topic{quantum lower bounds, Heisenberg limit, GDN limit}
In this work we fix our quantum circuits and optimise over the parameters of the probability distributions, but in some cases it is possible to optimize the Fisher information over all choices of measurements on all states generated using a fixed oracle.
This yields the channel Fisher information, which characterises the black box in which the parameter $\phi$ is encoded.
For the noiseless version of QPE, Problem \ref{def:noiseless_QPE}, the channel Fisher information can be taken to yield the ultimate bound on the error: the \emph{Heisenberg limit} $ \epsilon \geq \frac{\pi}{\mathcal{T}_{\mathrm{tot}}}$ \cite{giovannettiQuantumenhanced2004}.
This is saturated by the probability distribution $P^{(SinQPE)}$ (Eq.~\ref{eq:sinqpe_noisless_prob_distribution})~\cite{vandamOptimal2007}.
For a noisy oracle subject to global depolarizing noise with rate $\gamma$ (see App.~\ref{sec:qpe-global-depol}), the channel Fisher information can be bounded above, yielding $\epsilon \geq \sqrt{\frac{\gamma}{\mathcal{T}_{\mathrm{tot}}}}$ \cite{dutkiewicz2022heisenberg}.

\section{Quantum phase estimation with global depolarizing noise}\label{sec:qpe-global-depol}

Hardware noise in quantum devices can take many forms, and some knowledge of the noise model is often required in order to perform error mitigation. 
In this appendix we focus on mitigating a simple, well characterized noise model in a QPE experiment; global depolarizing noise with a fixed strength $\gamma$.
This will allow us to more rigorously analyse and compare the performance of different estimators; we will consider the more general form in App.~\ref{sec:unbiasing_qpe}.

Global depolarizing noise (GDN)~\cite{kingCapacity2002,nielsen2001quantum,obrien2019quantum,vovroshSimple2021,hugginsVirtual2021} is a stochastic noise model where with some probability $p$ a random global unitary is applied to the system, replacing the system's state with the maximally mixed state.
This maps our controlled unitary $U_{\mathrm{c}}$ to the quantum channel $\widetilde{\mathcal{U}}_\gamma^{(GDN)}$, defined by
\begin{equation}\label{eq:GDN_def}
    \widetilde{\mathcal{U}}_\gamma^{(GDN)}[\rho] = e^{-\gamma} U_\text{c} \rho U_\text{c}^\dag + (1-e^{-\gamma}) \rho_\mathbb{1}.
\end{equation}
This definition is exact as global depolarizing noise process commutes with any subsequent unitary operations.
This allows us to calculate the effect of repeated GDN channels interleaved by arbitrary operations as
\begin{equation}\label{eq:GDN_k}
    {(\widetilde{\mathcal{U}}^{(GDN)}_\gamma)}^{k}[\rho] = 
    e^{-\gamma\,k} (U_\text{c})^k \rho (U_\text{c}^\dag)^k + (1-e^{-\gamma\,k}) \rho_\mathbb{1} = \widetilde{\mathcal{U}}^{(GDN)}_{k\gamma}[\rho]
\end{equation}
We see that this yields an exponential decay in circuit fidelity regardless of the unitary; with probability $1-e^{-\gamma k}$ the system yields the maximally-mixed state, and all useful information about the target problem (e.g. in our case information about the phase $\phi$) is lost.
This makes global depolarizing noise a particularly simple noise model to analyse, though it comes with an exponential decay in fidelity that cannot be mitigated or even error corrected without an overhead that is exponential in $k$~\cite{caiQuantum2023}.
This is in contrast to E.g.~\emph{local} i.i.d. depolarizing noise on individual qubits.
These do not commute with unitary operations, and so a simple form like Eq.~\eqref{eq:GDN_k} cannot be found, but this noise model allows error correction when error rates are below the fault-tolerant threshold~\cite{fowlerLow2019}.

In this appendix we optimize and compare two different variants of QPE robust to global depolarizing noise.
One of these will be based on the optimized RPE algorithm of Ref.~\cite{belliardoAchieving2020}, while for the other we develop a robust version of sin-state QPE, using maximum-likelihood estimation.
The main outcome of this analysis is a black box that takes in an error $\epsilon$ and fixed noise rate $\gamma$, and tells the number of calls to the oracle $\mathcal{T}_\text{tot}$ that would be required to solve
\begin{definition}[QPE in the presence of depolarising noise]
    The problem in Definition~\ref{def:problem_QPE} in the presence of global depolarizing noise with a known, fixed error rate $\gamma$ per application of the unitary oracle, i.e.~$\widetilde{\mathcal{U}} = {\widetilde{\mathcal{U}}^{(GDN)}_\gamma}$.
\end{definition} 
This will be later used in App.~\ref{sec:compilation} to optimize surface code parameters according to the framework defined in Fig.~\ref{fig:design_principles}.

\subsection{Single-control QPE in the presence of noise}

If we directly apply RPE (Alg.~\ref{alg:RPE_noiseless}) in the presence of depolarising noise, the (Holevo) error $\epsilon_H$ in the estimate will plateau as the target precision $\epsilon_\mathrm{t}$ is decreased, in spite of the growing cost.
This is because the algorithm requires running ever-deeper circuits, whose output converges exponentially quickly to uniform noise.
In Ref.~\cite{kimmelRobust2015}, it was shown that the RPE algorithm is robust to the probabilities $p_j(Z)$ being corrupted by an arbitrary noise term of strength no greater than $\frac{1}{\sqrt{8}}$.
For global depolarizing noise this holds for all distributions $p_j$ till $j=j_{\text{max}}=\lfloor \log_2[\frac{\ln(2/\sqrt{8})}{\gamma}]\rfloor$, which estimates $\phi$ with high confidence to error $\epsilon \leq \frac{2\pi}{2^j}\sim\mathcal{O}(\gamma)$.
The algorithm will work with reduced confidence slightly beyond here, but will quickly start to fail.
To adapt the algorithm to the presence of noise, we can instead stop at some maximum depth around $\mathcal{T}_{\text{max}} \sim 2^{j_{\text{max}}}$, and repeat these last circuits many times to increase precision at the sampling noise limit (as was suggested in Refs.~\cite{kolodynski2013efficient,belliardoAchieving2020,dutkiewicz2022heisenberg}).
To adapt RPE to GDN we identify two distinct regimes: firstly when $\epsilon \gg \gamma$, in which case the number of samples are rescaled to ensure the final error is within the required precision, and secondly when $\epsilon \lessapprox \gamma$, in which case the maximum depth is fixed to some $\mathcal{T}_*$, and most samples are taken at this maximum depth.

As discussed in section \ref{sec:fisher_information}, the optimal maximal depth $\mathcal{T}_*$ can be quantified by specific Fisher Information \cite{dutkiewicz2022heisenberg, kolodynski2013efficient}. 
In the case of the Hadamard test with global depolarising noise, the Fisher Information $\bar{\mathcal{I}}_\gamma(\mathcal{T}) = e^{-2\gamma \mathcal{T}} \mathcal{T}^2$ can easily be calculated by substituting the probability distribution from Eq.~\eqref{eq:rpe_prob} into Eq.~\eqref{eq:fisher_info}. 
The depth that maximises $\bar{\mathcal{I}}_\gamma(\mathcal{T})/\mathcal{T}$ is $\mathcal{T}_* = 1/2\gamma$ \cite{dutkiewicz2022heisenberg}.
Asymptotically, the best performance we can get is given by Eq.~\eqref{eq:specific_fi_lower_bound}
\begin{equation}
\label{eq:noisy rpe_lower_bound}
    \mathcal{T}_{\mathrm{tot}} \geq \frac{2\mathcal{T}_*}{\bar{\mathcal{I}}_\gamma(\mathcal{T}_*) \epsilon^2} = \frac{4e\gamma}{\epsilon^2}.
\end{equation}
In practice, this bound will not be saturated, as we need to allocate a number of shots for circuits with lower depths in order to determine the most-significant bits of the phase with sufficiently high confidence.
In Ref.~\cite{belliardoAchieving2020}, the authors 
optimize the number of measurements $M_j$ of Alg.~\ref{alg:parametrised_RPE} to account for global depolarizing noise.
This results in a well-defined algorithm, whose cost scales asymptotically as $\mathcal{T}_\text{tot} \sim \mathcal{O}(\epsilon^{-2} / \gamma)$ (we will later fit the prefactor numerically):
\begin{algorithm}
\label{alg:RPE_noisy}
    [RPE with global depolarizing noise \cite{belliardoAchieving2020}]
    Fix a target precision $\epsilon_\mathrm{t}$, and define the constants
    \begin{align}
        J = \lfloor\log_2(1/\epsilon_\mathrm{t})\rfloor, 
        \quad \beta = 11 
        &\quad \text{if} \quad \epsilon_\mathrm{t} > \gamma
        \\
        J = \lfloor\log_2(1/\gamma)\rfloor, 
        \quad \beta = 11\frac{\gamma^2}{\epsilon_\mathrm{t}^2}
        &\quad \text{if} \quad \epsilon_\mathrm{t} < \gamma
    \end{align}

    Run Alg.~\ref{alg:parametrised_RPE} with $J$ orders  and numbers of samples
    $M_j = e^{2\gamma k_j}(\alpha(J-j) + C\gamma(2^J - 2^j) + \beta)$, with $\alpha = 4.0835$, $C = 1.3612$.    
\end{algorithm}

Several alternative single-control QPE algorithms to RPE have been proposed for the early-FT regime \cite{dingEven2023, linHeisenbergLimited2022, wangQuantum2023} and studied in the presence of GDN \cite{ding2023robust, liangModeling2024}.
In Ref.~\cite{nelsonAssessment2024} the costs of surface-code implementations of several algorithms were analysed, comparing the performance of various estimators under GDN.
The study found that when the initial state is an eigenstate, as in our case, the total runtime is minimized by either RPE or the quantum complex exponential least-squares algorithm of Ref.~\cite{dingEven2023}. 
For simplicity, we choose RPE with the parameters of Ref.~\cite{belliardoAchieving2020} as the benchmark to compare to.

\subsection{Multi-circuit QFT-based QPE}\label{sec:sin_state_qpe_multi_circuit}

Similar to the single-control case, in the presence of noise a sin-state QPE circuit will lose information as it grows arbitrarily deep.
This effect is even more drastic than for RPE, because a single measurement will be used; if we try to apply Algorithm~\ref{alg:sine-state-single-circuit} using a noisy oracle, the actual error $\epsilon$ will become larger for decreasing target error $\epsilon_\mathrm{t}$ (instead of plateauing).
This can be seen from Eq.~\eqref{eq:GDN_def}; the final measurement will return an estimate with the same error as the noiseless algorithm with probability $e^{-\gamma \mathcal{T}}$, and a fully random estimate with error $\mathcal{O}(1)$ with probability $1-e^{-\gamma \mathcal{T}}$, yielding a Holevo variance
\begin{equation}
   \label{eq:holevo_error_with_gdn} \epsilon_H^2(\gamma, \mathcal{T}) = e^{-\gamma \mathcal{T}} \tan^2\left(\frac{\pi}{\mathcal{T}+2}\right) + 2 \, (1-e^{-\gamma \mathcal{T}}).
\end{equation}
The above has a minimum $\min_\mathcal{T} \epsilon(\gamma, \mathcal{T}) = \sqrt{3}\sqrt[3]{\pi\gamma} + \mathcal{O}(\gamma) \approx 2.54 \gamma^{1/3} > 0$ at $\mathcal{T} = \sqrt[3]{\pi^2/\gamma} + \mathcal{O}(\gamma^{1/3})$ for any $\gamma >0$.
Indeed, achieving arbitrary precision using a single QPE circuit (as in Fig.~\ref{fig:sinqpe}) is impossible in the presence of global depolarising noise.
Moreover, naively averaging the outcomes over many circuit runs would result in $\mathcal{T}_{\mathrm{tot}} \propto \gamma^{1/3}\epsilon^{-2}$, as compared to the scaling  $\mathcal{T}_{\mathrm{tot}} \propto \gamma\epsilon^{-2}$ in the previous section for RPE.
We would hope to recover this scaling.

As in the previous section, to adapt Algorithm~\ref{alg:sine-state-single-circuit} to the presence of noise, we want to limit the maximum depth of a circuit and instead run multiple circuits.
This requires both defining the best choice of circuits to run, and determining a method to combine the results of these circuits to optimize the final precision.
We can write down the probability distribution of sin-state QPE in the presence of global depolarizing noise with strength $\gamma$ as 
\begin{equation}
    P_{\gamma}^{(\mathrm{SinQPE})}(x|\phi)=e^{-\mathcal{T}\gamma}P^{(\mathrm{SinQPE})}(x|\phi) + (1-e^{-{\mathcal{T}}\gamma})\frac{1}{\mathcal{T}+1}
\end{equation}
where $P^{(\mathrm{SinQPE})}$ is the distribution of the noiseless circuit (Eq~\ref{eq:sinqpe_noisless_prob_distribution}).
An asymptotically optimal estimator is then to perform maximum-likelihood estimation on the joint noisy distributions $P_{\gamma}^{(\mathrm{SinQPE})}(x|\phi)$ taken from different runs.
(This is slightly different, but related to the maximum-likelihood estimation performed on single-control QPE data by Ref.~\cite{svore2013faster, dingEven2023}.)
The maximum-likelihood routine can be performed classically with classical cost $\mathcal{O}(\epsilon^{-2})$, which we assume is a negligible overhead to the total cost of the quantum algorithm $\mathcal{T}_{\mathrm{tot}}\sim\mathcal{O}(\epsilon^{-2})$.
We numerically find that for the sin-state circuit with GDN, $\argmax_\mathcal{T} \bar{\mathcal{I}}(\mathcal{T})/\mathcal{T} \approx \gamma^{-1}$.
It remains to choose what to do for larger target errors.

\subsubsection{Algorithm details}\label{sec:algorithm_details}

There are 3 regimes characterised by the target precision $\epsilon$ relative to the noise strength $\gamma$.
When $\epsilon \gg \gamma$, the effects of noise are negligible and the optimal strategy is the same as for the noiseless case - run a single circuit of sufficient depth, such that the variance of the output is $\leq \epsilon^2$.
As the depth increases, the output becomes more and more influenced by noise till a maximal depth $\mathcal{T}_{\text{max}}\sim\mathcal{O}(\gamma^{-1})$ beyond which running larger depths is never efficient.
Instead, asymptotically small errors can be achieved by repeating this single circuit multiple times.
In this regime the best possible precision can be achieved by processing the circuit samples through maximum likelihood estimation.
Maximum likelihood estimation is only optimal in the limit of a large number of samples; as a consequence, in the intermediate regime ($\gamma \approx \epsilon$), we will need to run a large enough number of samples for circuits of depth smaller than for the asymptotic regime, effectively interpolating circuit depth and number of shots between the two edge cases.

To optimize our measurement strategy for the intermediate regime, we first choose a maximal depth $\mathcal{T}_1$ for any single experiment.
With a small number of samples, the error is characterised by the variance of the circuit output probability distribution.
We need to find $\mathcal{T}_1$ such that when the total resources are $\mathcal{T}_1$ the expected error is smaller when running a circuit with depth $\mathcal{T}_1$ once, then if the circuit of depth $\mathcal{T}_1/2$ is repeated twice.
In the first case, the expected error is simply $\epsilon(\mathcal{T}_1,\gamma)$.
In the second case, we get two independent samples, each coming from the target distribution with probability $p = e^{-\gamma \mathcal{T}_1/2}$.
There are three possibilities to consider for this second case: if both samples come from the target distribution, the output estimate $\widetilde\phi$ is their average, and the error is $\approx \frac{1}{\sqrt 2} \frac{\pi}{T1/2}$; if one of the samples comes from the target distribution, the error is either $\frac{\pi}{\mathcal{T}_1 / 2}$ or $2$; if both samples come from the noise distribution, the output is random and the error is $2$.
Summing over these possibilities for the second case, we find the threshold depth $\mathcal{T}_1$ by solving
\begin{equation}
   p^2 \frac{\pi^2}{\mathcal{T}_1^2} + (1-p^2)\times 2 \leq p^2 \times \frac{4\pi^2}{2\mathcal{T}_1^2} + 2p(1-p)\times\big(\frac{1}{2}\times \frac{4\pi^2}{\mathcal{T}_1^2}+ \frac{1}{2}\times 2\big) + (1-p)^2 \times 2.
\end{equation}
One can check this condition is equivalent to
\begin{equation} 
    \frac{\pi^2}{\mathcal{T}_1^2}  \leq
    2\frac{1-p}{1+3p/4} 
    =8\frac{1-e^{-\gamma \mathcal{T}_1/2}}{4-3e^{-\gamma \mathcal{T}_1/2}} =2\gamma \mathcal{T}_1 + O(\gamma^2\mathcal{T}_1^2)
\end{equation}
To first order in $\gamma \mathcal{T}_1$, we get that $\mathcal{T}_1 = \sqrt[3] {\frac{\pi^2}{2\gamma}} \approx 1.70\gamma^{-1/3}$ (which validates the above expansion for small $\gamma$).
The corresponding error is $\epsilon(\mathcal{T}_1;\gamma)\approx 1.84 \gamma^{1/3}$.

We have not yet chosen the depth $\mathcal{T}_{\text{max}}$ for the small error regime.
Again, it is best to choose the depth $\mathcal{T}_{\text{max}} = \mathcal{T}_2$ that maximizes the specific Fisher Information $\mathcal{T}_2 = \argmax_{T}\mathcal{I(T;\gamma)}/T$. Numerically we find $\mathcal{T}_2 = \gamma^{-1}$. In the limit of many samples $M$ (in practice, we assume the estimator has converged for $M>M_2 = 100$), the variance of the MLE is $(\bar{\mathcal{I}}M)^{-1}$ so to achieve $\epsilon \ll \gamma$ the number of samples required is $M = 1/\bar{\bar{\mathcal{I}}}_\gamma(\mathcal{T}_2)\epsilon^{2}$ leading to the total cost $\mathcal{T}_{\mathrm{tot}} = \mathcal{T}_2 M = \mathcal{T}_2/\bar{\mathcal{I}}_\gamma(\mathcal{T}_2)\epsilon^{2}$.

It remains to define the algorithm in the intermediate regime to interpolate between depths $\mathcal{T}_1$ and $\mathcal{T}_2$, and numbers of samples $1$ and $M_2$.
This is not easy, because while we know the expected asymptotic performance of the MLE it is hard to predict what happens when there are few samples.
Instead, we construct a model for the error $\epsilon(\mathcal{T}, M, \gamma)$, and then choose $\mathcal{T}$ and $M$ so that the total cost $\mathcal{T}_{\mathrm{tot}} = \mathcal{T}M$ is minimal, while $\epsilon(\mathcal{T}, M, \gamma) \leq \epsilon$.
We write the estimator error as a sum of two cases:
\begin{equation}
\label{eq:mle_error_model}
    \epsilon^2 \approx p_F\times 2 + (1-p_F)\frac{1}{\bar{\mathcal{I}}M}.
\end{equation}
That is, with probability $p_F$ the estimator fails, the outcome is completely random, and the average square error is $\int_0^{2\pi}\frac{d\widetilde\phi}{2\pi}4\sin^2(\frac{\widetilde\phi-\phi}{2})^2 = 2$, but with probability $1-p_F$ the estimator succeeds, and the average square error satisfies the Cramer-Rao bound $\frac{1}{\bar{\mathcal{I}} M}$.
The failure probability $p_F$ depends on the number of samples $M$, and the circuit fidelity $F = e^{-\gamma T}$.
We conjecture that the probability of failure decreases exponentially with the number of samples $p_F = e^{-\alpha M}$ (motivated by the asymptotic normality of the maximum likelihood estimator).
Furthermore, we choose $\alpha = \gamma T / 2$, so that $p_F(M=2) = e^{-\gamma \mathcal{T}}$ -- we expect this behavior because, when we have only two samples, there is no way to tell which of the two is more likely to be produced by the signal or by the noise; in that case MLE is forced to randomly pick one, and the estimator for two samples reduces to that with one sample, and the probability of success is equal to the fidelity $e^{-\gamma \mathcal{T}}$.
We assume that the probability of failure becomes negligible when the number of shots exceeds $M_2 = 100$.

As we want to optimise the performance of the algorithm for relatively small oracle depths,
it is worth noting that that we are not restricted to powers of 2.
In Fig.~\ref{fig:sinqpecircuit_arb_dim} we show how to generalise the SinQPE circuit of Fig~\ref{fig:sinqpecircuit} to a general control dimension $K\in\NN$.

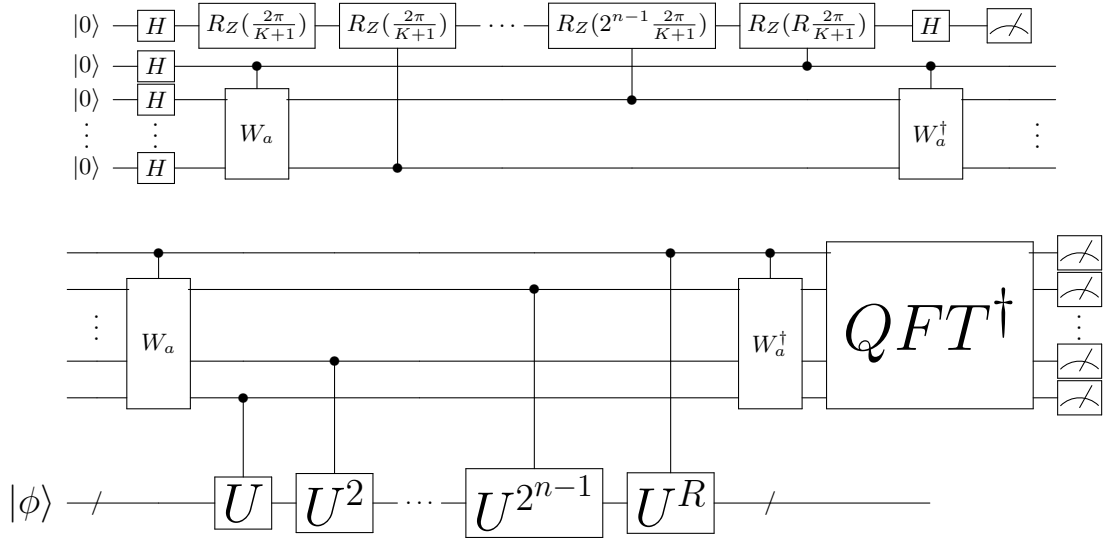
\begin{figure}[ht]
    \begin{equation*}
    \Qcircuit @C=1em @R=.1em {
     \lstick{\ket{0}} &\gate{H}&\gate{R_Z(\frac{2\pi}{K+1})}&\gate{R_Z(\frac{2\pi}{K+1})}&\push{\rule{0.15em}{0em}\dots\rule{0.15em}{0em}}\qw &\gate{R_Z(2^{n-1}\frac{2\pi}{K+1})}&\gate{R_Z(R\frac{2\pi}{K+1})}& \gate{H}&\meter \\
    \lstick{\ket{0}} & \gate{H}&\ctrl{1}&\qw &\qw&\qw&\ctrl{-1}&\ctrl{1}&\qw&\qw\\ 
    \lstick{\ket{0}} &\gate{H}&\multigate{2}{W_a}&\qw &\qw&\ctrl{-2}&\qw&\multigate{2}{W_a^{\dagger}}&\qw&\qw\\
    \lstick{\rotatebox{90}{$\cdots$}\ }&\push{\rotatebox{90}{$\cdots$}}&&&&&&&&\lstick{\rotatebox{90}{$\cdots$}}\\
    \lstick{\ket{0}} & \gate{H}&\ghost{W_R}&\ctrl{-4}&\qw&\qw &\qw&\ghost{W_R}&\qw&\qw\\ 
    }
    \end{equation*}
    
    \begin{equation*}
    \Qcircuit @C=1em @R=.1em {
    &\qw &\ctrl{1}&\qw&\qw&\qw&\qw& \ctrl{6}&\ctrl{1} & \multigate{4}{\text{\Huge $QFT^\dagger$}}& \meter\\ 
    &\qw&\multigate{3}{W_a}&\qw&\qw&\qw& \ctrl{5}&\qw&\multigate{3}{W_a^\dagger} & \ghost{\text{ \Huge $QFT^\dagger$}}& \meter\\
    &\push{\rotatebox{90}{$\cdots$}} & \pureghost{W_R}&&&&&& \pureghost{W_R}& \pureghost{\text{ \Huge $QFT^\dagger$}}&\rotatebox{90}{$\cdots$}\\
    &\qw&\ghost{W_R^\dagger}&\qw&\ctrl{3}&\qw&\qw&\qw &\ghost{W_R}& \ghost{\text{ \Huge $QFT^\dagger$}}& \meter\\ 
    &\qw&\ghost{W_R^\dagger}&\ctrl{2}&\qw&\qw&\qw &\qw&\ghost{W_R}& \ghost{\text{ \Huge $QFT^\dagger$}}& \meter\\ 
    \push{\rule{0em}{2em}}&&&\\
    \lstick{\text{\Large$\ket{\phi}$}}&{/}\qw&\qw& \gate{\text{\huge $U$}} & \gate{\text{\huge $U^2$}} & \push{\rule{0.15em}{0em}\dots\rule{0.15em}{0em}}\qw & \gate{\text{\huge $U^{2^{n-1}}$}} & \gate{\text{\huge $U^{R}$}} & \qw{/}&\qw\\
    }
    \end{equation*}
\caption{Implementation of sin-state QPE circuit for $K\in \NN$.
Let $n = \lfloor \log_2(K) \rfloor$, $R = K-2^n$.
If $R = 0$ ($K$ is a power of 2), then implement the circuit from Fig.~\ref{fig:sinqpecircuit}.
Otherwise, the $c_K U$ operation can be realised using a control register of $n+1$ qubits in the following way: First, the most significant qubit controls an addition gate  $W_{a}=\sum_{j=0}^{2^{n-1}-a-1}\ket{j+a}\bra{j}$ for $a=2^n-R$ in the remaining register ($W_a$ can be constructed with up to $n$ Toffoli gates and as many auxiliary qubits \cite{gidneyHalving2018, sandersCompilation2020}), and then $R$ applications of $U_c$. Then, the remaining register controls a $c_{2^{n}}U$. Finally, the addition of $a$ is uncomputed.
We can apply the same trick to the controlled rotations in the preparation of the sin-state to prepare $\ket{s_K}$.
$QFT$ of order $K$ can be realised with cost $\widetilde{\mathcal{O}}(\log(K))$ \cite{hales2000improved,mosca2003exact}.
The oracular cost is $K-1$, as in the case of $K = 2^n$ (Fig.~\ref{fig:sinqpecircuit}).}
\label{fig:sinqpecircuit_arb_dim}
\end{figure}

\begin{algorithm}\label{alg:mle-sinqpe}
    [Multi-circuit sin-state QPE with GDN]
    Input: Target precision $\epsilon_\mathrm{t}$, noise strength $\gamma$.
    \begin{enumerate}
        \item Let the threshold parameters be  $\mathcal{T}_1 = \lfloor\sqrt[3]{2\pi^2/3\gamma}\rfloor$, $\epsilon_1 = \epsilon_H(\gamma, \mathcal{T}_1)$ according to Eq.~\ref{eq:holevo_error_with_gdn}, and $\mathcal{T}_2 = \lfloor\gamma^{-1}\rfloor$, $M_2 = 100$  $\epsilon_2 = 1/\sqrt{\bar{\mathcal{I}}_\gamma(\mathcal{T}_2) M_2}$.
        \item Choose the number of shots $M$ and circuit depth $\mathcal{T}$:
        \begin{enumerate}
            \item If $\epsilon_\mathrm{t} \geq \epsilon_1$: 
            $\mathcal{T} = \lceil\frac{\pi}{\arctan(\epsilon_\mathrm{t})}-2\rceil$, $M = 1$
            \item If $\epsilon_\mathrm{t} \leq \epsilon_2$: $\mathcal{T} = \mathcal{T}_2$, $M = 1/\bar{\mathcal{I}}(\mathcal{T}_2, \gamma)\epsilon_\mathrm{t}^2$.
            \item If $\epsilon_1 < \epsilon_\mathrm{t} < \epsilon_2$: $\mathcal{T}, M = \argmin_{\mathcal{T} \in \mathbb{N}, M \in \mathbb{N}}(\mathcal{T} \cdot M)$ under the constraint
            \begin{equation}
                2e^{-\gamma \mathcal{T}M/2} + (1-e^{-\gamma \mathcal{T}M/2})\frac{1}{\bar{\mathcal{I}}_\gamma(\mathcal{T}) M} \leq \epsilon_\mathrm{t}^2.
            \end{equation}
        \end{enumerate}
        \item Run $M$ shots of the SinQPE circuit ((Fig.~\ref{fig:sinqpecircuit_arb_dim})) with control dimension $K=\mathcal{T}+1$ to obtain $M$ samples $\{x_j\}$.
        \item Return $\argmin_{\widetilde\phi} L(\{x_j\}, \widetilde\phi)$.
    \end{enumerate}
\end{algorithm}
As before, we again define this algorithm in terms of a parameter $\epsilon_\mathrm{t}$ that allows us to fix the circuit parameters, but does not necessarily form a bound on the final Holevo error.
However, we find numerically that the true Holevo error bound $\epsilon$ lies very close to this target.

\subsubsection{Performance guarantees}
\label{sec:performance_guarantees}

Here we state and prove the formal version of Theorem \ref{thm:mle_sinqe_informal} in the main text.

{
\begin{theorem}
\label{thm:mle_sinqpe}
 Given an initial eigenstate preparation, the total number of uses $\mathcal{T}_{\mathrm{tot}}$ of the unitary required for Alg.~\ref{alg:mle-sinqpe} to produce a phase estimate with Holevo error $\epsilon \to 0$ in the presence of global depolarizing noise with strength $\gamma \to 0$ depends on the order in which these limits are taken. Specifically:

\begin{enumerate}
    \item $\lim_{\epsilon \rightarrow 0} \lim_{\gamma \rightarrow 0}\mathcal{T}_{\mathrm{tot}}\epsilon=\pi$ (i.e. $\gamma \ll \epsilon$)
    \item $\lim_{\gamma \rightarrow 0}\lim_{\epsilon \rightarrow 0} \mathcal{T}_{\mathrm{tot}}\epsilon^2/{\gamma}=C$ (i.e. $\epsilon \ll \gamma$), where $C\approx 20$ is a constant of proportionality.
\end{enumerate}
\end{theorem}
\begin{proof}
We start by proving statement 1., by noticing that for $\gamma \to 0$ we recover the noiseless case, and Algorithm~\ref{alg:mle-sinqpe} behaves identically to Algorithm~\ref{alg:sine-state-single-circuit}, as described in Theorem~\ref{thm:sinqpe-noiseless}.
Specifically, as $\gamma\to 0$ for a fixed $\epsilon$, Alg.~\ref{alg:mle-sinqpe} prescribes a single circuit with depth $\mathcal{T} = \lceil \frac{\pi}{\arctan(\epsilon)} - 2 \rceil$, and the total cost is $\mathcal{T}_{\mathrm{tot}} = \mathcal{T}$. The error $\epsilon$ is given by the Holevo variance of the noisy distribution, Eq.~\eqref{eq:holevo_error_with_gdn}.

    \begin{equation}
        \epsilon = \sqrt{e^{-\gamma \mathcal{T}_{\mathrm{tot}}} \tan^2\left(\frac{\pi}{\mathcal{T}_{\mathrm{tot}}+2}\right) + 2 \, (1-e^{-\gamma \mathcal{T}_{\mathrm{tot}}})}.
    \end{equation}
   Taking the limit $\gamma \to 0$, this becomes the same as in the noiseless case, $\epsilon \xrightarrow{\gamma\rightarrow 0}  \tan\left(\frac{\pi}{\mathcal{T}_{\mathrm{tot}}+2}\right) $.
   Thus the total cost satisfies
    \begin{equation}
       \lim_{\epsilon \to 0}\mathcal{T}_{\mathrm{tot}}\epsilon = \lim_{\mathcal{T}_{\mathrm{tot}}\to\infty} \mathcal{T}_{\mathrm{tot}}\times \tan\left(\frac{\pi}{\mathcal{T}_{\mathrm{tot}}+2}\right) = \pi.
    \end{equation}

    We turn to proving statement 2.
    When $\epsilon \to 0$ for a fixed noise rate $\gamma$, Alg.~\ref{alg:mle-sinqpe} uses a large number of samples $M = 1/\bar{\mathcal{I}}_\gamma(\mathcal{T}_2)\epsilon^2$ of circuit \ref{fig:sinqpecircuit} with depth $\mathcal{T}_2 = \lfloor\gamma^{-1}\rfloor$, so the total cost is $\mathcal{T}_{\mathrm{tot}} = M \mathcal{T}_2$.
    By the limiting properties of the maximum likelihood estimator (which saturates the Cramér-Rao bound for large number of samples $M$), the algorithm will converge to the target precision $\epsilon$, with the total cost satisfying
    \begin{equation}
        \lim_{\epsilon \rightarrow 0} \mathcal{T}_{\mathrm{tot}}\epsilon^2 = \frac{\mathcal{T}_2}{\bar{\mathcal{I}}_\gamma(\mathcal{T}_2)}.
    \end{equation}

    It remains to calculate the limit $\lim_{\gamma \rightarrow 0} \mathcal{T}_2/\gamma \bar{\mathcal{I}}_\gamma(\mathcal{T}_2)$. 
    For brevity of notation, let $f_\mathcal{T}(\Delta\phi)$ be such that the probability distribution in Eq.~\ref{eq:sinqpe_noisless_prob_distribution} can be written as 
    \begin{align}
    P_{0,\mathcal{T}}(x|\phi) &= f_{\mathcal{T}}\left(\phi - \frac{2\pi x}{\mathcal{T}+1}\right), \\
    f_{\mathcal{T}}(\Delta\phi) &= \frac{\sin^2\frac{\pi}{\mathcal{T}+2}}{(\mathcal{T}+1)(\mathcal{T}+2)} \frac{1+\cos[(\mathcal{T}+2)\Delta\phi]}{\left[\cos(\Delta\phi) - \cos(\frac{\pi}{\mathcal{T}+2})\right]^2}.
\end{align}
    The fidelity of the circuit depth $\mathcal{T}$ under GDN with strength $\gamma$ is $e^{-\gamma\mathcal{T}}$, and the associated probability distribution is $P_{\gamma,\mathcal{T}}(x|\phi) = e^{-\gamma \mathcal{T}}P_{0,\mathcal{T}}(x|\phi) + (1-e^{-\gamma \mathcal{T}})/(\mathcal{T}+1)$.
    The Fisher Information is by definition [Eq.~\eqref{eq:fisher_info}]
    \begin{align}
        \bar{\mathcal{I}}_\gamma(\mathcal{T}) &= \int_0^{2\pi}\frac{d\phi}{2\pi}
        \sum_{x=0}^{\mathcal{T}}
        \frac{e^{-2\gamma \mathcal{T}}f'^2_{\mathcal{T}}(\phi - 2\pi \frac{x}{\mathcal{T}+1})}{e^{-\gamma \mathcal{T}}f_{\mathcal{T}}(\phi - 2\pi \frac{x}{\mathcal{T}+1}) + (1-e^{-\gamma \mathcal{T}})/(\mathcal{T})}\\
        &= \sum_{x=0}^{\mathcal{T}}\int_{-\pi - 2\pi \frac{x}{\mathcal{T}+1}}^{\pi - 2\pi \frac{x}{\mathcal{T}+1}}\frac{ds}{2\pi}\frac{ e^{-2\gamma \mathcal{T}}f'^2_{\mathcal{T}}(s)}{e^{-\gamma \mathcal{T}}f_{\mathcal{T}}(s) + (1-e^{-\gamma \mathcal{T}})/(\mathcal{T}+1)}\\
        &= (\mathcal{T}+1) \int_{-\pi}^{\pi}\frac{ds}{2\pi}\frac{ e^{-2\gamma \mathcal{T}}f'^2_{\mathcal{T}}(s)}{e^{-\gamma \mathcal{T}}f_{\mathcal{T}}(s) + (1-e^{-\gamma \mathcal{T}})/(\mathcal{T}+1)}\\
        &= (\mathcal{T}+1)e^{-\gamma \mathcal{T}} \int_{-\pi}^{\pi}\frac{ds}{2\pi}\frac{ f'^2_{\mathcal{T}}(s)}{f_{\mathcal{T}}(s) + (e^{\gamma \mathcal{T}}-1)/(\mathcal{T}+1)},
    \end{align}
    where we have used the periodicity of $f_{\mathcal{T}}$ to remove the sum and dependence on $\phi$.
    We can substitute $s \mapsto (\mathcal{T}+2)s$ to rewrite this as
    \begin{equation}
        \bar{\mathcal{I}}_\gamma(\mathcal{T}) = e^{-\gamma \mathcal{T}} \times (\mathcal{T}+2)(\mathcal{T}+1) \times \int_{-(\mathcal{T}+2_)\pi}^{(\mathcal{T}+2)\pi}\frac{1}{2\pi} \frac{(\mathcal{T}+2)^{-2}f'^2_{\mathcal{T}}(\frac{s}{\mathcal{T}+2})ds}{f_{\mathcal{T}}(\frac{s}{\mathcal{T}+2}) + (e^{\gamma \mathcal{T}}-1)/(\mathcal{T}+1)}.
    \end{equation}
    Using L'Hôpital's rule, we can show that in the limit $\mathcal{T}\to\infty$, the function $f_\mathcal{T}$ converges pointwise to
    \begin{equation}
    \label{eq:limit_prob_func}
        f(s) = \lim_{\mathcal{T}\rightarrow \infty} f_\mathcal{T}(\frac{s}{\mathcal{T}+2}) = (2\pi)^2 \frac{1+\cos s}{(\pi^2 - s^2)^2},
    \end{equation}
    and the derivative converges as
    \begin{equation}
    \label{eq:limit_prob_deriv}
        \frac{1}{\mathcal{T}+2}f'_{\mathcal{T}}\left(\frac{s}{\mathcal{T}+2}\right) \xrightarrow{\mathcal{T}\to\infty} f'(s).
    \end{equation}
    Therefore, in the limit $\gamma \rightarrow 0$, where the fidelity is $e^{\gamma \mathcal{T}_2}, \rightarrow e$ the integrand converges pointwise to 
    \begin{align}
    \label{eq:limit-score}
        j(s) &= \frac{1}{2\pi} \frac{f'^2(s)}{f(s)}\\
        &=\frac{1}{(s^2 - \pi^2)^2}\left[(1-\cos s) + \frac{16s^2}{(s^2 - \pi^2)^2}(1+\cos s)+\frac{8s}{s^2 - \pi^2}\sin s\right],
    \end{align}
    which we can bound by bounding all the trigonometric functions as $|\sin s|,|\cos s| \leq 1$ as
    \begin{equation}
    \label{eq:limit-score-bound}
        |j(s)|
        \leq \frac{1}{(s^2 - \pi^2)^2}\left|2 + 2\frac{16s^2}{(s^2 - \pi^2)^2}+\frac{8s}{s^2 - \pi^2}\right|
    \end{equation}
    For large $s$, we can bound this as $|j(s)| = \mathcal{O}(|s|^{-4}) $, so the integral converges and
    \begin{equation}
       \frac{\gamma}{\mathcal{T}_2}\bar{\mathcal{I}}_\gamma(\mathcal{T}_2) \xrightarrow{\mathcal{T}\to\infty} e^{-1} \int_{-\infty}^{+\infty} j(s)ds \equiv C^{-1}.
    \end{equation}
    We numerically find the value of the integral is $e/C \approx 0.13$, and $C \approx 20.8$.
\end{proof}
}

\subsection{Comparison between single-control and QFT-based methods}
\label{sec:qpe_gdn_numerical_comparison}

\begin{figure}[h!]
    \centering
     \includegraphics{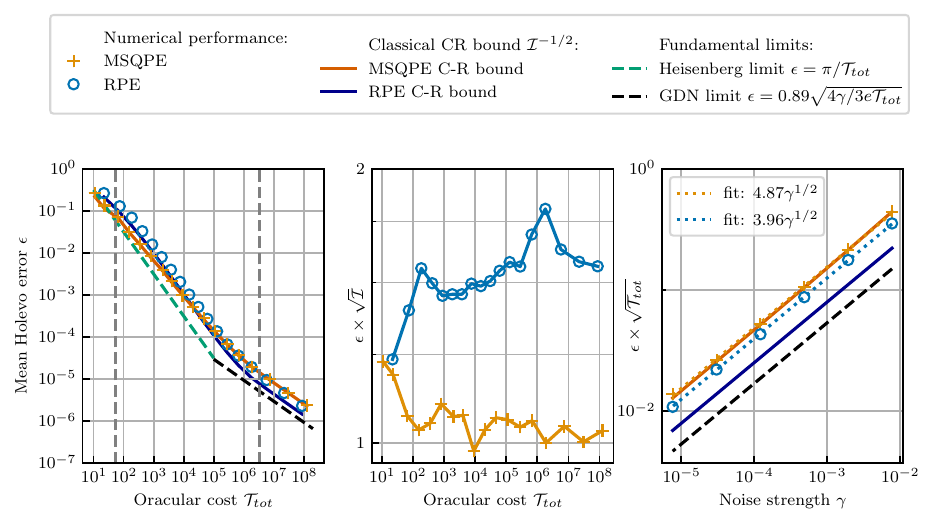}
    \caption{
        Comparison of RPE (Alg.~\ref{alg:RPE_noisy}) and MSQPE (Alg.~\ref{alg:mle-sinqpe}) in the presence of global depolarising noise.
        (Left) Numerical mean Holevo error $\epsilon$ as a function of the total cost $\mathcal{T}_{\mathrm{tot}}$ for a fixed noise rate $\gamma = 2^{-15}$.
        The vertical dashed lines represent $\mathcal{T}_{\mathrm{tot}}(\epsilon_1)$ and  $\mathcal{T}_{\mathrm{tot}}(\epsilon_1)$ separating different regimes in Alg.~\ref{alg:mle-sinqpe}.
        The solid lines correspond to the Cramer-Rao bound $\epsilon \geq \bar{\mathcal{I}}^{-1/2}$, where $\bar{\mathcal{I}}$ is the Fisher Information of all the quantum circuits used.
        (Center) Numerical error $\epsilon$ relative to the minimal error allowed by the Cramer-Rao bound $\bar{\mathcal{I}}^{-1/2}$.
        (Right) Performance in the limit $\epsilon \rightarrow 0$ as a function on the noise strength $\gamma$.
        The blue circles and orange pluses are obtained by fitting $\epsilon \propto \mathcal{T}_{\mathrm{tot}}^{1/2}$ to the numerical errors $\epsilon$ in the small error regime for different noise rates $\gamma$, and the solid lines by fitting $\bar{\mathcal{I}}^{-1/2} \propto \mathcal{T}_{\mathrm{tot}}^{1/2}$.
        The dotted lines represent the least-squares fit to these points.
        Error bars in all cases are insignificant.
    }
    \label{fig:sinqpe}
\end{figure}

\topic{Description of how figure was created} We compare numerically the MSQPE algorithm against RPE across a range of target precisions $\epsilon_\mathrm{t}$ and error rates $\gamma$.
For RPE, we choose the optimized parameters from Ref.~\cite{belliardoAchieving2020}, fixing the target precision $\epsilon_\mathrm{t}=2^{-j}$ to give optimal performance for the algorithm.
(As currently defined in Alg.~\ref{alg:RPE_noisy}, to target $2^{-(j-1)}> \epsilon_\mathrm{t}>2^{-j}$ has the same cost as estimating at a precision $2^{-j}$, though this could be improved upon - e.g. by extending Alg.~\ref{alg:parametrised_RPE} in a similar way as in Alg.~\ref{fig:sinqpecircuit_arb_dim}.)
To calculate the Holevo error $\epsilon_H^{(\mathrm{RPE})}$ achieved by RPE, we run $1000$ simulations of Alg.~\ref{alg:RPE_noisy} targeting a different random phase on $[0, 2\pi)$.
This yields the orange pluses in Fig.~\ref{fig:sinqpe}.
With the Holevo errors $\epsilon_H^{(\mathrm{RPE})}$ from RPE determined, we use this as a target error $\epsilon_\mathrm{t}=\epsilon_H^{(\mathrm{RPE})}$ for MSQPE. 
We run $1000$ simulations resulting in the blue circles in Fig.~\ref{fig:sinqpe}.
We observe that the mean Holevo error $\epsilon_H^{(\mathrm{MSQPE})}$ for MSQPE is very close to the target precision $\epsilon_\mathrm{t}$, which validates our choice of error model used to set the parameters of the algorithm (Eq.~\eqref{eq:mle_error_model}).
For comparison, we plot the Cramer-Rao bounds from the simulated quantum device data that is fed into either algorithm (orange for RPE and blue for SinQPE).
We additionally state the standard Heisenberg limit and the global depolarizing noise limit calculated in Ref.~\cite{kolodynski2013efficient} (see App.~\ref{sec:qpe_background}).
To construct Fig.~\ref{fig:MSQPE_vs_RPE}, we repeat the above calculation for different noise rate $\gamma = 2^{-m}$, and plot the relative total cost $\mathcal{T}_{\mathrm{tot}}$ of RPE vs MSQPE at a fixed $\epsilon_H^{(\mathrm{RPE})}\approx\epsilon_H^{(\mathrm{MSQPE})}$.

\topic{summary of figures}
We observe that the MSQPE achieves the target precision as expected; the orange and blue points on Fig.~\ref{fig:sinqpe}[left] have nearly-identical y-values, but differ in the cost $\mathcal{T}_{\mathrm{tot}}$.
The numerical error for the MSQPE is very close to the classical Cramer-Rao bound for the data taken, whilst the error of RPE is about $50\%$ above its Cramer-Rao bound (Fig.~\ref{fig:sinqpe} middle panel).
However, the Cramer-Rao bounds for the two algorithms are different (as they take different data), so this does not directly translate into a performance advantage of MSQPE at all regimes in Fig.~\ref{fig:MSQPE_vs_RPE}.
In the regime of large target error $\epsilon_\mathrm{t}$, the noise is not significant, and both algorithms perform as in the noiseless case, achieving near Heisenberg scaling $\epsilon_H \propto 1/\mathcal{T}_{\mathrm{tot}}$ (Fig.~\ref{fig:sinqpe}[left]).
However, MSQPE has a better scaling constant; it satisfies the Heisenberg limit $\epsilon_H \approx \pi/\mathcal{T}_{\mathrm{tot}}$, while RPE has a small gap $\epsilon_H \approx 5\pi/\mathcal{T}_{\mathrm{tot}}$.
This matches known results for the noiseless case~\cite{vandamOptimal2007,najafiOptimum2023}.
The advantage of MSQPE continues in the intermediate regime, and is more significant for smaller noise strengths $\gamma$ (as can be seen in Fig.~\ref{fig:MSQPE_vs_RPE}).
In the limit of small errors both algorithms asymptotically show $\epsilon_H \propto \sqrt{\gamma/\mathcal{T}_{\mathrm{tot}}}$ as expected.
In this regime, RPE outperforms the MSQPE by a factor of $\approx 1.3$.
Note that MSQPE continues to closely follow the Cramer-Rao bound in this case, which implies that the performance loss is driven by the sin-state being suboptimal in the presence of noise, rather than the classical estimation routine performing badly. For MSQPE, the scaling factor is $\approx 4.9$ (which is close to the analytical result in Thm.~\ref{thm:mle_sinqpe}, as $\sqrt{C}\approx 4.6$), and for RPE $\approx 4.0$ - slightly worse than $\sqrt{4e}\approx 3.3$ bound from the specific Fisher Information in Eq.~\eqref{eq:noisy rpe_lower_bound}.
Both are significantly ($2\times$) above the fundamental limit $0.89\sqrt{4\gamma/3e} \approx 1.7$ \cite{kolodynski2013efficient}.

\section{Error mitigation overhead for QPE}
\label{sec:mit_overhead_qpe}

The depolarizing noise model is a useful toy model to understand the idealized performance of a quantum algorithm (especially in terms of the variance cost), but it fails to capture the bias present in a typical noise model for a quantum device~\cite{caiQuantum2023}.
Over recent years, many error mitigation techniques have emerged, promising a range of fidelity overheads and unbiasing ability.
In this section we consider the effect realistic noise and error mitigation techniques would have on the results in App.~\ref{sec:qpe-global-depol}.
Our primary result is to incorporate a fidelity overhead of $F^{\alpha}$ directly into our $\mathcal{T}$ optimization for QPE.
We find that we can trade an increased $\alpha$ (typically $\alpha=1$, $2$, or $4$) against $\mathcal{T}$ to yield a multiplicative overall cost $\mathcal{T}_{\mathrm{tot}}\rightarrow\alpha\mathcal{T}_{\mathrm{tot}}$.
This suggests the overhead for explicitly unbiased methods such as probabilistic error cancellation~\cite{temmeError2017,endoPractical2018} may be affordable, but it also suggests that improving error mitigation will yield little further gains for QPE.

\label{app:error_mitigation_overhead}
\topic{general theory of EM} 
Error mitigation theory typically focuses on expectation value estimation tasks: measuring to precision $\epsilon$ an expectation value $\expval{O}{\psi}$ on a state $\ket{\psi}$ produced by a given noiseless quantum circuit \cite{caiQuantum2023}.
This can theoretically be obtained running the ideal circuit $M_1$ times.
Only having access to a noisy device, an error mitigation scheme attempts to reproduce the same estimate by running a larger number $M_F$ of shots of noisy quantum circuits.
The error mitigation overhead $C_\text{em}(F) = \frac{M_F}{M_1}$ depends on the fidelity $F=\bra{\psi}\rho\ket{\psi}$ to which the original circuit can be run on the noisy device (where $\ket\psi$ is the state that would be produced by a noiseless run of the logical circuit and $\rho$ is the mixed state actually produced by the noisy run on a device).

\begin{definition}[Error mitigation for expectation values]
    Consider a quantum circuit that prepares a state $\ket\psi$, a noisy device that can run this circuit prepating a mixed state $\rho$ with fidelity $F=\bra{\psi}\rho\ket{\psi}$, and an observable $O$ that can sampled on the prepared state.
    An \emph{error mitigation scheme} is an algorithm that produces an estimate $\widetilde{O}$ of $\bra{\psi} O \ket{\psi}$ by running $\widetilde{M}$ quantum circuits on the noisy device, and processing the outcome samples. (the circuits run can be different from the original logical circuit, and the algorithm can depend on additional knowledge of the device noise model.)
    An error mitigation scheme is \emph{unbiased} if it produces an estimate that converges to the true value $\widetilde{O} \to \bra{\psi} O \ket{\psi}$ in the large number of samples $\widetilde{M} \to \infty$.
    For an unbiased scheme, the \emph{error mitigation overhead} is defined as $C_{em}(F) = \lim_{\widetilde{M}\to\infty} \widetilde{M} \Var[\widetilde{O}]/\Var[O]$, where $\Var[\widetilde{O}]$ is the expected variance of the estimate and ${\Var[O] = \bra{\psi} O^2 \ket{\psi} -\bra{\psi} O \ket{\psi}^2}$.
\end{definition}

\topic{varieties of error mitigation techniques}
Error mitigation techniques can be classified based on the overhead scaling as a function of $F$.
If errors can be detected deterministically and associated runs discarded, error mitigation can be achieved with overhead $C_\text{em} \sim F^{-1}$. 
This is known as \emph{postselection bound}, and no error mitigation technique can perform better than this. 
An example of an error mitigation scheme whose overhead scales as $C_\text{em} \propto F^{-1}$ is symmetry verification \cite{bonet-monroigLowcostErrorMitigation2018,mcardleErrormitigatedDigitalQuantum2019}, and another is echo verification when state preparation is cheap compared to measurement \cite{obrienError2021}. 
Postselection schemes are typically problem dependent and can only mitigate part of the errors. 
A more general set of error mitigation schemes relies on measuring a noisy signal and processing it classically to remove the bias, which is estimated analytically under specific assumptions on the problem and noise model.
This is particularly effective for global depolarizing noise \cite{miInformation2021}, or when the effect of the noise can be simulated effectively \cite{filippovScalable2023}.
Due to error propagation in the required rescaling, the overhead for these techniques is $C_\text{em} \sim F^{-2}$: we call this \emph{rescaling bound}.
Note that echo verification and virtual distillation match this bound when the state preparation is expensive compared to measurement, as the volume of the noisy circuits that need to be run is double than the original circuit \cite{obrienError2021,hugginsVirtual2021}.
Finally, a last set of error mitigation techniques prescribes to measure explicitly the bias on the signal caused by circuit errors, and subtract it from the biased signal.
This is the case of probabilistic error cancellation (PEC), which can completely remove the bias from an observable expectation value by rewriting the noiseless signal as a non-convex combination of signals from noisy ``basis'' circuits.
In the case of local stochastic noise, the overhead of PEC is $C_\text{em} \propto F^{-4}$ \cite{temmeError2017,endoPractical2018,caiQuantum2023}.
This can be interpreted as a cost $F^{-2}$ due to the necessity of explicitly measuring the bias and $F^{-2}$ due to the necessity of rescaling of the resulting signal after subtracting the bias.

\topic{error mitigation for QPE}
When implementing quantum phase estimation in the presence of noise, we have an additional handle on the depth the circuits to be run; choosing this as a function of the noise rate impacts the scaling of the error mitigation overheads.
We consider a stochastic error model and we define the noise rate $\gamma$ as the logarithm of the error probability when implementing the noisy channel
\begin{equation}
\label{eq:noisy_unitary}
    \mathcal{U}(\rho) = e^{-\gamma} cU \rho cU^\dag + (1-e^{-\gamma}) \mathcal{N}(\rho),
\end{equation}
which approximates the controlled unitary oracle $cU$, where $\mathcal{N}$ is the action of the channel conditioned on at least one noise event happening (a.k.a.~pure noise channel).
In the noiseless case $\gamma=0$, we can achieve an accuracy $\leq \epsilon$ on the estimated phase $\phi$ by running the optimal SinQPE (Algorithm \ref{alg:sine-state-single-circuit}) at oracle depth $\mathcal{T}_{\text{tot}}=\lceil \pi/\epsilon \rceil$.
In the presence of noise, we can choose to run multiple circuits with smaller depth $\mathcal{T} < \mathcal{T}_\text{tot}$ resulting in fidelities bounded by $F > e^{-\gamma \mathcal{T}}$.
Under a given scheme, we define $\mathcal{T}_\text{tot}(\gamma, \epsilon)$ as the cost of estimating $\phi$ to accuracy $\epsilon$.
By analogy to the definition of $C_\text{em}$, we define the error mitigation overhead for this scheme as $C_\text{em}^\text{QPE}(\gamma, \epsilon) = \frac{\mathcal{T}_{\text{tot}}(\gamma, \epsilon)}{\mathcal{T}_{\text{tot}}(0, \epsilon)}$.

\begin{theorem}[Error mitigation overhead for QPE] \label{thm:em_overhead_qpe}
    Assume an unbiased error mitigation scheme for expectation values with overhead $C_\text{em}(F) = \Lambda F^{-\nu}$. Then we can realize an error-mitigated single-control QPE protocol that estimates $\phi$ to an accuracy $\epsilon$ with an overhead $C_\text{em}^\text{QPE}(\gamma, \epsilon) \propto \nu \gamma / \epsilon$ using the noisy implementation \(\mathcal{U}\) of $U$ in Eq.~\ref{eq:noisy_unitary}.
\end{theorem}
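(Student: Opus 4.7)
The plan is to construct an error-mitigated single-ancilla protocol and optimize its Hadamard-test depth $k$ against the exponentially growing error mitigation cost, then bound the total oracle cost via the Cramér--Rao inequality developed in Sec.~\ref{sec:fisher_information}.

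First, I would apply the assumed error mitigation scheme to each Hadamard test of depth $k$ (Fig.~\ref{fig:hadamard_test}). For each fixed $k$, this produces an unbiased estimator of the noiseless expectations $\cos(k\phi)$ and $\sin(k\phi)$ whose per-shot variance is inflated by $C_\text{em}(F) = \Lambda F^{-\nu}$ with $F = e^{-\gamma k}$. Treating the averaged, error-mitigated output as a normal variable (by the central limit theorem), the per-shot Fisher information about $\phi$ satisfies
\begin{equation}
    \mathcal{I}_k(\phi) \;\asymp\; \frac{k^2}{C_\text{em}(e^{-\gamma k})} \;=\; \frac{k^2}{\Lambda}\, e^{-\nu\gamma k},
\end{equation}
up to an $O(1)$ prefactor $\kappa$ absorbing the $\sin^2/\cos^2$ factors and the choice of $X$-versus-$Y$ basis; averaging over $\phi$ only changes this prefactor, not the $k$-dependence.

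Next, since each Hadamard test costs $k$ oracle calls, the specific Fisher information per oracle call is proportional to $k\,e^{-\nu\gamma k}/\Lambda$. I would maximize this over $k$, which gives the optimal sub-circuit depth
\begin{equation}
    k^\star = \frac{1}{\nu\gamma}, \qquad \frac{\mathcal{I}_{k^\star}(\phi)}{k^\star} \;=\; \frac{1}{e\,\nu\gamma\Lambda}\cdot\kappa,
\end{equation}
and then run $M$ independent shots at this depth. By Eq.~\eqref{eq:specific_fi_lower_bound} (the Bayesian Cramér--Rao bound, which is asymptotically saturable by maximum-likelihood post-processing as used in Alg.~\ref{alg:mle-sinqpe}), to reach Holevo error $\epsilon$ it suffices to accumulate total Fisher information $1/\epsilon^2$, giving
\begin{equation}
    \mathcal{T}_\text{tot}(\gamma,\epsilon) \;=\; M\, k^\star \;\leq\; \frac{\nu\gamma\,\Lambda\,e}{\kappa\,\epsilon^2}.
\end{equation}
Dividing by the noiseless Heisenberg cost $\mathcal{T}_\text{tot}(0,\epsilon) = \pi/\epsilon$ from Thm.~\ref{thm:sinqpe-noiseless} yields the advertised ratio $C_\text{em}^\text{QPE}(\gamma,\epsilon) \leq \Lambda\kappa e/\pi \cdot \nu\gamma/\epsilon$, matching the entries of Table~\ref{tab:em_overhead}. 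To handle the regime $\epsilon \gg \gamma$, where $k^\star$ would exceed the information-theoretically allowed depth, I would simply cap $k$ at the noiseless optimum and verify that the resulting overhead is smaller than the bound above.

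The main obstacle is the step from the single-shot Fisher information bound to an achievable estimator at the prefactor level. The Cramér--Rao bound is only asymptotic, and in the small-$M$ regime one must argue (as in the analysis of Alg.~\ref{alg:mle-sinqpe} and the RPE variant of Alg.~\ref{alg:RPE_noisy}) that an RPE-style multi-order schedule achieves the bound up to an $O(1)$ factor absorbed into $\kappa$; this requires assembling enough information on the most-significant bits of $\phi$ so that the likelihood function is unimodal near $\phi$ before running the high-Fisher-information shots at $k^\star$. A minor technical point is ensuring that the output of the error mitigation scheme, when post-processed by MLE rather than simple averaging, retains the claimed variance scaling --- this follows from the unbiasedness assumption and bounded higher moments of the quasi-probability estimator.
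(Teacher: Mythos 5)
Your proposal is correct and follows essentially the same route as the paper: both reduce QPE to many repetitions of shallow single-ancilla circuits whose expectation values are error-mitigated at cost $\Lambda e^{\nu\gamma\mathcal{T}}$, and both optimize the sub-circuit depth to $\mathcal{T}=1/(\nu\gamma)$ to obtain the $\nu\gamma/\epsilon$ overhead. The only difference is presentational --- you express the cost at fixed depth via the specific Fisher information and defer achievability to an RPE-style schedule (which you correctly flag as the remaining obstacle), whereas the paper closes that gap directly by invoking the explicit RPE cost bound of Thm.~\ref{thm:rpe_noiseless} together with a repeat-and-average argument, yielding the concrete constant $\kappa=ab$.
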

\begin{proof} 
    In the absence of noise, we know from Theorem \ref{thm:sinqpe-noiseless} that the optimal QPE algorithm (sin-state QPE - Alg.~\ref{alg:sine-state-single-circuit}) estimates $\phi$ with a precision $\epsilon$ in time $\mathcal{T}_\text{tot}^\text{opt} = \frac{\pi}{\epsilon}$.
    From Theorem \ref{thm:rpe_noiseless} we know that, using RPE, we can estimate $\phi$ to precision $\widetilde{\epsilon} \leq \epsilon$ using noiseless circuits of depth bounded by $\mathcal{T}_\text{max} = \frac{b}{\widetilde{\epsilon}}$ for a cost $\mathcal{T}_\text{tot}^\text{RPE}(\widetilde{\epsilon}) = \frac{a}{\widetilde{\epsilon}}$ (for algorithm-specific constants $a=5$ and $b$ analytically bounded by $b<24.26\pi$ and numerically found to be $b\approx5\pi$).
    We can recover an estimate to precision $\epsilon$ by repeating $\frac{\widetilde{\epsilon}^2}{\epsilon^2}$ times this RPE algorithm to precision $\widetilde{\epsilon}$ and averaging the results; the maximum depth for the RPE circuits remains unchanged, and the total time becomes $\mathcal{T}_\text{tot}(\mathcal{T}_\text{max}, \epsilon) = \frac{a \widetilde{\epsilon}}{\epsilon^2} = \frac{\kappa}{\mathcal{T}_\text{max} \epsilon^2}$ (where we defined the overall constant $\kappa:=ab$).
    As the single-control QPE circuits measure expectation values, we can apply error mitigation to recover the noiseless signal.
    The fidelity of these circuits is bounded by $F \geq e^{-\gamma \mathcal{T}}$, so the error mitigation overhead is bounded by $\mathcal{C}_\text{em} \leq \Lambda e^{\nu \gamma \mathcal{T}}$.
    The QPE error mitigation overhead is then calculated as 
    \begin{equation}
        \mathcal{C}_\text{em}^\text{QPE}(\gamma, \epsilon) = \frac{\mathcal{T}_{\text{tot}}^\text{sc} \mathcal{C}_{\text{em}}}{\mathcal{T}_{\text{tot}}^{\text{opt}}} \leq \frac{\kappa \Lambda}{\mathcal{T} \pi \epsilon} e^{\nu \gamma \mathcal{T}}
        \quad
        \xrightarrow[\text{over } \mathcal{T}]{\text{minimize}}
        \quad
        C_\text{em}^\text{QPE}(\gamma, \epsilon) 
        \leq
        \frac{\kappa \Lambda e}{\pi} \frac{\nu \gamma}{\epsilon},
    \end{equation}
    with the bound minimized by the choice of $\mathcal{T} = \frac{1}{\gamma \nu}$.
\end{proof}

\section{Explicit unbiasing of multi-control QPE}
\label{sec:unbiasing_qpe}

In sin-state QPE, the estimation task cannot be reduced to sample averaging, as the estimate of the phase is not an expectation value.
In this section, we extend explicit unbiasing techniques to a more complex data processing scheme: maximum likelihood estimation.
Extending error mitigation techniques beyond expectation value estimation is identified as a relevant open problem in literature \cite{caiQuantum2023}.

\subsection{The explicitly-unbiased maximum-likelihood estimator}
\label{sec:EUMLE}

\topic{PEC channel decomposition}
Similar to what is done in PEC for expectation values, we begin by decomposing the unitary channel that implements the ideal SinQPE circuit $\mathcal{U}[\rho] = U \rho U^\dag$ as a non-convex linear combination of ``basis operations'' $\{\mathcal{B}_j\}$, i.e.~channels that can be run on the noisy device:
\begin{equation} \label{eq:pec-decomposition}
    \mathcal{U} = \sum_j \alpha_j \mathcal{B}_j.
\end{equation}
For stochastic Pauli channels (which can represent a model for logical noise in surface code even when the underlying physical noise is coherent \cite{bravyiCorrecting2018}), the decomposition procedure is known.
In particular, for a circuit of volume $A$ and local depolarizing noise with constant rate per volume element $r$, the decomposition has 1-norm $|\boldsymbol{\alpha}|_1 = e^{2rA} = F^{-2}$ where $F = e^{-rA}$ is the circuit fidelity.
(The cost of standard PEC for mitigating expectation values is then $C_\text{em} = |\boldsymbol{\alpha}|_1^2 = F^{-4}$.)

\topic{Probabilities notation and decomposition}
In multi-circuit sin-state QPE, we are interested in estimating the phase $\phi$ given a set of $M$ measurements $\{x_i\}_{i=1,...,M}$.
In the ideal, noiseless case, these are i.i.d.~samples obtained from the SSQPE circuit (see Fig.~\ref{fig:sinqpecircuit}) with distribution $x \sim P(x) = \Tr{\Pi_x \mathcal{U}[\rho_0]}$, where $\{\Pi_x\}$ is the projector of the control register onto the bit-string $x$ and $\rho_0$ is the input state.
We can decompose this distribution using Eq.~\eqref{eq:pec-decomposition} as
\begin{equation}
\label{eq:pec-probability-decomposition}
    P(x) = \sum_j \alpha_j Q_j(x)
    \quad,\quad
    Q_j(x) = \Tr{\Pi_x \mathcal{B}_j[\rho_0]}.
\end{equation}
Our goal is to define a consistent estimator of $\phi$ based on samples taken from the latter distributions, from which we can sample using the noisy device.

\topic{Maximum likelihood model and decomposition}
In the case of multiple samples $\{x_i\}$ taken from noiseless circuits, an optimal estimator $\hat\phi$ of $\phi$ is defined by maximising the log-likelihood of the samples over the parametric model of the outcome distribution $P(x|\phi)$ given in Eq.~\ref{eq:sinqpe_noisless_prob_distribution}:
\begin{equation}
    \hat{\phi} = \argmax_\phi \mathcal{L}(\phi)
    \quad , \quad 
    \mathcal{L}(\phi) = \frac{1}{M} \sum_{i=1}^M \log P(x_i|\phi).
\end{equation}
We added a constant factor $M^{-1}$ compared to the usual definition of log-likelihood, making clear that the $\mathcal{L}(\phi)$ is evaluated as a sample average of the logarithm of the model probability.
In the large sample limit,
\begin{equation} \label{eq:loglikelihood-limit}
   \lim_{M\to\infty} \mathcal{L}(\phi) = 
   \bar{\mathcal{L}}(\phi) = 
   \E\!\big[\log P(x|\phi) \big| x \sim P(x)\big],
\end{equation}
which we can decompose through Eq.~\eqref{eq:pec-probability-decomposition} as
\begin{align}
    \bar{\mathcal{L}}(\phi) = & 
    \sum_j \alpha_j \E\!\big[\log P(x|\phi) \big| x \sim Q_j(x)\big]
    \\
    = & 
    |\boldsymbol{\alpha}|_1
    \E\!\left[\sgn(\alpha_j)
        \E[\log P(x|\phi) | x \sim Q_j(x)]
        \Big| j \sim \frac{|\alpha_j|}{|\boldsymbol{\alpha}|_1}
    \right]
    \\
    = &
    |\boldsymbol{\alpha}|_1
    \E\!\left[
        \sgn(\alpha_j) \log P(x|\phi) 
        \Big| j, x \sim \widetilde{Q}(j, x)
    \right]\label{eq:likelihood_as_rescaled_distributions}
\end{align}
where we defined the joint distribution $\widetilde{Q}(j, x) = \frac{|\alpha_j|}{|\boldsymbol{\alpha}|_1} \cdot Q_j(x)$. 
We can sample data points $\{(j_i, y_i)\}_{i=1,...,\widetilde{M}}$ by importance-sampling $j_i \sim |\alpha_{j_i}|/|\boldsymbol{\alpha}|_1$ and then sampling $y_i \sim Q_{j_i}(y_i)$ on the noisy quantum device.
We can then compute an estimate $\widetilde{\mathcal{L}}$ of $\bar{\mathcal{L}}$ as a sample average
\begin{equation}
\label{eq:likelihood-no-regularisation}
    \widetilde{\mathcal{L}}(\phi) = \frac{|\boldsymbol\alpha|_1}{\widetilde{M}} \sum_{i=1}^{\widetilde{M}} \sgn(\alpha_{j_i}) \log P(y_i|\phi).
\end{equation}
Maximising this gives an estimator $\widetilde{\phi}$ of $\phi$, which we define below.
We call this estimator the \emph{explicitly unbiased maximum likelihood estimator} (EUMLE).

\begin{definition}\label{def:eumle}
    [Explicitly unbiased maximum likelihood estimator]
    Given (1) a parameterized model $P(x | \phi)$ for the distribution of $x$ obtained from a noiseless quantum circuit, (2) a non-convex decomposition of $P(x | \phi^*) = \sum_j \alpha_j Q_j(x)$ in terms of distributions $Q_j(x)$ that can be sampled using a noisy quantum device; the explicitly unbiased maximum likelihood estimator (EUMLE) of the phase $\phi^*$ is defined as $\widetilde\phi = \argmax_\phi \widetilde{\mathcal{L}}(\phi)$, where $\widetilde{\mathcal{L}}(\phi)$ in Eq.~\eqref{eq:likelihood-no-regularisation} is constructed by sampling $\widetilde{M}$ points $\{(j_i, x_i) | j_i\sim|\alpha_{j_i}|/|\boldsymbol{\alpha}|_1, x_i \sim Q_{j_i}(x_i)\}_{i=1,...,M}$.
\end{definition}
The EUMLE is a consistent estimator, i.e.~in large samples limit $\widetilde{M} \to \infty$ it converges to the true value of $\phi$.
In the absence of regularization ($c=0$), $\widetilde{\mathcal{L}}_c(\phi) = \widetilde{\mathcal{L}}(\phi) \to \bar{\mathcal{L}}(\phi)$ asymptotically matches the log-likelihood of the noiseless data Eq.~\eqref{eq:loglikelihood-limit}.
Here we prove the consistency of the EUMLE by showing that, even for $c>0$, the maximum of the likelihood for $\widetilde{M} \to \infty$ remains the same. 
We also prove the stronger property of asymptotic normality and recover an expression for the asymptotic variance of the estimator.

\begin{theorem}[asymptotic normality of the EUMLE]\label{thm:eumle}
    The EUMLE $\widetilde{\phi}$ of the phase $\phi$ is asymptotically normal with mean $\phi$ and variance $\sigma^2 \propto {\widetilde{M}}^{-1}$.
\end{theorem}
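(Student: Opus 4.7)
I would take the classical M-estimator route: (i) identify the limit $\bar{\mathcal{L}}_c(\phi)$ of $\widetilde{\mathcal{L}}_c(\phi)$ and prove it is maximized at the true phase $\phi^*$, (ii) promote this to consistency via a uniform law of large numbers on the compact torus, and (iii) derive asymptotic normality by Taylor-expanding the score equation and applying the central limit theorem (CLT).

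For step (i), the sample-averaging structure of $\widetilde{\mathcal{L}}_c$ together with the quasiprobability identity $\sum_j \alpha_j Q_j(y) = P(y|\phi^*)$ immediately gives
\begin{equation}
    \bar{\mathcal{L}}_c(\phi)
    = \mathbb{E}_{(j,y)\sim \widetilde Q}\!\left[|\boldsymbol\alpha|_1\sgn(\alpha_j)\log P_c(y|\phi)\right]
    = \mathbb{E}_{y\sim P(\cdot|\phi^*)}\!\left[\log P_c(y|\phi)\right],
\end{equation}
i.e.\ the limit is exactly what one would obtain from a noiseless regularized log-likelihood. To show this is maximized uniquely at $\phi = \phi^*$, I would exploit translation invariance on the torus: defining the normalized density $\tilde p_\delta(y) = (P(y - \delta|0) + c)/Z$ with $Z = 1 + 2\pi c$ and $\delta = \phi - \phi^*$, the integral $\int_0^{2\pi}\log\tilde p_\delta\,dy$ does not depend on $\delta$, so
\begin{equation}
    \bar{\mathcal{L}}_c(\phi^*) - \bar{\mathcal{L}}_c(\phi)
    = Z\cdot \mathrm{KL}(\tilde p_0 \,\|\, \tilde p_\delta) \geq 0,
\end{equation}
with equality iff $\tilde p_0 = \tilde p_\delta$, which by identifiability of $P(\cdot|0)$ under torus translations (easily verified for the SinQPE density Eq.~\eqref{eq:sinqpe_noisless_prob_distribution}) forces $\delta = 0$. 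Crucially, regularization does not introduce a bias.

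For step (ii), consistency follows from a standard Glivenko--Cantelli argument: $\log P_c$ is uniformly bounded (since $P_c \geq c > 0$) and smooth in $\phi$ on the compact torus, so $\sup_\phi|\widetilde{\mathcal{L}}_c(\phi) - \bar{\mathcal{L}}_c(\phi)| \to 0$ almost surely, giving $\widetilde\phi\to\phi^*$. For step (iii), Taylor-expanding $\widetilde{\mathcal{L}}_c'(\widetilde\phi)=0$ around $\phi^*$ yields $\sqrt{\widetilde M}(\widetilde\phi - \phi^*) = -\sqrt{\widetilde M}\,\widetilde{\mathcal{L}}_c'(\phi^*)/\widetilde{\mathcal{L}}_c''(\xi)$ for some $\xi$ between the two. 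The score $\widetilde{\mathcal{L}}_c'(\phi^*)$ is a sample mean of i.i.d.\ bounded random variables with mean zero (by step (i)) and variance
\begin{equation}
    V = |\boldsymbol\alpha|_1\sum_j |\alpha_j|\sum_y Q_j(y)\,[\partial_\phi \log P_c(y|\phi^*)]^2,
\end{equation}
so the CLT gives $\sqrt{\widetilde M}\,\widetilde{\mathcal{L}}_c'(\phi^*) \xrightarrow{d} \mathcal{N}(0,V)$, while consistency plus an LLN for the second derivative yields $\widetilde{\mathcal{L}}_c''(\xi) \to \bar{\mathcal{L}}_c''(\phi^*) =: -I_c \neq 0$ in probability. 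Slutsky's theorem then assembles these into $\sqrt{\widetilde M}(\widetilde\phi - \phi^*) \xrightarrow{d} \mathcal{N}(0, V/I_c^2)$, which is the claim.

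The main obstacle is the non-triviality of showing that regularization does not bias the estimator (step (i)): the clean KL argument above relies on torus translation invariance of Lebesgue measure, and requires treating $P$ as a density on $[0,2\pi)$ rather than a purely atomic mass function (or working in the large-$K$ regime where discrete sums approximate the integral). A subsidiary concern is controlling the prefactor $V/I_c^2$, which inherits a $|\boldsymbol\alpha|_1^2 = F^{-4}$ factor; the qualitative $\sigma^2 \propto \widetilde M^{-1}$ scaling asserted in the theorem follows directly from the argument, but quantifying this prefactor is precisely what is carried out in Sec.~\ref{sec:local_pauli_von_mises} via a von~Mises approximation.
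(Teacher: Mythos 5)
Your proposal is correct and its overall skeleton — identify the limit $\bar{\mathcal{L}}_c$ via the quasiprobability identity, show it peaks at $\phi^*$, then Taylor-expand the score equation and combine a CLT for $\widetilde{\mathcal{L}}_c'(\phi^*)$ with a law of large numbers for $\widetilde{\mathcal{L}}_c''$ via Slutsky — is the same route the paper takes (the paper explicitly follows Theorem~10.1.12 of \cite{casella2002statistical}), and your variance $V/I_c^2$ matches Eq.~\eqref{eq:eumle_variance}. Where you genuinely diverge is in step (i): the paper only verifies the \emph{local} conditions $\mathcal{L}_c'(\phi^*)=0$ and $\mathcal{L}_c''(\phi^*)<0$ (using periodicity of $P_c$ to kill the $\phi$-dependence of the $c$-term), whereas your argument writes $\bar{\mathcal{L}}_c(\phi^*)-\bar{\mathcal{L}}_c(\phi) = Z\,\mathrm{KL}(\tilde p_0\,\|\,\tilde p_\delta)$ using translation invariance of the torus integral $\int\log\tilde p_\delta$, which establishes \emph{global} uniqueness of the maximizer under identifiability. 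Similarly, your step (ii) supplies the uniform convergence $\sup_\phi|\widetilde{\mathcal{L}}_c(\phi)-\bar{\mathcal{L}}_c(\phi)|\to 0$ (using the key fact that regularization makes $\log P_c$ uniformly bounded), which the paper's proof implicitly assumes when passing from pointwise convergence of the likelihood to consistency of its maximizer. Both of these are needed for a fully rigorous M-estimator consistency argument, so your version buys a tighter proof at the cost of the identifiability and density-vs-atomic-measure caveats you already flag; the paper's version is shorter but strictly only establishes that $\phi^*$ is a stationary local maximum of the limiting objective.
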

\begin{proof}
    For this proof we follow a strategy similar to the one use to prove asymptotic normality of the maximum likelihood estimator in Theorem 10.1.12 of \cite{casellaStatistical2002}.
    First, we Taylor-expand $\tilde{\mathcal{L}}'(\phi)$ (the primes are derivatives w.r.t.~$\phi$) around the true value $\phi^* = \argmin_\phi{\bar{\mathcal{L}}(\phi)}$:
    \begin{equation}
        \tilde{\mathcal{L}}'(\phi) =
        \tilde{\mathcal{L}}'(\phi^*) + 
        (\phi - \phi^*) \tilde{\mathcal{L}}''(\phi^*) + 
        \mathcal{O}((\phi - \phi^*)^2).
    \end{equation}
    We ignore the second order term as it is subdominant for small errors under reasonable regularity conditions, and we use the stationary point condition $\tilde{\mathcal{L}}'(\tilde\phi) = 0$ to get
    \begin{equation}
        \tilde\phi - \phi^* =  \frac{- \tilde{\mathcal{L}}'(\phi^*)}{\tilde{\mathcal{L}}''(\phi^*)}.
    \end{equation}
    The denominator is
    \begin{align}
        \tilde{\mathcal{L}}''(\phi^*) &=
        \frac{|\boldsymbol\alpha|_1}{\tilde{M}}
        \sum_{i=1}^{\tilde{M}} \sgn(\alpha_{j_i}) \, 
        [\partial_\phi^2 \log P(y_i|\phi)]_{\phi^*}
        \\
        &\xrightarrow[\tilde{M} \to \infty]{\text{p}}
        |\boldsymbol\alpha|_1 
        \E\!\big[\sgn(\alpha_j) \, [\partial_\phi^2 \log P(y|\phi)]_{\phi^*} \big| j, y \sim \tilde{Q}(j, y)\big]
        \\ & =
        \sum_j \alpha_j \int \dd Q_j(y) \,\,
        [\partial_\phi^2 \log P(y|\phi)]_{\phi^*}
        \\ & =
        \E\!\big[[\partial_\phi^2 \log P(y|\phi)]_{\phi^*} \big| y \sim P(y)\big]
        =
        \E\!\big[\underbrace{[\partial_\phi \log P(y|\phi)]^2_{\phi^*}}_{[s(y)]^2} \big| y \sim P(y)\big]
        =: \mathcal{I}(\phi^*)
    \end{align}
    where the convergence in probability $\xrightarrow[]{\text{p}}$ is justified by the law of large numbers, and $\mathcal{I}(\theta^*)$ is the Fisher information of an ideal sample $x$ from the noiseless distribution $P(x)$.
    For brevity of further notation, we define the score of the probability model $s(x) := [\partial_\phi \log P(x|\phi)]_{\phi=\phi^*}$.
    By the central limit theorem, the numerator
    \begin{align}
        -\tilde{\mathcal{L}}'(\phi^*) &= 
        -\frac{|\boldsymbol\alpha|_1}{\tilde{M}}
        \sum_{i=1}^{M'} \sgn(\alpha_{j_i}) s(y_i)
    \end{align}
    converges in distribution to a normal distribution with mean $-\E[s(x) | x \in P(x)] = 0$ and variance 
    $\frac{|\boldsymbol\alpha|_1^2}{\tilde{M}}\E\!\big[[s(y)]^2 \big| y \sim \tilde{Q}(y)\big]$,
    with $\tilde{Q}(y) = \sum_j \tilde{Q}(j, y)$ notating the marginal distribution.
    Thus, the estimator error $\tilde{\phi} - \phi$ is asymptotically normal with mean $0$ and variance 
    \begin{equation}
    \label{eq:eumle_variance}
        \sigma^2 = \frac{|\boldsymbol\alpha|_1^2}{\tilde{M}}\frac{\E\!\big[s^2(y) \big| y \sim \tilde{Q}(y)\big]}{\E\!\big[s^2(y) \big| y \sim P(y)\big]^2}.
    \end{equation}
\end{proof}

\subsection{Regularizing the EUMLE}
\label{sec:rEUMLE}

Maximizing $\widetilde{\mathcal{L}}$ as-is leads to unstable behaviour as our probability distribution $P(x|\phi)$ [Eq.~\eqref{eq:sinqpe_noisless_prob_distribution}] can be arbitrarily close to $0$.
This is especially problematic in our case, as we are sampling from $\widetilde{Q}(x)$ instead of $P(x|\phi)$, and so we do not expect singular behaviour of $\widetilde{\mathcal{L}}$ to be a rare event.
To alleviate this problem we regularize our model $P(x|\phi)$ by adding a small constant; $P(x|\phi)\rightarrow P_c(x|\phi)=P(x|\phi)+c$.
(In principle we can add an arbitrary $\phi$-independent function to $P$, which may be better for specific noise cases. This extension is left for future research.)
This yields a new form of the log likelihood
\begin{equation} \label{eq:regularized-likelihood-from-noisy-data}
    \widetilde{\mathcal{L}}_c(\phi) = 
    \frac{|\boldsymbol\alpha|_1}{\widetilde{M}} \sum_{i=1}^{\widetilde{M}} \sgn(\alpha_{j_i}) \log P_c(y_i|\phi) + \frac{c}{|\mathcal{D}|} \sum_{x \in \mathcal{D}} \log P_c(x|\phi),
\end{equation}
where $\mathcal{D}$ is the domain of the samples $x$ (e.g. strings of $\log_2(\mathcal{T})$ bits).

For periodic distributions with continuous domain $\mathcal{D}=[0, 2\pi)$ where $P(x|\phi) = P(x-\phi \mod2\pi)$ holds, the second term is not dependent on $\phi$ and can be neglected.
Such a distribution can be obtained from QPE adding a random offset to the phase being estimated \cite{lindenAverageCase2022,apeldoornQuantum2022}.
Here, however, we introduce a general regularized EUMLE which uses this modified likelihood and prove its asymptotic normality.

\begin{definition}\label{def:reumle}
    [regularized EUMLE]
    The regularized Explicitly Unbiased Maximum Likelihood Estimator (rEUMLE) of the phase $\phi^*$ is derived from the EUMLE (definition \ref{def:eumle}); it requires the same inputs plus an additional regularization constant $c$, it uses the same samples from the quantum device, and is defined as $\tilde{\phi} = \argmax_\phi{\widetilde{\mathcal{L}}_c}$ using the regularized log likelihood Eq.~\ref{eq:regularized-likelihood-from-noisy-data}.
\end{definition}

\begin{theorem}[asymptotic normality of the rEUMLE]\label{thm:eumle_regularised}
    The rEUMLE $\widetilde{\phi}$ of the phase $\phi$ is asymptotically normal with mean $\phi^*$ and variance \begin{equation} \label{eq:reumle-variance}
        \sigma^2 = \frac{|\boldsymbol\alpha|_1^2}{\widetilde{M}}\frac{\E\!\big[s_c^2(y) \big| y \sim \widetilde{Q}(y)\big]}{\E\!\big[s_c^2(y) \big| y \sim P(y)\big]^2},
    \end{equation}
    where $s_c$ is the regularised score function $s(x) := [\partial_\phi \log P_c(x|\phi)]_{\phi=\phi^*}$.
\end{theorem}
\begin{proof}

 First, we show that the estimator is consistent, i.e. in the limit of infinite samples $\widetilde{M}$ the estimator converges to the true value $\phi^*$. The regularised likelihood function $\widetilde{\mathcal{L}}_c$ converges to the expectation value
    \begin{align}
        \mathcal{L}_c(\phi) \equiv \lim_{\widetilde{M}\to\infty} \widetilde{\mathcal{L}}_c(\phi) &= |\boldsymbol\alpha|_1 \E \bigg[\sgn(\alpha_{j}) \log P_c(x|\phi)\ \big| \ (j, x) \sim \frac{|\alpha_{j}|}{|\boldsymbol{\alpha}|_1} Q_{j}(x)\bigg]
        +
        \frac{c}{|\mathcal{D}|} \sum_{x \in \mathcal{D}} \log P_c(x|\phi) 
        \\
        & = \frac{1}{|\mathcal{D}|} \sum_{x \in \mathcal{D}} P(x)\log P_c(x|\phi) + c \log P_c(x|\phi)
        \\
        &= \frac{1}{|\mathcal{D}|} \sum_{x \in \mathcal{D}} P_c(x)\log P_c(x|\phi).
    \end{align}
    $\mathcal{L}_c(\phi)$ has a maximum at $\phi = \phi^*$ since
    \begin{align}
        \mathcal{L}_c'(\phi^*) &= \partial_\phi  \left[\frac{1}{|\mathcal{D}|} \sum_{x \in \mathcal{D}} P_c(x|\phi)\right] = \partial_\phi[1 + c] = 0\\
        \mathcal{L}_c''(\phi^*) &=  -
        \frac{1}{|\mathcal{D}|} \sum_{x \in \mathcal{D}}
        \frac{[\partial_\phi P(x|\phi)]^2_{\phi = \phi^*}}{P_c(x|\phi^*)} < 0.
    \end{align}
    The proof of asymptotic normality then follows the same steps as the proof of Thm.~\ref{thm:eumle} using the regularized versions of the likelihood $\widetilde{\mathcal{L}}_c$ and score $s_c$.
\end{proof}

To conclude, we calculate the sample overhead of this error mitigation technique.
This is defined as the ratio between the number of samples $\widetilde M$ needed to achieve $\sigma^2$ to the number of samples $M$ that would be sufficient in the noiseless case ($\widetilde{Q} = P$, $\alpha$ = 1 and no regularization required $c=0$):
\begin{equation}\label{eq:eumle_overhead}
    \mathcal{C}_\text{EUMLE} =  \frac{\widetilde M}{M} = 
    |\boldsymbol\alpha|_1^2 \frac{\E\big[s_c^2(y)\big| y \sim \widetilde{Q}(y)\big]}{\E\big[s_c^2(x) \big| x \sim P(x)\big]}
    \,
    \frac{\E\big[s_0^2(x)\big| x \sim P(x)\big]}{\E\big[s_c^2(x) \big| x \sim P(x)\big]}.
\end{equation}
Here, the term $\mathbb{E}[s_0^2(x)|x\sim P(x)]$ is the Fisher information of the noiseless distribution.
The overhead consists of three factors: 1.~the rescaling constant $|\boldsymbol\alpha|_1^2$, 2.~the ratio between the average (regularized) score of and the importance-sampled noisy distribution and the noiseless distribution, and 3.~the overhead due to the loss of information caused by regularization.
The regularization constant $c$ ensures term 2 remains bounded, but it also causes term 3 to increase, so it should be chosen to optimize the product of the two.
Quantifying $\mathcal{C}_\text{EUMLE}$ requires additional assumptions about the noise model and the consequent basis circuit decomposition, and about the specific model $P(x|\phi)$.

\subsection{Applying the regularized EUMLE to sin-state QPE}
\label{sec:EUMLE-sinQPE}

In this section, we explore the application of the rEUMLE to phase estimation, using the samples from the SinQPE circuit.
The regularization is necessary as the SinQPE distribution $P(x|\phi)$ vanishes for certain values of $\phi$.
Here we bound the variance of the resulting rEUMLE.

\begin{theorem}
\label{thm:eumle_qpe}
    [EUMLE for Sin QPE] We consider the rEUMLE (Def.~\ref{def:reumle}) applied to the SinQPE probability distribution [Eq.~\eqref{eq:sinqpe_noisless_prob_distribution}], which can be expressed with a PEC decomposition [Eq.~\eqref{eq:pec-probability-decomposition}] with one-norm $|\boldsymbol\alpha|_1$.
    For any constant $c$, there exists a threshold depth $\mathcal{T}_\mathrm{thr}$ and a constant $B$ such that, for any SinQPE circuit with depth $\mathcal{T} > \mathcal{T}_\mathrm{thr}$, the variance of the EUMLE $\sigma^2$ is bounded as
    \begin{equation}\label{eq:eumle_qpe_var}
        \sigma^2 \leq B\frac{|\boldsymbol\alpha|_1^2}{M\mathcal{T}^2},
    \end{equation}
    where $M$ is the number of samples and the total oracular cost of the algorithm is $\mathcal{T}_\mathrm{tot} = M \mathcal{T}$.
\end{theorem}
\begin{proof}

The goal of this theorem is to bound the variance of the rEUMLE Eq.~\ref{eq:reumle-variance} for the sin-state QPE outcome probability distribution (from Theorem \ref{thm:mle_sinqpe})
\begin{align}
    P_{\mathcal{T}}(x|\phi) &= f_{\mathcal{T}}\left(\phi - \frac{2\pi x}{\mathcal{T}+1}\right), \\
    f_{\mathcal{T}}(\Delta\phi) &= \frac{\sin^2\frac{\pi}{\mathcal{T}+2}}{(\mathcal{T}+1)(\mathcal{T}+2)} \frac{1+\cos[(\mathcal{T}+2)\Delta\phi]}{\left[\cos(\Delta\phi) - \cos(\frac{\pi}{\mathcal{T}+2})\right]^2}.
\end{align}
In order to bound $\sigma^2$, we calculate asymptotic bounds for $\E\!\big[s_c^2(y) \big| y \sim P(y)\big]$ (bounded from below as $\Omega(\mathcal{T})^2$) and for
$\E\!\big[s_c^2(y) \big| y \sim \widetilde{Q}(y)\big]$ (bounded from above as $\mathcal{O}(\mathcal{T}^2)$).
This allows us to bound $\sigma^2$ for sufficiently large $\mathcal{T}$.

We first show that $\frac{1}{\*{T}^2} \E\!\big[s_c^2(y) \big| y \sim P(y|\phi)\big]$ converges to a constant $a$ in the $\mathcal{T}\rightarrow \infty$ limit.
Substituting the definition of the probability we can write
\begin{align}
    \frac{1}{\mathcal{T}^2}\E\!\big[s_{c}^2(y) \big| y \sim P(y | \phi)\big]
    &= 
    \sum_{x=0}^{\mathcal{T}}\frac{\mathcal{T}^{-2} f'_\mathcal{T}\left(\phi - \frac{2\pi x}{\mathcal{T}+1}\right)^2}{\left[f_\mathcal{T}\left(\phi - \frac{2\pi x}{\mathcal{T}+1}\right) + c \right]^2}  f_\mathcal{T}\left(\phi - \frac{2\pi x}{\mathcal{T}+1}\right).
\end{align}
For $\mathcal{T} \to \infty$, we can replace the normalized sum $\frac{1}{\mathcal{T}+1} \sum_{x=0}^\mathcal{T}$ with an integral
\begin{align}
    \frac{1}{\mathcal{T}^2}\E\!\big[s_{c}^2(y) \big| y \sim P(y | \phi)\big]
    \to& 
    \frac{\mathcal{T}+1}{2\pi}
    \int_0^{2\pi} ds \,
    \frac{\mathcal{T}^{-2} f'_\mathcal{T}\left(\phi - s \right)^2}{\left[f_\mathcal{T}\left(\phi - s\right) + c \right]^2}  f_\mathcal{T}\left(\phi - s\right)
    \\
    =&
    \frac{\mathcal{T}+1}{2\pi}\int_{-\pi}^{\pi} ds \,
    \frac{\mathcal{T}^{-2} f'_\mathcal{T}\left(s \right)^2}{\left[f_\mathcal{T}\left(s\right) + c \right]^2}  f_\mathcal{T}\left(s\right)
    \\
    =&
    \frac{\mathcal{T}+1}{2\pi(\mathcal{T}+2)}\int_{-(\mathcal{T}+2) \pi}^{(\mathcal{T}+2) \pi} ds' \,
    \frac{\mathcal{T}^{-2} f'_\mathcal{T}\left(\frac{s'}{\mathcal{T}+2}\right)^2}{\left[f_\mathcal{T}\left(\frac{s'}{\mathcal{T}+2}\right) + c \right]^2}  f_\mathcal{T}\left(\frac{s'}{\mathcal{T}+2}\right)
\end{align}
where, in the second step, we used the periodicity and evenness of $f_\mathcal{T}$ to remove the dependence on $\phi$ and shift the integration bounds, and in the third step we performed a change of variable $s' = (\mathcal{T}+2) s$.
The $\mathcal{T}\to\infty$ limit of this integral can be calculated following the same steps as in the proof of Thm.~\ref{thm:mle_sinqpe}.
Using the convergence property $\lim_{\mathcal{T}\to\infty} f(s/(\mathcal{T}+2)) = f(s)$ with $f(s)$ in Eq.~\eqref{eq:limit_prob_func}, we can write
\begin{align}
    \frac{1}{\mathcal{T}^2}\E\!\big[s_{c}^2(y) \big| y \sim P(y | \phi)\big]
    \xrightarrow{\mathcal{T}\to\infty}
    \int_{-\infty}^{\infty} \frac{ds}{2\pi}
    \frac{f'\left(s\right)^2}{\left[f\left(s\right) + c \right]^2}  f\left(s\right) =: a.
\end{align}
This integral converges to a finite constant $a$, as the integrand (independent on $\mathcal{T}$) can be bounded by $j(s)$ defined in Eq.~\eqref{eq:limit-score}, which can be bounded by $\mathcal{O}(|s|^{-4})$ [see Eq.~\eqref{eq:limit-score-bound}].

Secondly, we show $\lim_{\mathcal{T}\rightarrow \infty}\frac{1}{\mathcal{T}^2}\E\!\big[s_c^2(y) \big| y \sim \widetilde{Q}(y)\big]$ is bounded by bounding $\lim_{\mathcal{T}\rightarrow \infty}\frac{1}{\mathcal{T}}|s_c(x)|$ for any $x$.
By definition, the score is,
\begin{equation}
    s_c(x) = [\partial_\phi \log P_c(x|\phi)]_{\phi=\phi^*} = \frac{f_{\mathcal{T}}'(\frac{2\pi x}{\mathcal{T}+1}-\phi)}{f(\frac{2\pi x}{\mathcal{T}+1}-\phi) + c},
\end{equation}
and we can bound it as $|s_c(x)|\leq |f_{\mathcal{T}}'(\frac{2\pi x}{\mathcal{T}+1}-\phi)|/c$.
As per Eq.~\eqref{eq:limit_prob_deriv}, the limit function $f'(s) = \lim_{\mathcal{T}\rightarrow\infty}\frac{1}{\mathcal{T}}f'(s/(\mathcal{T}+2))$ is continuous and $f'(s)\xrightarrow{s\rightarrow\pm\infty}0$ so $f'$ is bounded by a constant $b := \sup_s |f'(s)|$.
Combining all the steps, we get the bound
\begin{equation}
    \lim_{\mathcal{T}\rightarrow\infty}\frac{1}{\mathcal{T}^2} \E\!\big[s_c^2(y) \big| y \sim \widetilde{Q}(y)\big]
    \leq \lim_{\mathcal{T}\rightarrow\infty}\frac{1}{\mathcal{T}^2}\max_x|s_c^2(x)|
    \leq \lim_{\mathcal{T}\rightarrow\infty}\frac{1}{\mathcal{T}^2}\frac{\max_s|f'_{\mathcal{T}}(s)|^2}{c^2}\leq \frac{b^2}{c^2}.
\end{equation}

Combining the above limits,
\begin{equation}
    \lim_{\mathcal{T}\to\infty}\mathcal{T}^2\frac{\E\!\big[s_c^2(y) \big| y \sim \widetilde{Q}(y)\big]}{\E\!\big[s_c^2(y) \big| y \sim P(y)\big]^2} 
    \leq
    \frac{b^2}{a^2 c^2}.
\end{equation}
For any constant $B > {b^2}/{a^2 c^2}$ we can then use this limit to bound the variance $\sigma^2$ for any sufficienly large depth $\mathcal{T}>\mathcal{T}_B$:
\begin{equation}
    \sigma^2 
    = 
    \frac{\E\!\big[s_c^2(y) \big| y \sim \widetilde{Q}(y)\big]}{\E\!\big[s_c^2(y) \big| y \sim P(y)\big]^2} 
    \frac{|\boldsymbol\alpha|_1^2}{\widetilde{M} \mathcal{T}^2}
    \leq
    B
    \frac{|\boldsymbol\alpha|_1^2}{\widetilde{M} \mathcal{T}^2}.
\end{equation}

Numerically, we find that the limits for $\mathcal{T}\to \infty$ act as bounds in the correct direction: for any $\mathcal{T}$, $\max f'_\mathcal{T} \leq (\mathcal{T}+2)\max f'$, and $\E\!\big[s_c^2(y) \big| y \sim P(y)\big] \geq a (\mathcal{T}+2)^2$.
We find the value of $b\approx 0.16$.
For an example choice of $c=0.4$, we find $a\approx 0.01$.
Hence, the bound on the variance Eq.~\eqref{eq:eumle_qpe_var} is satisfied for any depth $\mathcal{T}$ for $B=b^2/a^2c^2\approx12.6$.
\end{proof}

\subsection{Overhead for QPE with local Pauli noise}
\label{sec:EUMLE_overhead}

\topic{bound including fidelity}
To quantify the EUMLE overhead in quantities that we can relate to the rest of this work, we assume a local stochastic noise channel with uniform error rate that is locally invertible \cite{caiQuantum2023,temmeError2017,endoPractical2018}.
(This is the case for local Pauli channels with constant and qubit-independent error rate, e.g.~local depolarizing noise.)
Under this assumption, Eq.~\eqref{eq:pec-decomposition} can be written
\begin{equation}
    \mathcal{U} = \frac{1}{F}\mathcal{B}_0 + \sum_{j>0} \alpha_j\mathcal{B}_j
    \,,\quad
    |\boldsymbol{\alpha}|_1 = \frac{1}{F^2}
\end{equation}
where $F$ is the logical circuit fidelity and $\alpha_0 = 1/F$ is the weight of the noisy implementation $\mathcal{B}_0$ of the logical circuit.

It remains to optimize $\mathcal{T}$, in an equivalent manner to Thm.~\ref{thm:em_overhead_qpe}.
To achieve precision $\epsilon$ using a circuit of depth $\mathcal{T}$, we need to choose the number of samples $\widetilde{M} = \sigma^2/\epsilon^2$.
Using Theorem~\ref{thm:eumle_qpe}, we can write
\begin{equation}
\mathcal{T}_{\mathrm{tot}} = \mathcal{T}\widetilde{M}=\mathcal{T}\frac{\sigma^2}{\epsilon^2} \leq \frac{B|\boldsymbol\alpha|_1^2}{\epsilon^2\mathcal{T}}.
\end{equation}
Using $|\boldsymbol\alpha|_1 = F^{-2}$ and $F = e^{-\gamma \mathcal{T}}$
\begin{equation}
    \mathcal{T}_{\mathrm{tot}} \leq \frac{Be^{-4\gamma \mathcal{T}}}{\epsilon^2\mathcal{T}}
\end{equation}
This is optimized at $\mathcal{T}=\frac{1}{4\gamma}$, yielding
\begin{equation}
    \mathcal{T}_{\mathrm{tot}}^{(\mathrm{Pauli\, noise})} = 4B e\gamma\epsilon^{-2} \approx 137 \gamma\epsilon^{-2}.
\end{equation}
Comparing to Thm.~\ref{thm:eumle}, where $\mathcal{T}_{\text{tot}} = C \gamma \epsilon^{-2}$ with $C\approx 20$, we find the overhead from mitigating local Pauli noise to be a constant factor $7\times$ larger.
As the bound in Thm.~\ref{thm:eumle_qpe} holds for arbitrary adversarial noise, while the expression in Thm.~\ref{thm:eumle} saturated in the limit $\epsilon \to 0$, we expect in practice this gap to be significantly smaller.
Furthermore, one could consider replacing our constant regularization with a noise-specific additional term, which may lead to further optimization.

\subsection{Improving the overhead by filtering}\label{sec:filtering}

In order to improve the error mitigation overhead of the EUMLE, we propose to combine the explicit unbiasing method with a postselection method, which we refer to as filtering.
In the single eigenstate case, we expect the noiseless samples from the QPE circuit to cluster in some region, around the true phase value.
By postselecting the outcomes of all our circuit runs based on whether they lie in this clustering region, we can avoid the explosion of the final estimator variance due to the spread of the noisy samples.
Note that this is already somewhat achieved by the regularization, as the regularized likelihood is constant on phase estimates sufficiently far from the true value.
However, filtering gives us a means to further add prior information, such as energy estimates from classical methods, to boost our QPE performance.

We develop here a filtering procedure that assumes prior knowledge of a clustering region of size $\Delta \gg \epsilon$ where the phase should lie.
The proposed filtering procedure can be applied offline, as the circuits that need to be sampled are the same described above for the EUMLE, the only difference being in the data processing.

\topic{Filtering procedure}
Given a guess $\phi_\text{guess}$ for the phase we are attempting to estimate and a phase distance $\Delta$ within which we expect to find the true phase, we define the filtering region $[\phi_\text{guess} - \Delta, \phi_\text{guess} + \Delta]$ and the relative indicator function 
\begin{equation}
    R(x) = [1 \text{ if } |x - \phi_\text{guess}| < \Delta, \, \text{else } 0].
\end{equation}
We start from a set of measurements $\{(j_i, y_i)\}_{i=1,...,\widetilde{M}}$ sampled from the same distribution $\widetilde{Q}(j, y)$ defined in the section above, obtained by importance-sampling $j$ from the PEC decomposition Eq.~\eqref{eq:pec-probability-decomposition}.
We then reject all samples that lie outside the filtering region, obtaining a subset of the data 
\begin{equation} \label{eq:filtered-subset}
    \mathcal{S} = \Big\{(j_i, y_i) \in \{(j_i, y_i)\}_{i=1,...,\widetilde{M}} \Big|  y_i \in [\phi_\text{guess} - \Delta, \phi_\text{guess} + \Delta] \Big\}.
\end{equation}
This subset represents samples from the filtered distribution $\widetilde{Q}(j, y) \cdot R(y)$
(up to an irrelevant normalization factor).
The corresponding filtered model for the noiseless signal is then
$P_R(x|\phi) = \frac{P(x|\phi) R(x)}{\mathcal{N}_R(\phi)}$, where $\mathcal{N}_R$
is a $\phi$-dependent normalisation factor $\mathcal{N}_R(\phi) = \sum_x R(x) P(x|\phi)$.
We can then construct a filtered explicitly-unbiased maximum-likelihood estimator like in definition~\ref{def:eumle}, by maximizing
\begin{align} \label{eq:likelihood-from-filtered-data}
    \widetilde{\mathcal{L}}_R(\phi) &= \frac{|\boldsymbol\alpha|_1}{|\mathcal{S}|} \sum_{(j_i, y_i) \in \mathcal{S}} \sgn(\alpha_{j_i}) \log P_R(y_i|\phi).
\end{align}
Regularization can be reintroduced analogously as in section~\ref{sec:rEUMLE}.
Depending on the chosen size of the {filtering window~$\Delta$}, the regularization constant $c$ can be lowered compared to the unfiltered case, or it may not be needed at all.

Let us formalize the above:
\begin{definition}\label{def:filtered-eumle}
    [Filtered EUMLE]
    Given the same assumptions as Definition \ref{def:eumle}, as well as an indicator function $R(x)$ (taking values in $\{0, 1\}$), the filtered explicitly unbiased maximum likelihood estimator (filtered EUMLE) of the phase $\phi$ is defined as $\widetilde\phi \sim \argmax_\phi \widetilde{\mathcal{L}}_R(\phi)$ where $\widetilde{\mathcal{L}}_R(\phi)$ [Eq.~\eqref{eq:likelihood-from-filtered-data}] is constructed by sampling $\widetilde{M}$ points ${\{(j_i, y_i) | j_i\sim|\alpha_{j_i}|/|\boldsymbol{\alpha}|_1, y_i \sim Q_{j_i}(y_i)\}_{i=1,...,M}}$ and constructing a subset $\mathcal{S}$ [Eq.~\eqref{eq:filtered-subset}] by rejecting all points where $R(y_i)=0$.
\end{definition}
The filtered EUMLE is also a consistent estimator, as $\widetilde{\mathcal{L}}_R(\phi)$ matches the log-likelihood of filtered noiseless data coming from the probability $P_R(x)=P(x) \cdot R(x)/\mathcal{N}_R$ [which matches the model $P_R(x|\phi)$ for the true value of $\phi$]: 
\begin{equation}
    \lim_{\widetilde{M}\to\infty} \widetilde{\mathcal{L}}_R(\phi) = \mathcal{N}_R \,
    \E\!\big[\log P(x|\phi) R(x)/\mathcal{N}_R(\phi) \big| x \sim P(x) R(x)/\mathcal{N}_R \big]
\end{equation}
\begin{theorem} The filtered EUMLE estimate $\widetilde{\phi}$ of a phase $\phi\in[\phi_{\mathrm{guess}}-\Delta,\phi_{\mathrm{guess}}+\Delta]$ is asymptotically normal with mean $\phi$ and variance $\sigma^2\propto \widetilde{M}^{-1}$.
\end{theorem}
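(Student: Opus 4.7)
The plan is to mirror the proof structure of Theorem \ref{thm:eumle} for the unfiltered EUMLE, adapting each step to the filtered likelihood $\widetilde{\mathcal{L}}_R$. First I would establish consistency by computing the deterministic limit $\bar{\mathcal{L}}_R(\phi) = \lim_{\widetilde{M}\to\infty} \widetilde{\mathcal{L}}_R(\phi)$ via the law of large numbers applied to the importance-sampled and filter-accepted data. Since the accepted data $\mathcal{S}$ are i.i.d.~draws from $\widetilde{Q}(j,y) R(y)/N_R$ (with $N_R$ the acceptance probability under $\widetilde Q$), this limit is proportional to $\int R(y) P(y|\phi^*) \log P_c(y|\phi)\, dy$, with the quasiprobability signs collapsing the importance-sampled sum back to the true noiseless density by the same telescoping used in Eq.~\eqref{eq:likelihood_as_rescaled_distributions}. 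Invoking the assumption $\phi^* \in [\phi_{\mathrm{guess}}-\Delta, \phi_{\mathrm{guess}}+\Delta]$ with $\Delta \gg \epsilon$, the noiseless density $P(\cdot|\phi^*)$ places all but exponentially suppressed mass inside the filter window, so the first-order condition $\bar{\mathcal{L}}_R'(\phi^*) = 0$ reduces, up to negligible boundary terms, to the score-vanishing identity Eq.~\eqref{eq:regularized_score_is_zero} already established for the unfiltered case, while $\bar{\mathcal{L}}_R''(\phi^*)<0$ inherits strict concavity from the bulk of $P$.

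Next I would Taylor-expand the stationary-point condition $\widetilde{\mathcal{L}}_R'(\widetilde\phi)=0$ around $\phi^*$, reproducing Eq.~\eqref{eq:stationary_point_condition} in the form $\widetilde\phi - \phi^* \approx -\widetilde{\mathcal{L}}_R'(\phi^*)/\widetilde{\mathcal{L}}_R''(\phi^*)$. The denominator is a sample average of $\sgn(\alpha_{j_i}) \,[\partial_\phi^2 \log P_c(y_i|\phi)]_{\phi^*}$ over the $|\mathcal{S}|$ accepted data points and converges in probability, by the law of large numbers, to $\mathbb{E}[s_c(y)^2 \mid y \sim P(y|\phi^*)R(y)/\mathcal{N}_R]$, which is strictly positive since the filter retains the bulk of $P$ where the score is large. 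The numerator is a signed sample average of $s_c(y_i)$, which by the central limit theorem converges in distribution to a centred normal with variance scaling as $|\mathcal{S}|^{-1}$; centring follows from the score identity above. Combining these via Slutsky's theorem, and noting that $|\mathcal{S}|/\widetilde{M}\to N_R$ almost surely, yields asymptotic normality of $\widetilde\phi$ around $\phi^*$ with variance of the same form as Eq.~\eqref{eq:eumle_variance} but with expectations restricted to the filtered density and scaling as $\sigma^2 \propto \widetilde{M}^{-1}$.

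The main obstacle is the score-vanishing step at $\phi^*$. In the unfiltered proof this relied on integrating $\partial_\phi P_c$ over the entire circle and invoking periodicity; filtering breaks this argument because boundary terms at $\phi_{\mathrm{guess}}\pm\Delta$ survive. The assumption $\Delta\gg\epsilon$ is what makes this harmless: the integrand's tail contribution at the filter boundary is bounded by the far-tail weight of $P(\cdot|\phi^*)$, which is exponentially small in $(\Delta/\epsilon)$, and hence subdominant to the $\widetilde M^{-1/2}$ fluctuation scale being tracked. I would make this quantitative by showing that the residual bias is $o(\widetilde M^{-1/2})$, ensuring the Gaussian limit is truly centred at $\phi^*$ rather than at a shifted value. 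A secondary, purely technical issue is the randomness of $|\mathcal{S}|$, which is handled by conditioning and Slutsky, and I would also remark that since filtering removes precisely the samples that forced aggressive regularization in Thm.~\ref{thm:eumle}, one may take a much smaller $c$ (or $c=0$ when $R$ excludes all zeros of $P_c$), tightening the constants in Eq.~\eqref{eq:eumle_overhead} and underpinning the promised reduction of the overhead from $F^{-4}$ towards $F^{-1}$.
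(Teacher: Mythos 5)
Your proposal follows the same route as the paper, which at this point simply states that the result ``can be proven in an identical method to the result without filtering in Thm.~\ref{thm:eumle}'': consistency via the law of large numbers on the importance-sampled, filter-accepted data, then a Taylor expansion of the stationary-point condition as in Eq.~\eqref{eq:stationary_point_condition}, with Slutsky handling the random acceptance count $|\mathcal{S}|$. In that sense you are reproducing the intended argument, and you are in fact more careful than the paper: you correctly identify that the one step which does \emph{not} port over verbatim is the score-vanishing identity Eq.~\eqref{eq:regularized_score_is_zero}, whose unfiltered proof integrates $\partial_\phi P_c$ over the whole circle and uses periodicity, whereas the filtered expectation $\int R(y)\,P(y|\phi^*)\,s_c(y)\,dy$ picks up boundary terms $P(a-\phi^*)-P(b-\phi^*)$ at the window edges.

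The one place your patch overreaches is the claim that this residual can be shown to be $o(\widetilde{M}^{-1/2})$. The boundary term is a fixed quantity determined by $\sigma=\pi/\mathcal{T}$, $\Delta$, and the offset of $\phi^*$ from the window center; it does not shrink as $\widetilde{M}\to\infty$, so in the strict large-sample limit the filtered estimator as defined carries an $O(1)$-in-$\widetilde{M}$ bias (small in $\sigma/\Delta$, but not vanishing). Moreover, ``exponentially small in $\Delta/\epsilon$'' holds only for the von~Mises/Gaussian surrogate; the actual SinQPE distribution in Eq.~\eqref{eq:sinqpe_noisless_prob_distribution} has power-law tails, so the boundary density is only polynomially suppressed, $O(\sigma^2)$ at fixed angular distance. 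Exact centering at $\phi^*$ requires either a window symmetric about $\phi^*$, or replacing the model $\log P_c(y|\phi)$ by the properly truncated model $\log\!\big[P_c(y|\phi)R(y)/\mathcal{N}_R(\phi)\big]$ with the $\phi$-dependent normalization $\mathcal{N}_R(\phi)=\int R(y)P_c(y|\phi)\,dy$ retained (it is not irrelevant in maximization, contrary to the paper's parenthetical), whose derivative exactly cancels the boundary term. With that amendment your argument goes through; without it the theorem holds only up to a bias controlled by the tail weight of $P(\cdot|\phi^*)$ outside the window, which is the same gap left implicit in the paper's one-line proof.
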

This result can be proven with the same technique as the result without filtering in Thm.~\ref{thm:eumle}.

The scaling of the filtered EUMLE variance with fidelity depends on the details of the noise and of the filtering window.
Let us assume that $\sigma\ll \Delta$, and that $\phi$ lies in our window.
Then, in our worst-case scenario $\phi$ lies at the edge of our window, so we reject half of the data from $P(x)$, while an adversarial noise model gives samples within the window.
In the best-case scenario, all noise falls outside the window and $\phi$ lies in the center.
In this case, filtering works exactly as post-selection.
In practice, we suggest that the result may be closer to the latter than the former, as good classical bounds can be many times smaller than the spectral range of the Hamiltonian, and we have no reason to expect the outcomes due to noise to fall within the filtering window with a particularly high probability.

Filtering can also be useful when we do not have access to an exact eigenstate $\ket{\phi}$.
Assume that the starting state has nonzero overlap with multiple eigenphases separated by gap $\Delta$, and the precision of phase estimation for any single circuit $\frac{1}{\mathcal{T}}\ll\Delta$. 
Then, a filtering procedure similar to the one we described should enable to separate the signal coming from each eigenstate.
The developement of a filtering procedure that works both for eignenvalue projection and variance reduction, as well as the details on the choice of the center and width of the filter and the calculation of the filtering-EUMLE overhead are left to future work.

\section{Surface-code resource estimate and optimization}\label{sec:compilation}

In this section, we aim to benchmark our circuit-division QPE method and get an idea of the runtime overhead implied by a limitation on the number of available physical qubits.
To do this, we estimate the physical resources for running fixed-precision circuit-division sin-state QPE on qubitized block-encodings of Fermionic Hamiltonian.
In particular, we consider the 2-dimensional Fermi-Hubbard Hamiltonian \cite{babbushEncoding2018} and a selection of molecular Hamiltonians for in small to medium active spaces, using a block encoding based on the tensor hypercontraction factorization \cite{leeEven2021}.
We assume free and perfect state preparation -- including a cost for state preparation would increase the circuit division overhead as the state needs to be prepared for every run; imperfect state preparation would require filtering or a different analysis of data processing.
We also assume that the residual noise (after error correction) follows the global depolarizing model. 
This allows us to apply the results from App.~\ref{sec:sin_state_qpe_multi_circuit} and choose the optimal depth $\mathcal{T}$ and the number of samples $M$ as a function of the error rate $\gamma$ as detailed in Algorithm~\ref{alg:mle-sinqpe}.
In practice, in the presence of a more realistic noise model, the techniques proposed in Sec.~\ref{sec:unbiasing_qpe} would need to be applied, resulting in an additional overhead which depends strongly on the assumptions about the noise model.

\topic{translate to surface code resources}
In the following sections, we describe our procedure for estimating the logical qubit and Toffoli gate count of the QPE circuits for the target problems.
In order to translate these to a physical resource estimate, we must make some assumptions about the error correction and magic state distillation model used.
We consider a rotated surface code, with Clifford gates performed via lattice surgery \cite{fowlerLow2019} and CCZ2T magic state factories \cite{gidneyEfficient2019} used to produce CCZ states (as our algorithm only requires Toffoli gates as non-Clifford resources).
We consider both using a single factory, to minimize the number of physical qubits (as in \cite{babbushEncoding2018}) and using two factories to reduce the runtime.
We assume $50\%$ of routing overhead \cite{gidneyEfficient2019} on top of the physical qubits used for data encoding and magic state distillation.
We assume a physical gate error rate of $p=0.001$ and a surface code cycle time of $\SI{1}{\micro\second}$ \cite{fowlerLow2019}.
The code distances for data qubits encoding $d$ and for the two levels of distillation $d_0$ and $d_1$ are then optimized (by grid search) to minimize the physical computation volume while keeping the total error probability per circuit below $1 - e^{-\gamma \mathcal{T}}$.

\subsection{Cost estimates for the qubitized Fermi-Hubbard Hamiltonian}

We consider quantum phase estimation of the qubitized walk operator for the 2-dimensional Fermi-Hubbard model.
We consider a model with $L \times L$ sites, for a total of $2L^2$ spin-orbitals, and interaction strength $u/t = 4$ (with the hopping parameter $t$ setting the energy dimensions).
This choice of interaction strength is common in resource estimates literature, as it represents the most challenging regime to treat at scale with classical algorithms \cite{leblancSolutions2015}.
We compute the Toffoli cost of the controlled-walk operator $\mathcal{W}$ using Qualtran \cite{harriganExpressing2024} and following the procedure of \cite{babbushEncoding2018}.
We assume the rotation angles in the \textsc{PREPARE} step are encoded using $\beta = 10$ bits and performed with the phase-gradient technique (we neglect the cheap one-time preparation of the phase-gradient resource state).
We separately compute the Toffoli cost of sin-state preparation and quantum Fourier transform on $\lceil \log_2(\mathcal{T}+1) \rceil$ qubits, also using Qualtran.
We estimate the number of logical data qubits analytically as
\begin{equation}
    \overbrace{2 L^2}^{\text{spin-orbitals}}
    +
    \overbrace{\lceil \log_2(\mathcal{T}+1) \rceil}^{\text{QPE control qubits}}
    + \overbrace{\lceil 12 + 3\log_2(2 L^2) \rceil}^{\text{qubitization ancillas Eq.~64 of \cite{babbushEncoding2018}}}
    + \overbrace{\beta + 1}^{\text{controlled phase gradient}}
\end{equation}
The walk operator has the same spectrum as $\pm \arccos{H/\lambda}$, with $\lambda / t = 8 L^2$ the qubitization 1-norm.
We fix the goal of estimating the energy to a precision $\Delta E / t = 0.01$, which can be achieved by requiring a precision on the eigenphase of $\mathcal{W}$ of $\epsilon = \Delta E / \lambda = 0.01/8L^2$~\cite{babbushEncoding2018}.

\topic{full algorithm logical cost}
With the required precision on the phase estimation of the walk operator $\mathcal{W}$ fixed, we estimate the logical cost of implement the circuit-division QPE algorithm for different choices of the noise rate $\gamma$ (defined per application of $\mathcal{W}$) using the protocols developed previously in this work.
We first use Alg.~\ref{alg:mle-sinqpe} to choose the optimal $\mathcal{T}$ and $\mathcal{T}_\text{tot}$ based on $\gamma$ and $\epsilon$.
Then, multiplying these by the Toffoli cost of $\mathcal{W}$ we obtain the single-circuit Toffoli cost and total Toffoli cost, respectively.

\topic{Results and comments}
The resulting physical resources as a function of $\gamma$ are reported in Fig.~\ref{fig:fermi_hubbard_resources}, for four lattice side sizes of $L=2, 5, 10, 20$.
Increasing the error rate allowed shortens the largest circuit that is optimal to run (following Algorithm~\ref{alg:mle-sinqpe}, $\mathcal{T} \leq \gamma^{-1}$).
As we reduce $\mathcal{T}$, we must increase the number of repetitions quadratically to achieve the same fixed error rate.
Thus the total runtime scales inversely with the runtime of the largest circuit.
At the same time, increasing the allowed error rate allows us to choose smaller code distances for data ($d$) and distillation ($d_0, d_1$), reducing the physical qubit requirements.
In the right panel of Fig.~\ref{fig:fermi_hubbard_resources} we show the time required to run the algorithm under a limitation on the total physical footprint.
In the $L=5$ case, for example, we see that our algorithm requires about two hours with $3\times10^5$ physical qubits, or about $200$ hours with half the number of physical qubits.
This plot also indicates the number of physical qubits required for it to be more convenient to use multiple factories.

We observe that the curves representing the computational space-time tradeoff (in the right panel of Fig.~\ref{fig:fermi_hubbard_resources}) have an ``elbow-like'' shape.
We expect our method will lead to the most significant advantage around the elbow of the curve, as further increasing the number of available qubits affects the total runtime only marginally, while further reducing the computational time requires unreasonable space overhead.
This elbow point represents the regime in which is most advantageous to combine error correction and error mitigation, similar to the enhancement regime discussed in section~\ref{sec:combining_EM_and_qec} and represented in the bottom panel Fig.~\ref{fig:NISQ_crossover}.
We find that, across system sizes, the elbow point corresponds to the algorithm prescribing around $M\approx15$ to $40$ samples, i.e. to the middle regime in Alg.~\ref{alg:mle-sinqpe}, and to the regime where MSQPE slightly outperforms RPE (see Fig.~\ref{fig:MSQPE_vs_RPE}).

\begin{figure}
    \centering
    \includegraphics[]{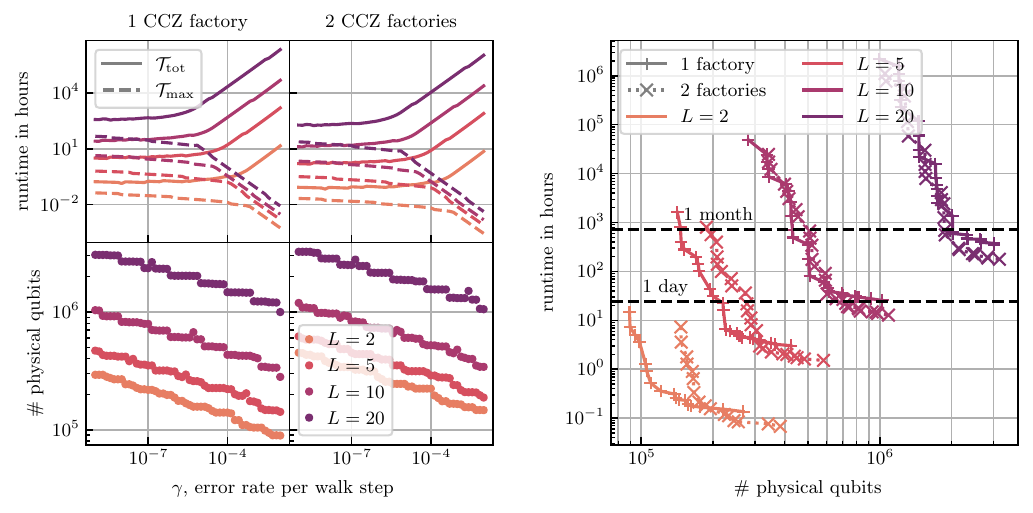}
    \caption{
        Physical costs for MSQPE applied to the qubitized Hubbard model Hamiltonian, with a target precision on the phase of $\epsilon = 10^{-2}/\lambda$ (where $\lambda$ is the qubitization 1-norm).
        (Left) runtime and qubit costs as a function of the error rate per step of the qubitized walk $\gamma$.
        (Right) trade-off between number of physical qubits and runtime.
        We assume the residual error (originating from imperfect error correction and imperfect distillation) can be modeled as a global depolarizing noise channel of strength $\gamma$.
        The logical circuits are defined according to Algorithm~\ref{alg:mle-sinqpe} (multi-circuit sin-state QPE with global depolarizing noise), with the Toffoli cost for the controlled walk operator estimated using Qualtran.
        We assume a model of fault-tolerant computation based on CCZ resource states, and show cost estimates based on using either a single CCZ factory or two of them; this allows to highlight the crossover point when it can be convenient to invest more physical qubits to implement multiple magic state factories.
        The size of surface code and CCZ factory parameters are chosen to minimize computation volume while keeping the error rate below $\gamma$, assuming physical error rate of $10^{-3}$ and surface code clock cycle of $\SI{1}{\micro\second}$.
    }
    \label{fig:fermi_hubbard_resources}
\end{figure}

\subsection{Cost estimates for Electronic Structure Calculations using THC}
\label{app:molecues_with:thc_details}
We compute the quantum resources for a number of molecular systems of small to medium size to illustrate how the higher complexity of molecular Hamiltonians and the more stringent precision requirements affect the required quantum resources and run-times.
All quantum chemistry calculations are performed with the help of PySCF~\cite{sunRecent2020}.
For \ce{H2O}, \ce{N2}, naphthalene, and anthracene the active spaces were constructed either from restricted Hartree-Fock (RHF) orbitals or with AVAS~\cite{Sayfutyarova_2017} as implemented in PySCF.
The CAS(27, 26) of the Co(salophen) complex was determined with the help of the Active Space Finder (ASF) \cite{asf_code} software developed by HQS Quantum Simulations and Covestro (see Table~\ref{tab:active_spaces} for details).

The resulting active spaces Hamiltonians were first symmetry-shifted as proposed in Ref.~\cite{Rocca_2024}.
This means finding shifts $a_1$ and $a_2$ and modifying the molecular Hamiltonian $H$ by adding powers of the total particle number operator $N_e$ according to 
\begin{equation}
    H' = H + a_1 N_e + a_2 N_e^2 .
\end{equation}
This leaves the eigenvectors and the spectrum of the Hamiltonian inside the particle number sectors intact up to a particle number dependent offset that can be easily subtracted classically but reduces the norm of the coefficients of the Hamiltonian.
Concretely, one picks $a_1$ as the median of the eigenvalues of the one-body part of the Hamiltonian and $a_2$ as the median of a certain diagonal of the two-body tensor \cite{Rocca_2024}. 

The shifted Hamiltonian $H'$ of each molecule was then factorized with the tensor hypercontraction (THC) procedure with L2 regularization with
the penalty parameter $\rho_\mathrm{THC}$ as implemented in OpenFermion~\cite{OpenFermion}, leading to a $\lambda$ factor denoted by $\lambda'$.
Symmetry shifting leaves the coupled cluster singles doubles (CCSD) energy unaffected, but does cause very small changes in the triples correction contained in CCSD(T).
It was thus more convenient to chose the THC rank based on the constraint that the CCSD energy error, $\Delta E_\mathrm{CCSD}$, stayed below 1 mEh.
Consistent with what was reported in \cite{fomichev2024initialstatepreparationquantum}, we also observed roughly a factor of two reduction of the lambda value due to symmetry shifting.

\begin{table}[ht]
    \centering
    \caption{Orbital selection method, selected active spaces, and THC factorization parameters and results for the molecular systems used in the resource estimates. The AVAS \cite{Sayfutyarova_2017} and ASF \cite{asf_code} methods were used for some systems.}\label{tab:active_spaces}
\begin{tabular}{lllrrrr}
\toprule
Molecule & Active Space & Selection Method & \hspace{-1em} THC Rank & $\rho_\mathrm{THC}$ & $\lambda'$ & $\Delta E_\mathrm{CCSD}$ [mEh] \\
\colrule
\ce{N2} & CAS(6, 6)         & around HOMO-LUMO          & 30 & $10^{-4}$ & 4.66 & 0.62 \\
\ce{H2O} & CAS(8, 6)        & AVAS (H $1s$, O $2s2p$) & 30 & $10^{-4}$ & 6.97 & 0.63 \\
naphthalene & CAS(10, 10)   & AVAS (C $2p_z$)               & 45 & $10^{-5}$  & 6.45 & 0.67 \\
anthracene & CAS(14, 14)    & AVAS (C $2p_z$)               & 80 & $10^{-5}$ & 12.12 & 0.19 \\
Co(salophen) & CAS(27, 26)  & ASF default settings          & 100 & $10^{-5}$ & 34.58 & 0.44 \\
\botrule
\end{tabular}
\end{table}

The logical resources for obtaining QPE estimates with $\epsilon = 10^{-3}/\lambda'$ precision (see Fig.~\ref{fig:thc_resources}) were then computed with Qualtran \cite{harriganExpressing2024}, from the THC tensors.
The logical circuit for the controlled walk operator is constructed using the algorithmic primitives included in Qualtran, and from that both the number of logical qubits and Toffoli gates are computed (assuming that rotations in \textsc{prepare} and \textsc{select} are implemented to 10-bit and 16-bit precision, respectively \cite{leeEven2021}).
The logical cost for the controlled walk operator is converted into physical costs for Algorithm~\ref{alg:mle-sinqpe} as a function of the residual noise rate $\gamma$, under the same assumptions made for the Hubbard model.

\begin{figure}
    \centering
    \includegraphics[]{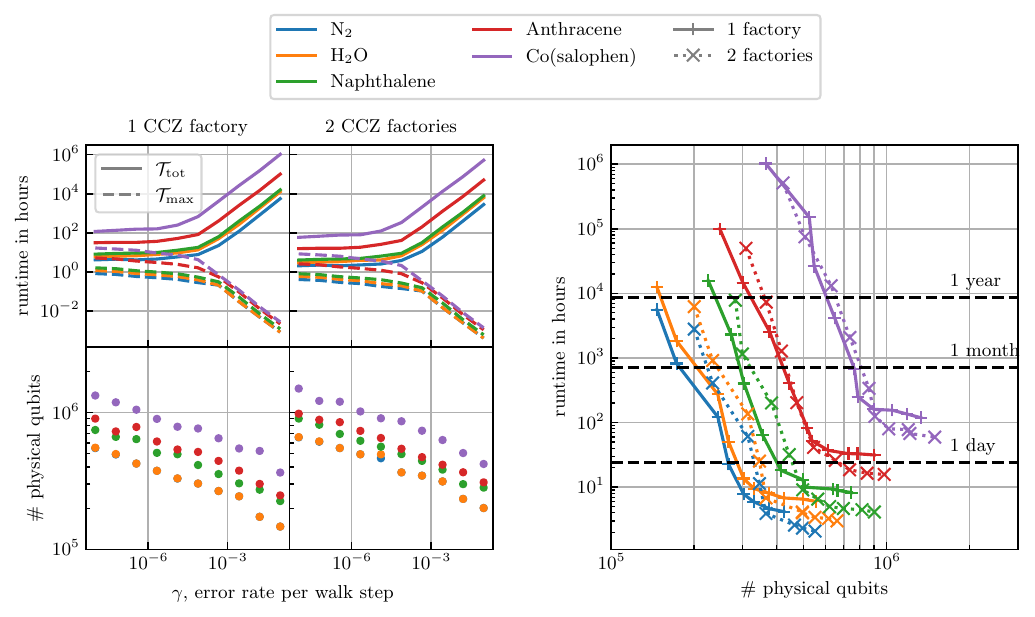}
    \caption{
        Physical costs for MSQPE applied to the electronic structure Hamiltonian factorized with THC, with a target precision on the phase of $\epsilon_\mathrm{t} = 10^{-3}/\lambda'$ (where $\lambda'$ is the symmetry-shifted THC lambda value).
        (Left) runtime and qubit costs as a function of the error rate $\gamma$ per step of the qubitized walk.
        (Right) relation between number of physical qubits and runtime.
        We assume the residual error (originating from imperfect error correction and imperfect distillation) can be modeled as a global depolarizing noise channel of strength $\gamma$.
        The logical circuits are defined according to Algorithm~\ref{alg:mle-sinqpe} (multi-circuit sin-state QPE with global depolarizing noise), with the Toffoli cost and logical qubit requirements for the walk operator estimated using Qualtran (assuming rotations in \textsc{prepare} and \textsc{select} are specified to 10 and 16 bits of precision respectively).
        We assume a model of fault-tolerant computation based on CCZ resource states, and show cost estimates based on using either a single CCZ factory or two of them; this allows to highlight the crossover point when it can be convenient to invest more physical qubits to implement multiple magic state factories.
        The size of surface code and CCZ factory parameters are chosen to minimize computation volume while keeping the error rate below $\gamma$, assuming physical error rate of $10^{-3}$ and surface code clock cycle of $\SI{1}{\micro\second}$.
    }
    \label{fig:thc_resources}
\end{figure}

\subsection{Comparison to the fully fault-tolerant algorithm}
\label{sec:ftqc_comparison}
For the sake of comparison in Fig.~\ref{fig:ftqc-res-comparison} we also estimated the resources required for running the standard fault-tolerant single-circuit sin-state QPE algorithm.
The standard fault tolerant algorithm is defined by choosing an acceptable failure probability $\delta$ as a free parameter.
We fix the logical circuit to depth $\mathcal{T} = \mathcal{T}_\text{tot} = \lambda \pi/\Delta E$ to ensure a phase estimate to the desired precision in the absence of errors.
The surface code and distillation distances are then chosen to optimize computational volume while keeping the total error probability below $\delta$, under the same assumptions as above.
The resulting requirements are reported in Fig.~\ref{fig:ftqc-res-comparison} for $\delta = 1\%$ ($\blacktriangle$) and $\delta = 0.1\%$ ($\blacktriangledown$), together with the requirements for the circuit-division algorithm.

The costs of this standard approach should only be considered a reference and not a fair comparison, as the circuit-division algorithms targets the estimate of the phase with variance bounded by $\epsilon^2$, without allowing for any failure probability.
A failure probability of $\delta$, assuming the error produces a random phase estimate, would result in a final estimate with a much larger 
variance $(1-\delta) \epsilon^2 + \delta \lambda^2 \pi^2$.
While prior knowledge about the target phase can be used to filter these results with a repeat-until-success strategy, effectively reducing the expected error, the same technique can be applied to circuit-division QPE as suggested in App.~\ref{sec:filtering}.

Note that, in most cases, the cost of phase estimation circuits on a qubitized walk operator can be improved by an additional factor of $\approx 2$ by using the unary iteration circuit to control the reflect operator as described in \cite{leeEven2021}.
This improvement is not considered in this work, as it would require a subtle modification of the parameter selection procedure for MSQPE.
Nevertheless, we expect this would yield an improvement by about a factor 2 in total runtime.

\bibliography{bibliography-zotero}

\end{document}